\newcommand{\ShoLong}[2]{#1} 
\providecommand{\keywords}[1]{{\small \textbf{\textit{Keywords:}} #1}}
\newcommand{\oneblip}{$1$-gap-planar\xspace}
\newcommand{\twoblip}{$2$-gap-planar\xspace}
\newcommand{\kblip}{$k$-gap-planar\xspace}
\newcommand{\kblipa}{$k$-gap assignment\xspace}
\newtheorem{theorem}{Theorem}
\newtheorem{lemma}[theorem]{Lemma}
\newtheorem{property}[theorem]{Property}
\newtheorem{corollary}[theorem]{Corollary}
\newif\ifComments
  \newcommand{\says}[3][red]{\textcolor{#1}{\textsc{#2 says:} \textsf{#3}}}
  \newcommand{\says}[2][red]{\relax}
\title{Gap-planar Graphs\thanks{A preliminary version of this paper appeared in the proceedings of the {\em 25th International Symposium on Graph Drawing}~\cite{bbceeghkmrt-gpg-17}.}}
\author{
Sang Won Bae\thanks{Kyonggi University, Suwon, South Korea. \texttt{swbae@kgu.ac.kr}. Supported by the Basic Science Research Program through the National Research Foundation of Korea (NRF) funded by the Ministry of Education (2015R1D1A1A01057220).}
\and
Jean-Francois Baffier\thanks{Tokyo Institute of Technology, Tokyo, Japan. \texttt{jf\_baffier@nii.ac.jp}. J.-F.B. was partially supported by JST ERATO Grant Number JPMJER1305}
\and
Jinhee Chun\thanks{Tohoku University, Sendai, Japan.
\texttt{\{jinhee,mati\}@dais.is.tohoku.ac.jp}. M.K. was partially supported by MEXT KAKENHI No.~15H02665, 17K12635 and JST ERATO Grant Number JPMJER1305.}
\and
Peter Eades\thanks{University of Sydney, Sydney, Australia.
{\tt peter.eades@sydney.edu.au, shhong@it.usyd.edu.au}. Partially supported by ARC DP160104148.}
\and
Kord Eickmeyer\thanks{TU Darmstadt, Darmstadt, Germany. \texttt{eickmeyer@mathematik.tu-darmstadt.de}.}
\and
Luca Grilli\thanks{University of Perugia, Perugia, Italy. \texttt{\{luca.grilli,fabrizio.montecchiani\}@unipg.it}
}
\and
Seok-Hee Hong\footnotemark[5]
\and
Matias Korman\footnotemark[4]
\and
Fabrizio Montecchiani\footnotemark[7]
\and
Ignaz Rutter
\thanks{University of Passau, Passau, Germany. {\tt rutter@fim.uni-passau.de}}
\and
Csaba D. T{\'o}th\thanks{California State University Northridge, Los Angeles, CA,
and Tufts University, Medford, MA, USA. \texttt{csaba.toth@csun.edu}. Partially supported by the NSF awards CCF-1422311 and CCF-1423615.}
}
\begin{document}

\pagestyle{plain}

\date{}
\maketitle

\begin{abstract}
 We introduce the family of \emph{\kblip graphs} for $k \geq 0$, i.e., graphs that have a drawing in which each crossing  is assigned to one of the two involved edges and each edge is assigned at most $k$ of its crossings. This definition is motivated by applications in edge casing, as a \kblip graph can be drawn crossing-free after introducing at most $k$ local gaps per edge. We present results on the maximum density of \kblip graphs, their relationship to other classes of beyond-planar graphs, characterization of  \kblip complete graphs, and the computational complexity of recognizing \kblip graphs.
\end{abstract}

\keywords{Beyond planarity
$k$-gap-planar graphs,
Density results,
Complete graphs,
Recognition problem,
$k$-planar graphs,
$k$-quasiplanar graphs}

\section{Introduction}
\begin{figure}[t]
 \centering
 \subfigure{\includegraphics[width=0.22\columnwidth,page=1]{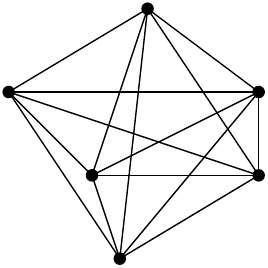}\label{fig:drawing}}
 \hfil
 \subfigure{\includegraphics[width=0.22\columnwidth,page=2]{figs/example}\label{fig:cased}}
 \caption{\label{fig:example} A drawing of a graph $G$ (left)
 and its cased version where each edge is interrupted at most twice, i.e., a \twoblip drawing of $G$ (right).}
\end{figure}

Minimizing the overall number of edge crossings in a drawing has been the main objective of a large body of literature concerning the design of algorithms to automatically draw a graph. In fact, several graph drawing algorithms assume the input graph to be planar or planarized (that is, crossings are replaced with dummy vertices which are removed in a post-processing step). More recently, cognitive experiments suggested that the absence of specific kinds of edge crossing configurations has a positive impact on the human understanding of a graph drawing~\cite{DBLP:journals/vlc/HuangEH14}. These practical findings motivated a line of research, commonly called \emph{beyond planarity}, whose focus is on non-planar graphs that can be drawn by locally avoiding specific edge crossing configurations or by guaranteeing specific properties for the edge crossings (see, e.g.,~\cite{JGAA-459,hong_et_al,shonan,DBLP:conf/ictcs/Liotta14}).

Among the most investigated families of beyond-planar graphs are: \emph{$k$-planar graphs} (see, e.g.,~\cite{DBLP:conf/gd/Bekos0R16,DBLP:journals/csr/KobourovLM17,DBLP:journals/combinatorica/PachT97}),
which can be drawn with at most $k$ crossings per edge; \emph{$k$-quasiplanar graphs} (see, e.g.,~\cite{DBLP:journals/jct/AckermanT07,DBLP:journals/combinatorica/AgarwalAPPS97,DBLP:journals/siamdm/FoxPS13}), which can be drawn with no $k$ pairwise crossing edges; \emph{fan-planar graphs} (see, e.g.,~\cite{Bekos2016,DBLP:journals/tcs/BinucciGDMPST15,DBLP:journals/corr/KaufmannU14}),
which can be drawn such that each edge is crossed by a (possibly empty) set of edges that have a common endpoint 
on one side;
\emph{RAC graphs} (refer, e.g., to~\cite{dl-cargd-12}), which admit a straight-line drawing with right-angle crossings.

In this paper we introduce a family that generalizes $k$-planar graphs by introducing a nonsymmetric constraint on the intersection pattern of the edges. Intuitively speaking, we charge each crossing to only one of the two edges involved in the crossing and do not allow an edge to be charged many times. This constraint is motivated by \emph{edge casing}, a method commonly used to alleviate the visual clutter generated by crossing lines in a diagram~\cite{Appel:1979:HLE:965103.807437,DBLP:journals/comgeo/EppsteinKMS09}. In a \emph{cased drawing} of a graph, each crossing is resolved by locally interrupting one of the two crossing edges; see Figure~\ref{fig:example} for an illustration. This edge casing makes only \emph{one} of the edges involved in the crossing hard to follow whereas the other one is unaffected. Regardless of the number of crossings, the drawing will remain clear as long as no edge is cased many times; thus, an edge could participate in arbitrarily many crossings as long as the other edges are cased. Eppstein et al.~\cite{DBLP:journals/comgeo/EppsteinKMS09} studied several optimization problems related to edge casing, assuming the input is a graph together with a fixed drawing. In particular, the problem of minimizing the maximum number of gaps per edge in a drawing can be solved in polynomial time (see also Section~\ref{sec:preliminaries}). We also note that a similar drawing paradigm is used by \emph{partial edge drawings (PEDs)}, in which the central part of each edge is erased, while the two remaining stubs are required to be crossing-free (see, e.g.,~\cite{JGAA-438,DBLP:conf/fun/BruckdorferK12}).

We formalize this idea with the family of \emph{\kblip graphs}, a family of graphs that can be drawn in the plane such that each crossing is assigned to one of the two involved edges and each edge is assigned at most $k$ crossings (for some constant $k$).
We present a rich set of results for \kblip graphs related to classic research questions, such as bounds on the maximum density, drawability of complete graphs, complexity of the recognition problem, and relationships with other families of beyond-planar graphs. Our results can be summarized as follows:
\begin{itemize}\itemsep -2pt
 \item Every \kblip graph with $n$ vertices has $O(\sqrt{k} \cdot n)$ edges (Section~\ref{sec:density}). If $k=1$, we prove an upper bound of $5n-10$ for the number of edges in a \oneblip multigraph with $n$ vertices (without homotopic parallel edges), and construct \oneblip (simple) graphs that attain this bound for all $n\geq 20$. Note that the same density bound is known to be tight for $2$-planar graphs~\cite{DBLP:journals/combinatorica/PachT97}.

 \item We study relationships between the class of \kblip graphs and other classes of beyond-planar graphs. For all $k \ge 1$, the class of $2k$-planar graphs is properly contained in the class of \kblip graphs, which in turn is properly contained in the $(2k+2)$-quasiplanar graphs (Section~\ref{sec:relationship}).
     We note that $k$-planar graphs are known to be  $(k+1)$-quasiplanar~\cite{DBLP:journals/corr/AngeliniBBLBDLM17,DBLP:journals/corr/HoffmannT17}. Furthermore, we investigate the relationship between \kblip graphs and $d$-degenerate crossing graphs, a class of graphs recently introduced by Eppstein and Gupta~\cite{eg-cpnrn-17}.

 \item The complete graph $K_n$ is \oneblip if and only if $n \le 8$ (Section~\ref{sec:complete}).

 \item Deciding whether a graph is \oneblip is \textsc{NP}-complete, even when the drawing of a given graph is
     restricted to a fixed rotation system that is part of the input (Section~\ref{sec:recognition}). Note that analogous recognition problems for other families of beyond-planar graphs are also \textsc{NP}-hard (see, e.g.,~\cite{DBLP:journals/jgaa/AuerBGR15,Bekos2016,DBLP:journals/tcs/BinucciGDMPST15,DBLP:journals/tcs/BrandenburgDEKL16,DBLP:journals/algorithmica/GrigorievB07,DBLP:journals/jgt/KorzhikM13}), while polynomial algorithms are known in some restricted settings (see, e.g., \cite{DBLP:journals/algorithmica/AuerBBGHNR16,Bekos2016,DBLP:journals/tcs/BrandenburgDEKL16,DBLP:journals/tcs/DehkordiEHN16,DBLP:journals/tcs/EadesHKLSS13,DBLP:conf/wg/HongN15,DBLP:journals/algorithmica/HongEKLSS15}).

\end{itemize}

Preliminaries and basic results are in Section~\ref{sec:preliminaries}. Conclusions and open problems are discussed in Section~\ref{sec:conclusions}.

\section{Preliminaries and basic results}
\label{sec:preliminaries}
A \emph{drawing} $\Gamma$ of a graph $G=(V,E)$ is a mapping of the
vertices of $V$ to distinct points, and of the edges of
$E$ to a continuous arcs connecting their corresponding endpoints
such that no edge (arc) passes through any vertex,
if two edges have a common interior point in $\Gamma$, then
they cross transversely at that point,
and no three edges cross at the same point.
For a subset $E' \subseteq E$, the restriction of $\Gamma$
to the curves representing the edges of $E'$ is denoted by $\Gamma[E']$.
A drawing $\Gamma$ is \emph{planar} if no two edges cross.
A graph is \emph{planar} if it admits a planar drawing.
A \emph{planar embedding} of a planar graph $G$ is an equivalence
class of topologically equivalent (i.e., isotopic) planar drawings of $G$.
A \emph{plane graph} is a planar graph with a planar embedding.
A planar drawing subdivides the plane into topologically connected regions,
called \emph{faces}. The unbounded region is the \emph{outer face}.

The \emph{crossing number} ${\rm cr}(G)$ of a graph~$G$ is the smallest
number of edge crossings over all drawings of $G$.
The \emph{crossing graph} $C(\Gamma)$ of a drawing $\Gamma$ is the graph having a vertex $v_e$
for each edge $e$ of $G$, and an edge $(v_e,v_f)$ if and only if edges
$e$ and $f$ cross in $\Gamma$. The \emph{planarization} $\Gamma^*$ of
$\Gamma$ is the plane graph formed from $\Gamma$ by inserting a
\emph{dummy vertex} at each crossing, and subdividing
both edges with the dummy vertex.  To avoid ambiguities,
we call \emph{real vertices} the vertices of $\Gamma^*$
that are in $V$ (i.e., that are not dummy).

A class of graphs is informally called ``beyond-planar'' if the graphs in this family admit drawings in which the intersection patterns of the edges are characterized by some forbidden configuration (see, e.g.,~\cite{hong_et_al,shonan,DBLP:conf/ictcs/Liotta14}).
Research on such graph classes is attracting increasing attention in graph theory, graph algorithms, graph drawing, and computational geometry, as these graphs represent a natural generalization of planar graphs, and their study can provide significant insights for the design of effective methods to visualize real-world networks. Indeed, the motivation for this line of research stems  from both the interest raised by the combinatorial and geometric properties of these graphs, and  experiments showing how the absence of particular edge crossing patterns has a positive impact on the readability of a graph drawing~\cite{DBLP:journals/vlc/HuangEH14}.

Among the investigated families of beyond-planar graphs are: \emph{$k$-planar graphs} (see, e.g.,~\cite{DBLP:conf/gd/Bekos0R16,DBLP:journals/csr/KobourovLM17,DBLP:journals/combinatorica/PachT97}), which can be drawn in the plane with at most $k$ crossings per edge; \emph{$k$-quasiplanar graphs} (see, e.g.,~\cite{DBLP:journals/jct/AckermanT07,DBLP:journals/combinatorica/AgarwalAPPS97,DBLP:journals/siamdm/FoxPS13}), which can drawn without $k$ pairwise crossing edges; \emph{fan-planar graphs} (see, e.g.,~\cite{Bekos2016,DBLP:journals/tcs/BinucciGDMPST15,DBLP:journals/corr/KaufmannU14}), which can be drawn such that no edge crosses two independent edges; \emph{fan-crossing-free graphs}~\cite{Cheong2015}, which can be drawn such that no edge crosses any two edges that are adjacent to each other; \emph{planarly-connected graphs}~\cite{DBLP:conf/gd/AckermanKV16}, which can be drawn such that each pair of crossing edges is independent and there is a crossing-free edge that connects their endpoints;  \emph{RAC graphs} (refer, e.g., to~\cite{dl-cargd-12}), which admit a straight-line (or polyline with few bends) drawing where any two crossing edges are perpendicular to each other.

Eppstein et al.~\cite{DBLP:journals/comgeo/EppsteinKMS09} studied several optimization problems related to edge casing, assuming the input is a graph together with a fixed drawing. In particular, the problem of minimizing the maximum number of gaps per edge in a drawing can be solved in polynomial time (see also Section~\ref{sec:preliminaries}). We also note that a similar drawing paradigm is used by \emph{partial edge drawings (PEDs)}, in which the central part of each edge is erased, while the two remaining stubs are required to be crossing-free (see, e.g.,~\cite{JGAA-438,DBLP:conf/fun/BruckdorferK12}).

Let $\Gamma$ be a drawing of a graph $G$. Recall that exactly
two edges of $G$ cross in one point $p$ of $\Gamma$, and we say that
these two edges are \emph{responsible} for $p$. A \emph{\kblipa} of
$\Gamma$ maps each crossing point of $\Gamma$ to one of its two
responsible edges so that each edge is assigned with at most $k$ of its
crossings; see, e.g., 
Fig.~\ref{fig:example}(right).
A \emph{gap} of an edge is a crossing assigned to it. An edge with at least one
gap is \emph{gapped}, else it is \emph{gap-free}.
A drawing is \emph{\kblip} if it admits a \kblipa.
A graph is \emph{\kblip} if it has a \kblip drawing. Note that
a graph is planar if and only if it is $0$-gap-planar, and that
$k$-gap-planarity is a monotone property: every subgraph of a \kblip
graph is \kblip. The summation of the number of gaps over all edges
in a set $E'\subset E$ yields the following.

\begin{property}\label{pr:edges-crossings}
  Let $\Gamma$ be a \kblip drawing of a graph $G=(V,E)$.
  For every $E' \subseteq E$, the subdrawing $\Gamma[E']$ contains at most $k \cdot |E'|$
  crossings.
\end{property}

In fact, the converse of Property~\ref{pr:edges-crossings} also holds,
and we obtain the following stronger result.

\begin{theorem}\label{thm_crossing}
  Let $\Gamma$ be a drawing of a graph $G=(V,E)$.  The drawing
  $\Gamma$ is \kblip if and only if for each edge set
  $E' \subseteq E$ the subdrawing $\Gamma[E']$ contains at most
  $k \cdot |E'|$ crossings.
\end{theorem}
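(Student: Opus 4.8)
The forward direction is exactly Property~\ref{pr:edges-crossings} (sum the gap counts over $E'$), so the content lies in the converse. The plan is to recast the existence of a \kblipa as a bounded-in-degree orientation problem on the crossing multigraph and then certify it via Hall's theorem. First I would note that the hypothesis applied with $E'=E$ bounds the total number of crossings of $\Gamma$ by $k\,|E|$, so $\Gamma$ has only finitely many crossings; call this set $X$. Each crossing $p\in X$ has exactly two responsible edges $e_p,f_p\in E$, and a \kblipa is precisely a choice, for every $p$, of one of $\{e_p,f_p\}$ so that no edge is chosen more than $k$ times.

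Equivalently, form the crossing multigraph $C(\Gamma)$, with one vertex $v_e$ per edge $e$ of $G$ and one parallel edge per crossing; then a \kblipa is an orientation of $C(\Gamma)$ in which every vertex has in-degree at most $k$ (orient the edge arising from $p$ toward the chosen responsible edge, so that the in-degree of $v_e$ equals the number of gaps assigned to $e$). To produce such an orientation I would construct an auxiliary bipartite graph $B$: one side is $X$, the other side consists of $k$ copies $v_e^{1},\dots,v_e^{k}$ of each vertex $v_e$ of $C(\Gamma)$, and each crossing $p$ is joined to all $k$ copies of $v_{e_p}$ and all $k$ copies of $v_{f_p}$. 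A matching of $B$ saturating $X$ assigns each crossing to one copy of one of its two responsible edges, and since there are only $k$ copies per edge this is exactly a \kblipa. Thus it suffices to verify Hall's condition for $X$.

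The key step is checking that condition. Given any $P\subseteq X$, let $E_P\subseteq E$ be the set of edges responsible for at least one crossing in $P$; then the neighborhood of $P$ in $B$ is exactly the set of $k$ copies of the vertices $\{v_e : e\in E_P\}$, so $|N(P)| = k\,|E_P|$. On the other hand, every crossing in $P$ has both of its responsible edges in $E_P$, so $P$ is contained in the set of crossings of the subdrawing $\Gamma[E_P]$; applying the hypothesis to $E' = E_P$ yields $|P| \le k\,|E_P| = |N(P)|$. Hence Hall's condition holds, the saturating matching exists, and the desired \kblipa follows.

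I expect no serious obstacle here: the only points to get right are that the in-degree of $v_e$ equals the number of gaps assigned to $e$, and that $|N(P)|$ counts the $k$ copies correctly as $k\,|E_P|$. The whole difficulty of the converse is really just recognizing it as a bounded-orientation statement; indeed one could instead invoke Hakimi's theorem on orientations with prescribed in-degree bounds, whose density hypothesis is literally the condition $|E(H')|\le k\,|V(H')|$ above, but the self-contained Hall argument is short and keeps the paper elementary.
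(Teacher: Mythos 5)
Your proposal is correct and follows essentially the same route as the paper's proof: the same auxiliary bipartite graph with $k$ copies of each edge, the same identification of \kblip assignments with matchings saturating the crossing side, and the same verification of Hall's condition via the hypothesis applied to the set of responsible edges. Your closing remark about Hakimi's orientation theorem matches the alternative proof the paper credits to David Wood in the note following the theorem.
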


\begin{proof}
  Property~\ref{pr:edges-crossings} is the only-if direction.  It remains to prove the
  forward direction.  Let $A$ denote the set of crossings in $\Gamma$.
  Further let $B = \{ e_1, \dots,e_k : e \in E\}$ denote a set that
  consists of $k$ copies of each edge in $G$.  Let $H$ be the
  bipartite graph whose vertex set is $A \dot\cup B$ and where a
  crossing $p \in A$ is connected to all copies of edges that are
  responsible for the crossing $p$.

  Clearly, \kblip assignments correspond bijectively to matchings $M$
  in $H$ such that each crossing $a \in A$ is incident to an edge in
  $M$.
  By Hall's theorem, the bipartite graph $H$ has a matching from $A$ into $B$
  if and only if for each set $X \subseteq A$, we have $|N(X)| \ge |X|$.

  Let $X \subseteq A$ be some subset of crossings. Let $E(X)$ denote
  the set of edges that are responsible for crossings in $X$.  By
  considering the subdrawing $\Gamma[E(X)]$, we find $|X| \le k|E(X)|$.
  Moreover, by construction of $H$ the neighborhood $N(X)$ of $X$
  contains exactly $k$ vertices for each edge in $E(X)$, i.e.,
  $|N(X)| = k|E(X)|$.  Thus it is $|X| \le |N(X)|$, which is Hall's
  condition.  Thus a \kblip assignment exists, showing that $\Gamma$
  is \kblip.
\end{proof}

\textbf{Note:} 
David Wood (personal communication) has suggested an alternative proof of the above statement:
Hakimi~\cite{hakimi} proved that a graph has an orientation with maximum outdegree at most $k$ if and only if every
subgraph has average degree at most $2k$. Theorem~\ref{thm_crossing} immediately follows by applying this result to 
the intersection graph of the edges in a drawing of a graph. 

A \kblipa of a drawing $\Gamma$ corresponds to orienting the edges of
the crossing graph $C(\Gamma)$ such that each vertex has indegree at
most $k$ (intuitively, orienting a crossing towards an edge
means we assign the crossing to that edge). Since finding an orientation
of a graph with the smallest maximum indegree corresponds to finding its
pseudoarboricity~\cite{Frank1976,Picard1982},
Property~\ref{pr:pseudoarboricity} below follows. A
\emph{pseudoforest} is a graph in which every connected component has
at most one cycle, and the \emph{pseudoarboricity} of a graph is
the smallest number of pseudoforests needed to cover all its edges.

\begin{property}\label{pr:pseudoarboricity}
  A graph is \kblip if and only if it admits a drawing whose crossing
  graph has pseudoarboricity at most $k$.
\end{property}
Given a drawing $\Gamma$ of a graph $G=(V,E)$, we can find the minimum $k\geq 0$
such that $\Gamma$ is \kblip in $O(|E|^4)$ time, due to the fact that one can find
an orientation of $C(\Gamma)$ with the smallest maximum indegree
in time quadratic in the number of edges of $C(\Gamma)$~\cite{Venkateswaran2004374}.

\textbf{Note:} 
In an earlier versions of this paper~\cite{bbceeghkmrt-gpg-17,blipJournal} we gave an upper bound on the treewidth of \kblip graphs. Our bounds were based on a result by Dujmovi{\'c}, Eppstein, and Wood~\cite{SIAM-OTDG} that bounded the treewidth of a graph as a function of the number of vertices and the crossing number. Unfortunately, their result turned out to be incorrect~\cite{xxx}. Thus, it still remains open to show that \kblip graphs have $O(f(k)\sqrt{n})$ treewidth for some function $f$.





\section{Density of \kblip graphs}
\label{sec:density}

We begin with an upper bound on the number of edges of \kblip graphs.

\begin{theorem}\label{thm:kdensity}
 A \kblip graph on $n \ge 3$ vertices has $O(\sqrt{k}\cdot n)$ edges.
\end{theorem}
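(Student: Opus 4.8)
The plan is to combine the elementary bound on the total number of crossings supplied by Property~\ref{pr:edges-crossings} with the Crossing Lemma, which lower-bounds the crossing number in terms of $n$ and $m := |E|$. The whole argument is a two-line sandwich: gap-planarity forces the total crossing count to grow only linearly in $m$, while the Crossing Lemma forces it to grow cubically in $m$ once the graph is dense, and reconciling these two growth rates pins down $m$.

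First I would apply Property~\ref{pr:edges-crossings} with $E' = E$. Fixing a \kblip drawing $\Gamma$ of $G$, this immediately gives that $\Gamma$ contains at most $k \cdot m$ crossings, and in particular the crossing number satisfies ${\rm cr}(G) \le k \cdot m$. Next I would invoke the Crossing Lemma in its standard form: there is an absolute constant such that every graph on $n$ vertices with $m \ge 4n$ edges satisfies ${\rm cr}(G) \ge \frac{m^3}{64\, n^2}$. If instead $m < 4n$, then $m = O(n) = O(\sqrt{k}\cdot n)$ and the statement holds trivially, so we may assume $m \ge 4n$.

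Combining the two inequalities yields
\[
  \frac{m^3}{64\, n^2} \;\le\; {\rm cr}(G) \;\le\; k\, m,
\]
and cancelling one factor of $m$ gives $m^2 \le 64\, k\, n^2$, i.e.\ $m \le 8\sqrt{k}\cdot n = O(\sqrt{k}\cdot n)$, as required.

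I do not expect a genuine obstacle here: the content is entirely packaged into Property~\ref{pr:edges-crossings}, which converts the per-edge gap budget into the global bound ${\rm cr}(G) \le k\,m$, after which the $\sqrt{k}$ dependence is forced purely by the cubic exponent in the Crossing Lemma. The only point requiring a little care is the regime $m < 4n$ where the Crossing Lemma does not apply; this case only affects the hidden constant and is disposed of by the trivial bound above.
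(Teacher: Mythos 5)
Your proof is correct and follows essentially the same route as the paper: both arguments sandwich the crossing number between the upper bound ${\rm cr}(G)\le k\cdot m$ from Property~\ref{pr:edges-crossings} (with $E'=E$) and a cubic lower bound from the Crossing Lemma, differing only in that the paper uses the sharper constants of Pach et al.\ while you use the classical $m^3/(64n^2)$ version. The resulting constant in the $O(\sqrt{k}\cdot n)$ bound is slightly worse, but the asymptotic statement is identical.
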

\begin{proof}
 The crossing number of a graph $G$ with $n$ vertices and $m$ edges is bounded by ${\rm cr}(G)\geq \frac{1024}{31827}\cdot m^3/n^2$ when $m\geq \frac{103}{6}n$~\cite{DBLP:journals/dcg/PachRTT06}. Combined with the bound ${\rm cr}(G)\leq k \cdot m$ (Property~\ref{pr:edges-crossings}), we obtain
 $$\frac{1024}{31827}\cdot  \frac{m^3}{n^2}\leq {\rm cr}(G)\leq km,$$
 which implies $m\leq \max(5.58 \sqrt{k},17.17)\cdot n$, as required.
\end{proof}

Better upper bounds are possible for small values of $k$, in particular for $k=1$. Pach et al.~\cite{DBLP:journals/dcg/PachRTT06} proved that a graph $G$ with $n\geq 3$ vertices satisfies ${\rm cr}(G)\geq \frac{7}{3}m-\frac{25}{3}(n-2)$. Combined with the bound ${\rm cr}(G)\leq k \cdot m$, we have $$m\leq \frac{25(n-2)}{7-3k}.$$
For $k=1$ (i.e., for \oneblip graphs), this gives $m\leq 6.25n-12.5$. We now show how to improve this bound to $m\leq 5n-10$ (see Theorem~\ref{thm:density} below). The idea is to follow a strategy developed by Pach and T\'oth~\cite{DBLP:journals/combinatorica/PachT97} and Bekos et al.~\cite{DBLP:conf/gd/Bekos0R16} on the density of 2- and 3-planar graphs, with several important differences.

In order to accommodate the elementary operations in the proof of Theorem~\ref{thm:density}, we work on a broader class of graphs. A drawing $\Gamma$ of a multigraph $G=(V,E)$ is \kblip if it admits a \kblipa and no two parallel edges are homotopic. A multigraph is \emph{\kblip} if it has a \kblip drawing.
Two parallel edges $e_1=e_2=(u,v)$ are \emph{homotopic} in a drawing $\Gamma$, if the drawings are continuous arcs $\gamma_1:[0,1]\rightarrow \mathbb{R}^2$ and $\gamma_1:[0,1] \mathbb{R}^2$,
and there is a continuous function $h:[0,1]^2\rightarrow \mathbb{R}^2$ such that $\gamma_1(t)=h(0,t)$, $\gamma_2(t)=h(1,t)$, $\Gamma(u)=h(t,0)$, and $\Gamma(v)=h(t,1)$ for all $t\in [0,1]$, and $h$ does not map any point of the open square $(0,1)^2$ to a vertex in $\Gamma$. Intuitively, $\gamma_1$ can be continuously deformed into $\gamma_2$ with fixed endpoints and without passing through any vertex in $\Gamma$.
In particular, in a \oneblip drawing, two homotopic parallel edges either cross no other edge (they might cross each other), or they both cross the same edge.

Let $n\in \mathbb{N}$, $n\geq 3$. Let $G=(V,E)$ be a \oneblip multigraph with $n$ vertices that has the maximum number of edges possible over all $n$-vertex \oneblip multigraphs; and let $\Gamma$ be a \oneblip drawing of $G$.
Let $H=(V,E')$ be a sub-multigraph of $G$, where $E'\subseteq E$ is a maximum multiset of edges that are pairwise noncrossing in $\Gamma[E']$, and if there are several such sub-mustligraphs, then $H$ has the fewest connected components.

\ShoLong{
 We first show that $H=(V,E')$ is a \emph{triangulation}, that is, a plane multigraph in which every face is bounded by a walk with three vertices and three edges.
}{
 Our proof is based on the next technical lemma.
}

\begin{lemma}\label{lem:H}
 \ShoLong{
  The multigraph $H$ is a triangulation.
 }{
  The multigraph $H$ is a triangulation, that is, a plane multi-graph in which every face is bounded by a walk with three vertices and three edges.
 }
\end{lemma}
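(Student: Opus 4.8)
The plan is to assume that some face of the plane multigraph $H=(V,E')$ is not a triangle and to derive a contradiction with one of the three extremal choices we have made: the maximality of $|E|$ over \oneblip multigraphs on $V$, the maximality of $|E'|$ over pairwise-noncrossing multisets, or the minimality of the number of connected components of $H$. Before touching the faces I would record the consequences of these choices. Since $\Gamma[E']$ is crossing-free, $H$ is a spanning plane multigraph, so by Euler's formula it suffices to prove that $H$ is connected and that every face is bounded by a $3$-walk. Since $E'$ is a maximum noncrossing multiset, every edge $e\in E\setminus E'$ must cross at least one edge of $E'$ in $\Gamma$; otherwise $E'\cup\{e\}$ would be a strictly larger noncrossing multiset. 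Finally, because homotopic parallel edges are forbidden, $H$ has no face bounded by a $2$-walk, so a non-triangular face has a boundary walk long enough to admit an interior diagonal.

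Second, I would prove connectivity. If $H$ were disconnected, some face $f$ of $H$ would have parts of two different components $C_1,C_2$ on its boundary, and I would try to join a vertex of $C_1$ to a vertex of $C_2$ by an arc inside $f$. If the arc avoids all edges of $E\setminus E'$ crossing $f$, then either it realises a new edge of $G$ (contradicting maximality of $|E|$, since adding a crossing-free arc keeps the drawing \oneblip) or it reroutes an existing edge into $E'$ (contradicting maximality of $|E'|$). The remaining possibility is that an edge $b\in E\setminus E'$ joins $C_1$ and $C_2$ and crosses exactly one edge $a\in E'$; then deleting $a$ and inserting a suitable rerouting of $b$ keeps $|E'|$ fixed while reducing the number of components, contradicting the minimality of $H$. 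This swap is where component-minimality is genuinely used.

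Third is the core step. Assuming $H$ connected and some face $f$ non-triangular, I would pick two vertices $u,v$ on $\partial f$ joinable by an arc $\gamma$ inside $f$ whose $u$--$v$ homotopy class is new, i.e.\ not realised by any existing edge; the whole point of working with multigraphs up to homotopy is that such a diagonal exists in every non-triangular face. If $\gamma$ can be drawn disjoint from the edges of $E\setminus E'$ that traverse $f$, then $\gamma$ is crossing-free and the same trichotomy as in the connectivity step (a new edge of $G$, a larger $E'$, or a component-reducing swap) yields a contradiction.

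The hard part is the case in which edges of $E\setminus E'$ traverse $f$ and block every crossing-free diagonal. Here I would stop insisting that $\gamma$ be crossing-free; instead I would route it close to the boundary so that it meets only a controlled subset of the traversing edges, add it to the drawing, and then recertify that the augmented drawing is still \oneblip by the crossing-count characterization of Theorem~\ref{thm_crossing}, i.e.\ by checking that every edge subset of the new drawing still spans at most its own size in crossings. Pinning down exactly which traversing edges $\gamma$ must cross, and arguing that the resulting extra crossings never violate this count — using that each traversing edge carries at most a single gap and that the boundary edges of $f$ lie in the noncrossing set $E'$ — is the delicate point, and it is what forces the detailed case distinction on how the edges of $E\setminus E'$ enter and leave $f$.
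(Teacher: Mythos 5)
Your overall architecture matches the paper's: establish connectivity of $H$ via the three extremal criteria, then derive a contradiction from a non-triangular face by inserting a diagonal. However, there is a genuine gap at exactly the point you flag as ``the delicate point,'' and that point is the entire content of the lemma. When a face $f$ of $H$ with boundary vertices $v_1,\dots,v_m$ ($m\ge 4$) is traversed by edges of $E\setminus E'$, each such edge already carries its single gap, and a candidate diagonal can absorb only one crossing of its own. Your plan to ``route it close to the boundary so that it meets only a controlled subset of the traversing edges'' and then recertify via Theorem~\ref{thm_crossing} does not establish that any diagonal with at most one net chargeable crossing exists: a diagonal hugging the boundary may still be forced to cross arbitrarily many traversing edges, and nothing in your outline controls this. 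The paper's resolution is a specific combinatorial mechanism that you would need to reinvent or replace: it partitions the faces of the planarization inside $f$ into classes $F_0, F_1,\dots,F_m$ according to which $v_i$ they touch, proves four exclusion properties (non-consecutive classes cannot share a face adjacency, a crossing vertex, or a common $F_0$-neighbor, on pain of an augmentation), builds a triangulated dual complex on these faces, and applies \emph{Sperner's Lemma} to a $3$-coloring of that complex to force a trichromatic triangle --- which is precisely a pair $f_i\in F_i$, $f_j\in F_j$ with $j\notin\{i-1,i,i+1\}$ violating one of the exclusion properties. In each violating configuration the new edge $(v_i,v_j)$ lives in at most three planarization faces and crosses at most one edge net (possibly after replacing one of the two edges crossing at a shared dummy vertex), so it can be added to both $G$ and $H$. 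A separate case (when some side $(v_i,v_{i+1})$ of $f$ touches a far class $F_j$) needs its own surgery. None of this is implied by your outline, so the proposal as written does not close.

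A secondary, smaller issue: your connectivity step is also not exhaustive. If no crossing-free arc between the two components exists inside a face of $H$, you only consider the possibility that some edge $b\in E\setminus E'$ joining the components crosses exactly one edge of $E'$; the paper instead passes to the planarization, shows (via the structure of faces not incident to any real vertex, which requires its own counting argument exploiting the $1$-gap budget) that two planarization faces touching the two components must share a boundary edge $e$, and then does a case analysis on whether $e\in E'$, using the component-minimality tie-break in one subcase. You would need an analogous argument to make that step airtight.
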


\ShoLong{
The proof of Lemma~\ref{lem:H} is deferred to Section~\ref{ssec:H}.
}{
}
We can now show that $|E|\leq 5n-10$. We state a stronger result (for multigraphs)
that immediately implies the same density bound for simple graphs (Corollary~\ref{cor:density}).

\begin{theorem}\label{thm:density}
A multigraph on $n \ge 3$ vertices that has a \oneblip drawing in which no two parallel edges are homotopic
has at most $5n-10$ edges.
\end{theorem}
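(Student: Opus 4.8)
The plan is to exploit the triangulation structure established in Lemma~\ref{lem:H}. Since $H=(V,E')$ is a triangulation on $n$ vertices, a standard Euler-formula count gives $|E'| = 3n-6$ and the number of faces (all triangular) equals $2n-4$. The goal is then to bound the number of remaining edges $E \setminus E'$ by a careful charging scheme, showing $|E \setminus E'| \le 2n-4$, which combined with $|E'| = 3n-6$ yields the desired $|E| \le 5n-10$.

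**First I would** set up the key accounting device: every edge $e \in E \setminus E'$ must cross some edge of $H$. Indeed, if $e$ were crossing-free with respect to $\Gamma[E']$, then $E' \cup \{e\}$ would be a larger pairwise-noncrossing multiset (or one with fewer components, using the homotopy/component tie-breaking in the choice of $H$), contradicting the maximality of $H$. So each edge outside the triangulation is "anchored" to the triangulation by at least one crossing. The central idea, following Pach--T\'oth and Bekos et al., is to assign each such edge to a \emph{face} of the triangulation $H$ and then argue that each triangular face can absorb only a bounded number (on average, at most one) of these extra edges. The gap-assignment constraint is what makes this work: in a \oneblip drawing each edge carries at most one gap, so the crossing structure among the extra edges is tightly controlled.

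**The main step** is the local analysis inside a single triangular face $\Delta$ of $H$, where I would classify the portions of edges of $E\setminus E'$ passing through $\Delta$ and bound how many can be routed there. The \oneblip assignment means that when two non-$H$ edges cross, one of them "spends" its single gap, and since each edge has only one gap available, chains of mutually crossing extra edges cannot be long; moreover an extra edge crossing a boundary edge of $\Delta$ charges that crossing to one of the two edges, limiting reuse. I would partition $E \setminus E'$ according to which face contains a canonical crossing (say, the first crossing of the edge with $H$ along a fixed orientation) or according to the endpoints' incident faces, and then show a discharging inequality of the form: each face receives charge at most $1$ on average. The homotopy-freeness of parallel edges is essential here to prevent arbitrarily many extra edges from hiding in the same face, and it must be invoked to rule out degenerate bundles.

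**The hard part will be** the local face analysis and getting the constant exactly right to $2n-4$ rather than something weaker. The crude crossing-number bound already gives $6.25n-12.5$, so the improvement to $5n-10$ depends entirely on squeezing the charging argument so that each of the $2n-4$ triangular faces accounts for at most one excess edge. I expect the delicate cases to be (i) edges that cross a boundary edge of $\Delta$ shared between two faces, where one must decide unambiguously which face is charged to avoid double-counting, and (ii) configurations of two or more extra edges whose gaps interact within a single face — here the \oneblip constraint (each edge gapped at most once) must be combined with the maximality and minimum-component properties of $H$ to forbid the bad configurations. Handling these cases carefully, likely with a case distinction on how the extra edges intersect the three sides of the triangle and on where their endpoints lie, is where the real work of the proof resides.
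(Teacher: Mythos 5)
Your overall architecture matches the paper's exactly: invoke Lemma~\ref{lem:H} to get that $H=(V,E')$ is a triangulation with $3n-6$ edges and $2n-4$ triangular faces, then charge each edge of $E''=E\setminus E'$ to a face of $H$ so that no face receives more than one unit, giving $|E''|\le 2n-4$ and hence $|E|\le 5n-10$. You also correctly observe that every $e\in E''$ must cross some edge of $E'$ (by maximality of $E'$) and that the non-homotopy hypothesis is what ultimately excludes the degenerate configurations.

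The genuine gap is that you never commit to a charging rule, and the candidates you do name --- charge $e$ to the face containing its first crossing with $H$ along a fixed orientation, or to a face incident to an endpoint --- do not have the at-most-one-charge-per-face property. Concretely, take a face $\Delta=abc$ and two edges $e_1,e_2\in E''$ both emanating from $a$ inside $\Delta$ and both crossing the opposite side $(b,c)$; both would charge $\Delta$ under such a rule, and this configuration genuinely occurs in the extremal examples. The paper's rule is driven by the \emph{gap assignment} rather than by incidence alone: an edge $e$ charges an end triangle $\Delta$ only if its end portion in $\Delta$ carries no gap in the interior or on the boundary of $\Delta$; in the remaining case $e$ has spent its single gap crossing the side shared by its two end triangles $\Delta_1,\Delta_2$, and a tie-breaking rule sends the charge to $\Delta_1$ or $\Delta_2$ according to whether the crossed boundary edge's own gap is used by another $E''$-edge that already charges $\Delta_1$. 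With this rule, in the configuration above the side $(b,c)$ admits at most one gap, so at least one of $e_1,e_2$ must spend its own gap on that crossing, and the tie-breaking (together with non-homotopy of parallel edges, which forbids $e_1$ and $e_2$ both being copies of an edge $(a,d)$ confined to $\Delta$ and the face across $(b,c)$) diverts its charge to the neighbouring face. This is precisely the step your plan flags as ``the hard part'' but does not supply; without it the argument yields only the weaker crossing-number bound of roughly $6.25n$ that you already cite.
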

\begin{corollary}\label{cor:density}
 A \oneblip (simple) graph on $n \ge 3$ vertices has at most $5n-10$ edges.
\end{corollary}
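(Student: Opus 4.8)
The plan is to exploit Lemma~\ref{lem:H} and reduce the density bound to a face-counting statement. Since $H=(V,E')$ is a triangulation on $n\ge 3$ vertices all of whose faces are bounded by three edges, it is connected, and Euler's formula gives $|E'|=3n-6$ together with exactly $2n-4$ faces. Writing $m=|E|$ and letting $c=m-|E'|$ be the number of \emph{crossing edges} (the edges of $E\setminus E'$), the inequality $m\le 5n-10$ is equivalent to $c\le 2n-4$; that is, it suffices to show that the number of crossing edges is at most the number of triangular faces of $H$.

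The first step is to collect the structural consequences of maximality and of the \oneblip assignment. By the maximality of the noncrossing multiset $E'$, every crossing edge crosses at least one edge of $E'$, so each crossing edge is cut by its crossings with $E'$ into segments, each of which lies in the interior of a single triangular face of $H$; moreover a crossing between two crossing edges lies in a face interior, whereas a crossing between a crossing edge and a triangulation edge lies on a shared side of two faces. The gap assignment then constrains how often the triangulation can be traversed: each triangulation edge absorbs at most one of its crossings, so if a triangulation edge $g$ is crossed $r$ times then at least $r-1$ of those crossings are charged to the crossing edges; and since each crossing edge absorbs at most one gap in total (the $k=1$ case of Theorem~\ref{thm_crossing} and Property~\ref{pr:edges-crossings}), a single crossing edge can have its gap ``used up'' by at most one such crossing.

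The core of the proof is then to define an assignment of each crossing edge to an incident face of $H$ and to prove that this assignment is injective, which gives $c\le 2n-4$ and hence $m\le 5n-10$. I would route the assignment through the gap structure: send a gapped crossing edge to the face it enters across its assigned crossing, and send a gap-free crossing edge to the face determined by its first crossing along a fixed orientation, and then argue face by face that a single triangular face cannot be the image of two distinct crossing edges. At this point the hypothesis that no two parallel edges are homotopic becomes essential: it forbids two crossing edges from traversing a face in the same combinatorial way, which is exactly the degeneracy that would otherwise let several crossing edges pile into one face. Once injectivity is established, Corollary~\ref{cor:density} follows at once, since a simple graph is in particular a \oneblip multigraph with no parallel edges.

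I expect the face-by-face injectivity analysis to be the main obstacle. A crossing edge may cross many triangulation edges and thread through many faces, and within one triangle several segments of different crossing edges may appear and even cross one another, so excluding a double charge requires a careful case distinction according to which sides or corners of the triangle the relevant segment enters and leaves, combined with the \oneblip bounds that cap the gaps on each triangulation edge and on each crossing edge. This is precisely the step in which the argument must diverge from the discharging schemes of Pach and T\'oth and of Bekos et al., and where both the triangulation structure of Lemma~\ref{lem:H} and the non-homotopy condition do the essential work.
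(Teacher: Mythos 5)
Your reduction to showing $|E\setminus E'|\le 2n-4$ via Lemma~\ref{lem:H} and Euler's formula matches the paper, and your closing observation that a simple graph is in particular a multigraph with no homotopic parallel edges correctly yields Corollary~\ref{cor:density} from Theorem~\ref{thm:density}. But the heart of the argument --- the injectivity of the face assignment --- is exactly the part you defer, and the assignment rule you sketch does not appear to be repairable as stated. The paper's charging scheme only ever charges an \emph{end triangle} of a crossing edge (a triangle containing one of its two end portions), with a tie-breaking rule for the case where the single gap of the edge lies on the side shared by its two end triangles. Your rule instead sends a gapped edge to ``the face it enters across its assigned crossing,'' which can be a triangle the edge merely passes through. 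This breaks injectivity: take a face $\Delta=abc$, an edge $e_2$ ending at $a$ whose gap is its crossing with $(b,c)$ at a point $p$ near $c$, and an edge $e_4$ that traverses $\Delta$ from a point $q$ on $(a,b)$ to a point $r$ on $(b,c)$ near $b$, with its gap at $q$ and the crossing at $r$ charged to $(b,c)$. The arcs $pa$ and $qr$ are disjoint inside $\Delta$, every gap budget is respected, yet both $e_2$ and $e_4$ charge $\Delta$ under your rule; the orientation-dependence of ``enters'' is a further unresolved ambiguity.

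So the proposal is a plan rather than a proof: the case analysis you flag as ``the main obstacle'' is precisely the content of the paper's proof of Theorem~\ref{thm:density}, where one shows that two edges charging the same face must have end portions meeting at a common vertex of that face and both crossing the opposite side, and then derives a contradiction from the single gap available on that side together with the non-homotopy hypothesis. To complete your argument you would need either to switch to an end-triangle-based rule of that kind, or to strengthen your rule substantially so that pass-through faces are never charged.
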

\begin{proof}[Proof of Theorem~\ref{thm:density}.]
 By Lemma~\ref{lem:H}, $H=(V,E')$ is a triangulation. By Euler's polyhedron theorem, it has $3n-6$ edges and $2n-4$ triangular faces. Consider the edges in $E''=E\setminus E'$. It remains to show that $|E''|\leq 2n-4$.

 The embedding of edge $e\in E''$ is a Jordan arc that visits two or more triangle faces of $H$. We call the first and last triangles along $e$ the \emph{end triangles} of $e$. For an end triangle $\Delta$, the connected component of $e\cap \Delta$ incident to a vertex of $\Delta$ is called an \emph{end portion}.
 We use the following charging scheme.

 \begin{figure}[htbp]
 \centering
 \includegraphics[width=0.45\columnwidth]{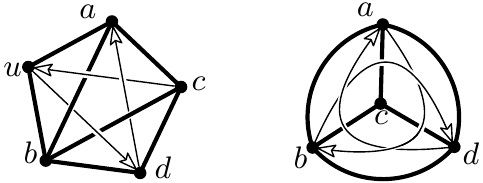}
 \caption{Example for the charging scheme in the proof of Theorem~\ref{thm:density}.
  Bold edges are in a crossing-free triangulation $H=(V,E')$.
  Every edge in $E\setminus E'$ is charged to a triangle of $H$ as indicated by arrows.
  Left: a simple graph where edge $(c,u)$ is charged to $\Delta abu$, edge $(d,u)$ to $\Delta acd$, and edge $(a,d)$ to $\Delta abc$.
  Right: A \oneblip multigraph with nonhomotopic parallel edges.}\label{fig:charge}
\end{figure}

 Each edge $e\in E''$ charges one unit to a triangle face of $H$ as follows. (Refer to Fig.~\ref{fig:charge}.)
 If $e$ has an end portion that has a gap neither in the interior nor on the boundary of the corresponding end triangle $\Delta$, then $e$ charges one unit to $\Delta$. (If neither end portions of $e$ has a gap in the interior or on the boundary of its end triangle, then $e$ charges one arbitrary end triangle.) Otherwise the two end portions of $e$ lie in two adjacent triangles, say, $\Delta_1$ and $\Delta_2$, and $e$ uses its gap to cross an edge $e'$ on the boundary between $\Delta_1$ and $\Delta_2$; in this case $e$ charges one unit to $\Delta_1$ or $\Delta_2$ as follows: If $e'$ has a gap and the edge $e''\in E''$ passing through this gap charges $\Delta_1$ (because $e''$ has an end portion $e''\cap \Delta_1$ that has a gap neither in the interior nor on the boundary of $\Delta_1$), then $e$ charges $\Delta_2$, otherwise it charges $\Delta_1$.

 We claim that each face of $H$ receives at most one unit of charge. Let $\Delta=\Delta{abc}$ be a face in $H$.
 Suppose to the contrary that $\Delta$ receives positive charge from two edges, say $e_1,e_2\in E''$.
 Then both edges have an end portion in $\Delta$ that do not have gaps in the interior of $\Delta$. Consequently, the end portions of $e_1$ and $e_2$ in $\Delta$ cannot cross, and so they are incident to the same vertex of $\Delta$. Therefore, the both end portions $e_1\cap \Delta$ and $e_2\cap \Delta$ are incident to the same vertex of $\Delta$, say $a$, and cross the edge of $\Delta$ opposite to $a$, namely $(b,c)$. Let $\Delta'=\Delta'bcd$ be the face of the plane graph $H$ on the opposite side of $(b,c)$.

  Assume first that the end portion $e_1\cap \Delta$ has a gap neither in the interior nor on the boundary of $\Delta$. Then $e_1$ passes through the gap of $(b,c)$. Since $(b,c)$ has at most one gap in a \oneblip drawing, $e_2$ uses its own gap to cross $(b,c)$. By our charging scheme, this implies that $e_2=(a,d)$, and it must charges one unit to $\Delta'$ (rather than $\Delta$). Next assume that $(b,c)$ does not have any gap.
  Then $e_1$ and $e_2$ each use their own gaps to cross $(b,c)$. Both $e_1$ and $e_2$ are homotopic to an edge $(a,d)$ lying in $\Delta_1\cup \Delta_2$ by our charging scheme. All cases lead to a contradiction, hence  $\Delta$ receives at most $1$ unit of charge, as claimed. Consequently, $|E''|$ is bounded above by the number of faces of $H$, which is $2n-4$, as required.
\end{proof}

\newcommand{\proofLemmaOneSection}{
\subsection{Proof of Lemma~\ref{lem:H}}\label{ssec:H}

We start with a few basic observations.
Let $G$ be an edge-maximal multigraph on $n$ vertices that
has a \oneblip drawing without homtopic parallel edges.

\begin{lemma}\label{lem:g-con}
 Graph $G=(V,E)$ is connected.
\end{lemma}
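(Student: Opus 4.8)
The plan is to argue by contradiction from the edge-maximality of $G$. Suppose $G$ were disconnected, with connected components $C_1,\dots,C_t$ for some $t\ge 2$. I will show that one can always insert an edge joining two distinct components while preserving both the \oneblip property and the absence of homotopic parallel edges, contradicting the assumption that $G$ has the maximum number of edges over all $n$-vertex \oneblip multigraphs.

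First I would fix a \oneblip drawing $\Gamma$ of $G$ together with a \oneblipa, and restrict it to each component to obtain \oneblip drawings $\Gamma_i$ of the $C_i$; the inherited gap assignment remains valid, since restricting to a subset of edges only removes crossings. The key maneuver is to \emph{redraw}: translate and rescale each $\Gamma_i$ so that the images lie in pairwise disjoint disks placed side by side in the plane. Because each $\Gamma_i$ is reproduced by a homeomorphism of the plane, the crossings inside each component, and hence its gap assignment, are preserved, and no crossings occur between edges of different components. The resulting drawing $\Gamma'$ is therefore again a \oneblip drawing of $G$; moreover, since homotopy relations within each component are unchanged and there are no parallel edges between components, $\Gamma'$ still has no homotopic parallel edges.

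Next I would add the edge. In $\Gamma'$ each disk has at least one vertex incident to the (unbounded) outer face, since the disks are disjoint and none is nested inside another; pick such vertices $u\in C_1$ and $v\in C_2$. As both lie on the boundary of the outer face, they can be joined by a Jordan arc routed entirely through the outer face that crosses no edge of $\Gamma'$. The new edge $uv$ is thus gap-free, so the \oneblipa extends trivially, and $uv$ is not parallel to any existing edge because its endpoints lie in different components, so no homotopic parallel pair is created. This yields a \oneblip multigraph on $n$ vertices with $|E|+1$ edges and no homotopic parallel edges, contradicting edge-maximality; hence $G$ is connected.

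The routine parts are the facts that a plane subdrawing has a vertex on its outer face and that two such vertices can be joined through the outer face without crossings. The one point requiring care is verifying that the side-by-side redrawing genuinely produces a valid \oneblip drawing of all of $G$, namely that the crossing structure within each component is unchanged while no new crossings arise across components, which is also where the ``no homotopic parallel edges'' clause must be rechecked; everything else follows directly from the definitions.
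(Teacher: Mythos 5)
Your overall strategy is the same as the paper's: assume $G$ is disconnected, redraw the components so that they do not interact, insert a crossing-free edge between two components, and contradict edge-maximality. However, there is a genuine gap at the one step you dismiss as routine. You claim that after placing the components in disjoint disks, ``each disk has at least one vertex incident to the outer face,'' justifying this only by the disks being disjoint and non-nested. The drawings $\Gamma_i$ are \emph{not} plane drawings: they have crossings, and the outer face of the planarization $\Gamma_i^*$ may be bounded entirely by dummy vertices (crossing points), with every real vertex of the component enclosed by edge arcs. For instance, a triangle whose three edges are drawn as long arcs that pairwise cross far outside the three vertices has an outer face incident to no real vertex, and such a drawing is \oneblip. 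In that situation no vertex $u$ of that component can be reached from the outer face of $\Gamma'$ by a crossing-free arc, so your new edge $uv$ cannot be inserted gap-free as claimed; escaping the enclosing arcs could require several crossings, and only one of them could be charged to the new edge.

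The paper closes exactly this hole: it applies a projective transformation to $\Gamma_1$ so that the outer face of its planarization becomes incident to a real vertex $v_1$, and then maps $\Gamma_1$ affinely into the interior of a face of $\Gamma_2^*$ that is incident to a real vertex $v_2$; the edge $(v_1,v_2)$ can then be drawn crossing-free. Your argument becomes correct if you add the same normalization (equivalently, redraw each component on the sphere and choose the point at infinity inside a face incident to a real vertex before placing it in its disk). A smaller point worth a sentence in either write-up: after relocating components one should check that no two parallel edges become homotopic, since the homotopy in the paper's definition must avoid all vertices of $\Gamma$, not just those of the component containing the parallel pair; the paper glosses over this as well, so I do not count it against you.
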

\begin{proof}
 Suppose, to the contrary, that $G$ is disconnected. Let $G_1=(V_1,E_1)$ be one component, and let $G_2=(V_2,E_2)$ be the disjoint union of all other components (i.e., $V_2=V\setminus V_1$ and $E_2=E\setminus E_1$). For $i=1,2$, let $\Gamma_i=\Gamma[E_i]$ (i.e., the drawing of $G_i$ inherited from $G$), and let $\Gamma^*_i$ be the  planarization of $\Gamma_i$.

 Let $f_2$ be a face in $\Gamma^*_2$ incident to some vertex $v_2\in V_2$. Apply a projective transformation to $\Gamma_1$ so that the outer face is incident to some vertex $v_1\in V_1$; followed by an affine transformation that maps $\Gamma_1$ into the interior of face $f_2$. We obtain a \oneblip drawing of $G$ in which we can insert a new crossing-free edge $(v_1,v_2)$, between two distinct components of $G$, contradicting the maximality of $G$.
\end{proof}

Recall that $\Gamma$ is a \oneblip drawing of $G$ with the minimum number of crossings. We show that this implies that $\Gamma$ is a \emph{simple topological} drawing, that is, no edge crosses itself and every pair of edges cross at most once. This follows from standard simplification techniques, but we provide the proof for completeness.
\begin{lemma}\label{lem:simpletop}
$\Gamma$ is a simple topological drawing.
\end{lemma}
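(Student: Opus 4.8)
The plan is to exploit the crossing-minimality of $\Gamma$: I will show that if $\Gamma$ failed to be simple topological, then I could exhibit another \oneblip drawing of $G$ (still without homotopic parallel edges) having strictly fewer crossings, contradicting the choice of $\Gamma$. There are exactly two ways to fail, and I treat them separately: (i) some edge $e$ crosses itself, and (ii) two distinct edges $e,f$ cross at least twice.

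For case (i), the two branches of $e$ through a self-crossing $p$ enclose a closed loop $\ell$. I would reroute $e$ by \emph{short-cutting} it at $p$: follow $e$ up to its first passage through $p$ and then continue along the second branch, discarding $\ell$. The result is still an arc between the two endpoints of $e$, and it only loses crossings — the self-crossing $p$ together with every crossing lying on $\ell$ — while creating none. Hence in every subdrawing the number of crossings can only decrease, so by Theorem~\ref{thm_crossing} the modified drawing is again \oneblip; moreover, since $e$ is the only edge altered and it is merely shortened, no two parallel edges become homotopic. The total crossing count strictly drops, a contradiction.

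For case (ii), the key object is an \emph{empty lens}: a disk $D$ whose boundary consists of a sub-arc of $e$ and a sub-arc of $f$ meeting at two crossings $p,q$, and whose interior meets no vertex and no edge of $\Gamma$. I would first argue that such a lens exists whenever two edges cross at least twice: the family of lens-regions bounded by arcs of two crossing edges is then nonempty, so I may pick one that is inclusion-minimal, and a short argument shows that an inclusion-minimal lens contains neither a vertex nor an edge arc in its interior, since either would produce a strictly smaller lens. Given an empty lens $D$, I would deform the arc of $e$ bounding $D$ across $D$ to just past the arc of $f$. Because $D$ is empty, this homotopy is supported in a neighborhood of $\overline{D}$ that contains no other vertex or edge: it removes exactly the crossings $p,q$, creates no new crossing, and leaves every edge unchanged outside $\overline{D}$. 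As in case (i), every subdrawing loses crossings, so Theorem~\ref{thm_crossing} again keeps the drawing \oneblip, and emptiness of $D$ guarantees that no homotopic pair of parallel edges is introduced. Once more the crossing count strictly decreases, contradicting minimality.

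I expect the main obstacle to be the two bookkeeping points that distinguish this from the classical ``crossing-minimal drawings are simple'' lemma. The first is the existence of an empty lens, i.e.\ the inclusion-minimality argument, where one must rule out an edge that merely passes through, or terminates inside, a candidate region. The second is checking that each local redrawing preserves both defining conditions of a \oneblip multigraph drawing: that it remains \oneblip — handled cleanly through Theorem~\ref{thm_crossing}, precisely because the short-cut and empty-lens moves never increase the number of crossings in any subdrawing — and that it introduces no homotopic parallel edges, which holds because the moves are confined to vertex-free, edge-free regions. Everything else is routine.
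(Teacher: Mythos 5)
Your case (i) is essentially the paper's argument: short-cutting the unique self-crossing only deletes crossings from every subdrawing, so the drawing stays \oneblip (the paper re-validates the gap assignment directly, noting that the self-crossing must be charged to $e$ itself so all other crossings of $e$ are charged elsewhere; your appeal to Theorem~\ref{thm_crossing} works equally well here). The genuine gap is in case (ii): the empty lens you rely on need not exist, and inclusion-minimality does not deliver it. A vertex lying in the interior of a lens produces no smaller lens whatsoever, and an edge $g$ that enters the lens through the sub-arc of $e$ and leaves through the sub-arc of $f$ cuts it into two ``triangular'' regions, neither of which is a lens bounded by two edges. So one can have two edges crossing twice while every lens they bound contains a vertex together with all of its incident edges; sweeping $e$ across such a lens creates new crossings (and changes homotopy classes), and your contradiction evaporates. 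Your argument therefore does not cover all configurations.

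The paper avoids the issue entirely: it never moves an arc across the lens but \emph{exchanges} the sub-arcs of $\gamma_1$ and $\gamma_2$ between the two crossings $c_1$ and $c_2$. This eliminates the crossings at $c_1$ and $c_2$, leaves the point set $\gamma_1\cup\gamma_2$ (hence every other crossing) untouched, and requires no emptiness hypothesis. Be aware that if you switch to the swap you must also abandon your Theorem~\ref{thm_crossing} shortcut: the swap reassigns crossings with third edges from $e_1$ to $e_2$ and vice versa, so for a subset $E'$ containing one of the two edges but not the other, the number of crossings in $\Gamma[E']$ can increase. You then need the paper's direct re-charging argument: in any \oneblipa the crossings $c_1,c_2$ must be charged one to $e_1$ and one to $e_2$, hence every other crossing involving $e_1$ or $e_2$ is charged to the third edge and remains validly charged after the swap, while the two new edges carry no charge. (A minor further point, present at the same level of informality in the paper: in case (i) the deleted loop may enclose a vertex, so the homotopy class of $e$ can change, and the claim that no homotopic parallel pair is created deserves a word.)
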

\begin{proof}
Suppose the drawing $\gamma_0$ of an edge $e_0=(u,v)$ crosses itself at point $c_0$. Then $\gamma_0$ crosses itself only once, and this crossing is charged to edge $e$, hence all other crossings of $\gamma_0$ are charged to other edges. We can redraw $e$ by eliminating the loop of $\gamma_0$. This yields a new \oneblip drawing of $G$ with at least one fewer crossings, contradicting the minimality of $\Gamma$.

Suppose the drawings $\gamma_1$ and $\gamma_2$ of edges $e_1$ and $e_2$ cross at points $c_1$ and $c_2$.
Then they cross exactly twice and these two crossings are charged to $e_1$ and $e_2$, hence any other crossing of $e_1$ or $e_2$ with some edge $e_3$ is charged to $e_3$. We can redraw $e_1$ and $e_2$ in $\gamma_1\cup \gamma_2$ by exchanging their subarcs between $c_1$ and $c_2$ such that both crossings are eliminated. This yields a new \oneblip drawing of $G$ with fewer crossings, contradicting the minimality of $\Gamma$.
\end{proof}

Since $G$ is connected, every face in the planarization $\Gamma^*$ of $\Gamma$ has a connected boundary. The \emph{boundary walk} of a face $f$ is a closed walk $(a_1,a_2,\ldots, a_m)$ in $\Gamma^*$ such that $f$ lies on the left hand side of each edge $(a_i,a_{i+1})$ along the walk; and every two consecutive edges of the walk, $(a_{i-1},a_i)$ and $(a_i,a_{i+1})$, are also consecutive in the counterclockwise rotation of all edges incident to $a_i$.
Let $F_0$ denote the set of faces in the planarization $\Gamma^*$ that are not incident to any vertex in $V$.

\begin{lemma}\label{lem:cycle}
 If $f\in F_0$, then the boundary walk of $f$ is
 \begin{enumerate}
  \item a simple cycle (i.e., has no repeated vertices) with at least 3 vertices;
  \item disjoint from the boundary walk of any other face in $F_0$.
 \end{enumerate}
\end{lemma}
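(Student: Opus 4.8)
The plan is to exploit two properties of $\Gamma$ established earlier: it is a \emph{simple topological} drawing (Lemma~\ref{lem:simpletop}), and it is a \oneblip drawing of $G$ with the fewest crossings. I will also use repeatedly that every dummy vertex of $\Gamma^*$ has degree $4$, with its four incident arcs alternating around it between the two edges that cross there, and that a face $f\in F_0$ is bounded exclusively by such dummy vertices. The central lever for every minimality argument is the following consequence of Theorem~\ref{thm_crossing}: if a drawing $\Gamma'$ of $G$ is obtained from $\Gamma$ by a redrawing that deletes one or more crossings and creates none, then for every edge set $E'$ the subdrawing $\Gamma'[E']$ has no more crossings than $\Gamma[E']$, so $\Gamma'$ is again \oneblip; having strictly fewer crossings than $\Gamma$, it contradicts the choice of $\Gamma$. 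Hence any such crossing-reducing local redrawing is forbidden.

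For part (1) I first bound the length of the boundary walk from below. A walk of length one is a loop at a single dummy vertex, i.e.\ a self-crossing of one edge, which Lemma~\ref{lem:simpletop} excludes. A walk of length two is a bigon with dummy corners $x$ and $y$; its two bounding arcs are pieces of the two edges crossing at $x$ and also of the two edges crossing at $y$, forcing the same pair of edges to cross at both $x$ and $y$, contradicting that pairs cross at most once. Thus the walk has at least three arcs. To see it is a simple cycle, I argue its (all-dummy) boundary walk cannot repeat a vertex: a repeat would make some dummy vertex $x$ a pinch point of the single region $f$, so that $f$ meets $x$ in two opposite sectors. Since $G$ is connected (Lemma~\ref{lem:g-con}), deleting the crossing $x$ only splits its two edges into arcs that remain attached to their real endpoints, and these stay connected through the rest of $G$; hence $x$ is not a cut vertex of $\Gamma^*$, and no face — in particular no $f\in F_0$ — can pinch there. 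Therefore the boundary walk is a simple cycle of length at least three.

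For part (2), suppose two distinct faces $f,f'\in F_0$ share a dummy vertex $x$, the crossing of edges $e$ and $e'$. By part (1) each boundary is a simple cycle occupying exactly one corner (two consecutive arcs) at $x$, so $f$ and $f'$ sit in two of the four sectors, and in each configuration both occupied sectors are \emph{empty} faces (bounded only by crossings). I would then seek a redrawing of $e$ or $e'$ that uses these empty regions to remove the crossing at $x$ (and possibly others) without creating any new crossing; by the lever above this keeps the drawing \oneblip with fewer crossings, contradicting minimality. It then follows that $\partial f$ and $\partial f'$ share no vertex, and being cycles, no arc either. I expect this redrawing step to be the crux: one must check, for each way the two empty sectors can lie around $x$, that the crossing at $x$ can be discharged into the empty faces without introducing a new crossing — the subtlety being that removing a single transverse crossing may require routing an edge around an endpoint, so one has to use the \emph{global} structure of the empty simple-cycle faces furnished by part (1), not merely a local perturbation. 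By contrast, the length and simple-cycle claims of part (1) reduce cleanly to the simple-topological property (Lemma~\ref{lem:simpletop}) and the connectivity of $G$ (Lemma~\ref{lem:g-con}).
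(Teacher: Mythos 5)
Your lower bound $\ell\ge 3$ matches the paper's, but both of your main arguments diverge from the paper's proof and each has a genuine gap. For part (1), your justification that a dummy vertex $x$ cannot be a cut vertex of $\Gamma^*$ does not follow from the connectedness of $G$ alone: breaking the two edges $e,e'$ at $x$ leaves four arcs attached to their real endpoints, but those endpoints need not remain mutually connected (e.g.\ two disjoint triangles joined only by two mutually crossing edges give a connected $G$ whose planarization has a dummy cut vertex). In the present setting the conclusion can be rescued, but only by invoking crossing-minimality of $\Gamma$ (reflect one side of the cut to eliminate the crossing, as the paper does for $K_9$ in Theorem~\ref{th:complete}), not Lemma~\ref{lem:g-con}. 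For part (2) you have not given a proof at all: the step you yourself call ``the crux'' --- discharging the crossing at $x$ into the two empty faces without creating new crossings --- is only announced. It is far from clear it can be done: rerouting $e$ or $e'$ through a face of $F_0$ means running along that face's boundary, which consists entirely of other edges, so new crossings are the expected outcome, and having two vertex-free faces among the four sectors at $x$ gives no evident handle for removing a transverse crossing (which in general requires routing around an endpoint).

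The paper's argument for both the simplicity claim and the disjointness claim is of a completely different, purely counting, nature and uses Property~\ref{pr:edges-crossings} with $k=1$ rather than crossing-minimality. Each vertex of the boundary walk of $f\in F_0$ is a crossing between two edges that both contain walk-edges, so all $|C_f|$ of these crossings occur within the subdrawing $\Gamma[E_f]$, where $E_f$ is the set of edges of $G$ containing an edge of the walk; 1-gap-planarity forces $|C_f|\le|E_f|$. A repeated vertex identifies pairs of walk-edges lying on the same edge of $G$ and yields $|E_f|<|C_f|$, a contradiction; likewise a shared vertex or shared edge between two walks $w_1,w_2$ forces $|E_1\cup E_2|<|C_1\cup C_2|$. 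I would encourage you to redo both parts along these lines: it is shorter, needs no redrawing, and handles part (2) (where your plan is most likely unworkable) for free.
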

\begin{proof}
 \noindent 1.
 Let $f\in F_0$, and let $w=(a_1,a_2,\ldots, a_\ell)$ be its boundary walk. By Lemma~\ref{lem:simpletop}, we have $\ell\geq 3$. Let $C_f=\{a_1,\ldots, a_\ell\}$ be the set of vertices in $w$; and let $E_f\subset E$ be the set of edges in $G$ that contain some edge of $w$. It suffices to show that $|C_f|=\ell$, and then $w$ has no repeated vertices, hence it is a simple cycle.

 Suppose, to the contrary, that the vertices in $w$ are not distinct. Since $f\in F_0$, all vertices in $w$ are crossings in the drawing $\Gamma$, consequently they all have degree 4 in the planarization $\Gamma^*$.
 If $a_i=a_j$, $i\neq j$, then $a_i$ and $a_j$ cannot be consecutive vertices in $w$, and two pairs of edges from $(a_{i-1},a_i)$, $(a_i,a_{i+1})$, $(a_{j-1},a_j)$, $(a_j,a_{j+1})$ are part of the same edge in $E$. If $|C_f|=\ell-k$, for some $k\in \mathbb{N}$, then $|E_f|\leq \ell-2k$. This implies $|E_f|<|C_f|$. That is, the edges in $E_f$ are involved in more than $|E_f|$ crossings, contradicting the assumption that $\Gamma$ is a \oneblip drawing.

 \medskip
 \noindent 2. Let $f_1,f_2\in F_0$ be two faces, with boundary walks $w_1=(a_1,\ldots, a_\ell)$ and $w_2=(b_1,\ldots, b_{\ell'})$. Both $w_1$ and $w_2$ are simple cycles by part~1. For $i=1,2$, let $C_i$ be the set of vertices in $w_i$, and $E_i\subseteq E$ the set of edges of $G$ that contain the edges of the walk $w_i$.

 Note that $w_1$ and $w_2$ cannot share two consecutive edges, say $(a_{i-1},a_i)$ and $(a_i,a_{i+1})$, since the middle vertex $a_i$ has degree 4 in $\Gamma^*$. When $w_1$ and $w_2$ have a common edge, say $(a_i,a_{i+1})=(b_{j+1},b_j)$, then three pairs of edges from $(a_{i-1},a_i)$, $(a_i,a_{i+1})$, $(a_{i+1},a_{i+2})$, $(b_{j-1},b_j)$, $(b_j,b_{j+1})$ $(b_{j+1},b_{j+2})$ are part of the same edge in $E$.
 When $w_1$ and $w_2$ have a common vertex $a_i=b_j$ but no common edge incident to $a_i=b_j$,
 then two pairs of edges from $(a_{i-1},a_i)$, $(a_i,a_{i+1})$, $(b_{j-1},b_j)$, $(b_j,b_{j+1})$ are part of the same edge in $E$. This implies $|E_1\cup E_2|<|C_1\cup C_2|$. That is, the edges in $E_1\cup E_2$ are involved in more than $|E_1\cup E_2|$ crossings, contradicting the assumption that $\Gamma$ is \oneblip.
\end{proof}

\begin{figure}[htbp]
 \centering
 \includegraphics[width=0.4\columnwidth]{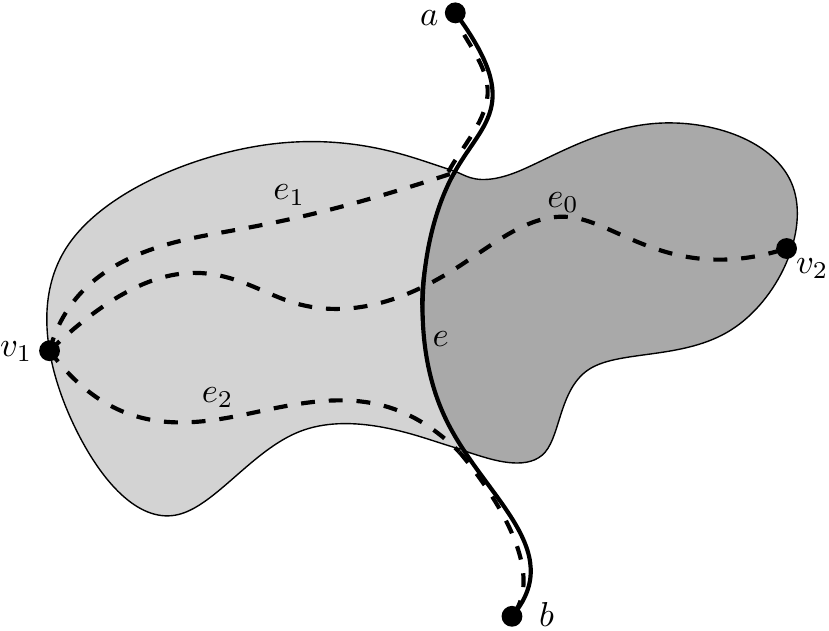}
 \caption{Illustration for the proof of Lemma~\ref{lem:h-con}: Two adjacent faces in the planarization $\Gamma^*$
 that are incident to two distinct vertices, $v_1$ and $v_2$, separated by an edge $e=(a,b)$.
 If we replace the edge $e$ by either $e_0=(v_1,v_2)$ or both $e_1=(v_1,a)$ and $e_2=(v_1,b)$,
 we obtain a \oneblip drawing of a graph that has either one fewer component in $H$ or one more edge.}\label{fig:hconec}
\end{figure}

Recall that $H=(V,E')$ is a sub-multigraph of $G$, where $E'\subseteq E$ is a maximum multiset of edges that are pairwise noncrossing in $\Gamma[E']$, and if there are several such sub-mustligraphs, then $H$ has the fewest connected components.

\begin{lemma}\label{lem:h-con}
 Graph $H=(V,E')$ is connected.
\end{lemma}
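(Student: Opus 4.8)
The plan is to argue by contradiction: assume $H$ is disconnected and then modify either the drawing $\Gamma$ or the graph $G$ so as to contradict one of the extremal assumptions built into the setup, namely that $G$ is edge-maximal and that, among maximum noncrossing submultigraphs of $\Gamma$, the chosen $H$ has the fewest components. Throughout I would lean on the facts already in hand: $G$ is connected (Lemma~\ref{lem:g-con}), $\Gamma$ is a simple topological drawing (Lemma~\ref{lem:simpletop}), so faces of $\Gamma^*$ have connected boundary, and every edge of $E''=E\setminus E'$ crosses at least one edge of $E'$ (otherwise it could be added to $E'$, contradicting the maximality of $E'$).

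The first, geometric, step is to produce the configuration depicted in Figure~\ref{fig:hconec}. Since $H$ spans $V$ and is disconnected while $G$ is connected, I would examine how the components of $H$ are arranged in the plane and, using that the faces of $\Gamma^*$ have connected boundary, extract two faces $\phi_1,\phi_2$ of $\Gamma^*$ that share a single boundary edge $e=(a,b)$ whose two endpoints $a,b$ are both real vertices, and that are incident to real vertices $v_1\in\partial\phi_1$ and $v_2\in\partial\phi_2$ lying in two distinct components of $H$. Because $e$ is a $\Gamma^*$-edge between two real vertices, it is a crossing-free $G$-edge; being crossing-free it is compatible with every noncrossing multiset, so $e\in E'$ by maximality, and hence $a,b$ lie in a common component of $H$.

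The second step is the local surgery. Deleting $e$ merges $\phi_1$ and $\phi_2$ into a single edge-free region $R$ whose boundary carries $v_1,v_2,a,b$. Following the figure, I would then reconnect in one of two ways. Either I route a single crossing-free arc $e_0=(v_1,v_2)$ through $R$, obtaining the \oneblip multigraph $G-e+e_0$ with the same number of edges but a maximum noncrossing submultigraph having strictly fewer components, contradicting the minimality of the number of components of $H$; or I draw the two crossing-free arcs $e_1=(v_1,a)$ and $e_2=(v_1,b)$ inside the $\phi_1$-part of $R$, obtaining the \oneblip multigraph $G-e+e_1+e_2$ with one extra edge, contradicting the edge-maximality of $G$. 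In every case the inserted arcs are crossing-free (their interiors lie in the empty region $R$ and avoid all vertices), and the deleted edge $e$ carried no crossings, hence no gap; thus no new crossings are created, the existing gap assignment survives unchanged, and each modified object is a genuine \oneblip drawing.

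The step I expect to be the main obstacle is verifying that at least one of the two replacements is always both available and legal, and in particular that it introduces no two homotopic parallel edges — the constraint that separates the multigraph setting from the simple-graph one. For simple graphs the single arc $e_0$ already suffices; with parallel edges allowed, $e_0$ could be homotopic to an edge already present between $v_1$ and $v_2$, and it is precisely to sidestep this that the alternative replacement by $e_1$ and $e_2$ is needed. Making this dichotomy exhaustive — showing that whenever $e_0$ is blocked by a homotopic parallel the pair $e_1,e_2$ is instead available and non-homotopic to existing edges, while simultaneously tracking that the component count strictly drops in the first case and the edge count strictly rises in the second — is the delicate bookkeeping the proof must carry out.
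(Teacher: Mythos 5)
Your overall strategy matches the paper's: assume $H$ is disconnected, locate two adjacent faces of $\Gamma^*$ incident to the two components, and derive a contradiction with the edge-maximality of $G$, the maximality of $E'$, or the fewest-components tie-break, via the configuration of Fig.~\ref{fig:hconec}. However, there is a genuine gap at the pivotal geometric step. You assume that the two faces $\phi_1,\phi_2$ share a boundary edge $e=(a,b)$ \emph{whose two endpoints are both real vertices}, conclude that $e$ is crossing-free, hence $e\in E'$, and then perform a surgery in which deleting $e$ merges the two faces into an empty region through which $e_0=(v_1,v_2)$ can be routed \emph{without crossings}. None of this is justified: the common boundary of two faces of the planarization is a $\Gamma^*$-edge whose endpoints may be dummy vertices, i.e., it is in general only a \emph{segment} of a $G$-edge $e$ that may itself carry crossings. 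The paper's proof is built around exactly this difficulty: the new edge $e_0$ must cross $e$ once (spending $e_0$'s gap), it cannot simply be added to $E'$ unless $e\notin E'$ or $e$ is swapped out, and the case analysis is driven by whether $e\in E'$ and whether $e_1=(v_1,a)$, $e_2=(v_1,b)$ are already present --- not, as you suggest, by whether $e_0$ is blocked by a homotopic parallel edge. Your crossing-free assumption collapses the hard case ($e\in E'$ with $e$ possibly crossed by other edges) into a trivial one, so the argument as written does not go through.

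A secondary omission: you assert the existence of the adjacent face pair $\phi_1,\phi_2$ from connectedness of face boundaries alone. The paper needs more: first the observation that no single face of $\Gamma^*$ can be incident to vertices of both components (else one adds or redraws an edge $(v_1,v_2)$ inside it), and then Lemma~\ref{lem:cycle}(2) (faces not incident to any real vertex have pairwise disjoint simple-cycle boundaries) to conclude that $\mathbb{R}^2$ minus those faces is connected, so a face of $F_1$ must be adjacent to a face of $F_2$. Without this, the existence of your configuration is unproven. To repair the proof you would need to drop the crossing-free assumption on $e$, let $e_0$ cross $e$ with the crossing charged to $e_0$, and redo the dichotomy along the lines of the paper: if $e\notin E'$ then $E'\cup\{e_0\}$ is noncrossing, contradicting maximality of $E'$; if $e\in E'$, use $e_1,e_2$ either to gain an edge (contradicting maximality of $G$) or to keep $a,b$ connected after swapping $e$ for $e_0$ in $E'$ (contradicting the tie-break).
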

\begin{proof}
 Suppose, to the contrary, that $H$ is disconnected. Let $H_1=(V_1,E_1')$ be one component, and let $H_2=(V_2,E_2')$, where $V_2=V\setminus V_1$ and $E_2'=E'\setminus E_1'$.

 Consider the faces in the planarization $\Gamma^*$ of $\Gamma$. Notice that there is no face in $\Gamma^*$ incident to a vertex $v_1\in V_1$ and a vertex $v_2\in V_2$, otherwise we could either add a new edge $(v_1,v_2)$ contradicting the maximality of $G$, or redraw an existing  edge $(v_1,v_2)$ to pass through the interior of this face, contradicting the maximality of $E'$.

 Consequently, we can partition the faces in $\Gamma^*$ into three categories: For $i=1,2$, let $F_i$ be the set of faces incident to a vertex in $V_i$; and let $F_0$ be the set of faces incident to neither $V_1$ nor $V_2$. By Lemma~\ref{lem:cycle}, the region obtained by removing all faces in $F_0$ (i.e., $\mathbb{R}^2\setminus \bigcup_{f\in F_0}$) is connected. Consequently, there exist some faces $f_1\in F_1$ and $f_2\in F_2$ that have a common edge in $\Gamma^*$. Let $v_1\in V_1$ and $v_2\in V_2$ be incident to $f_1\in F_1$ and $f_2\in F_2$. Let $e\in E$ be the edge on the common boundary of $f_1$ and $f_2$, and denote its endpoints by $a,b\in V$.

 We consider three possible edges that we describe together with their drawings (up to homotopy equivalence) with respect to $\Gamma$: Let $e_0=(v_1,v_2)$ be an edge such that it lies in $f_1\cup f_2$; let $e_1=(v_1,a)$ (resp., $e_2=(v_1,b)$) be an edge such that it starts in $f_1$ and closely follows edge $e$ from $f_1$ to its endpoint $a$ (resp., $b$). Refer to Fig.~\ref{fig:hconec}. (If edge $e_0$ (resp., $e_1$ or $e_2$) is homotopic to an existing edge in $\Gamma$, then we can redraw it as described above, and maintain a \oneblip drawing of $G$).

 Note that $e_0\in E$, otherwise we can add $e_0$ to $E$ with the drawing described above, and charge the crossing $e_0\cap e$ to $e_0$, contradicting the maximality of $G$. Note also that $e_1$ and $e_2$ (which may or may not be present in $G$) form a path between $a$ and $b$.  We distinguish two cases:

 \begin{itemize}
  \item Assume $e\notin E'$. We can add $e_0$ to $E'$, contradicting the maximality of $E'$.
  \item Assume $e\in E'$. If neither $e_1$ nor $e_2$ is present in $G$ and $\Gamma$, then we can modify $E$ by replacing $e$ with these edges, contradicting the maximality of $E$. If both $e_1$ and $e_2$ are present in $G$, then they both are in $E'$ by the maximality of $E'$. In this case, we can modify $E'$ by replacing $e$ with $e_0$. Then $a$, $b$, $v_1$, and $v_2$ will be in the same component of $H$, contradicting the tie-breaking rule that $H$ was a maximum crossing-free subgraph with the fewest components. Otherwise we can modify both $E$ and $E'$ by replacing $e$ with $e_1$ or $e_2$ (whichever is not already present in $\Gamma$), and then add edge $e_0$ to $E'$, which contradicts the maximality of $H$.
 \end{itemize}
 All cases lead to a contradiction, which completes the proof.
\end{proof}

In the proof of Lemma~\ref{lem:H}, we shall use Sperner's Lemma~\cite{Sperner28}, a well-known discrete analogue of Brouwer's fixed point theorem.
\begin{lemma} {\rm (Sperner~\cite{Sperner28})} Let $K$ be a geometric simplicial complex in the plane, where the union of faces is homeomorphic to a disk. Assume that each vertex is assigned a color from the set $\{1,2,3\}$ such that three vertices $v_1,v_2,v_3\in \partial K$ are colored 1, 2, and 3, respectively, and for any pair $i,j\in \{1,2,3\}$, the vertices on the path between $v_i$ and $v_j$ along $\partial K$ that does not contain the 3rd vertex are colored with $\{i,j\}$. Then $K$ contains a triangle whose vertices have all three different colors.
\end{lemma}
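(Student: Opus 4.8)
The plan is to prove this two-dimensional version of Sperner's Lemma by a parity (double-counting) argument. Call an edge of $K$ a \emph{door} if its two endpoints receive the colors $1$ and $2$, and call a triangle \emph{rainbow} if its three vertices carry all three colors. For each $2$-simplex $\Delta$ of $K$, let $d(\Delta)$ be the number of doors among the three boundary edges of $\Delta$. The heart of the argument is the elementary observation that $d(\Delta)$ is \emph{odd} precisely when $\Delta$ is rainbow: a short case analysis on the color multiset of the vertices of $\Delta$ shows that if the colors are $\{1,2,3\}$ then $d(\Delta)=1$, if they are $\{1,1,2\}$ or $\{1,2,2\}$ then $d(\Delta)=2$, and in every remaining case (where colors $1$ and $2$ do not both appear) $d(\Delta)=0$.

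First I would sum $d(\Delta)$ over all triangles $\Delta$ of $K$. Since the union of faces of $K$ is homeomorphic to a disk, every door interior to $K$ lies on exactly two triangles and is counted twice, while every door on $\partial K$ lies on exactly one triangle and is counted once. Hence
\[
\sum_{\Delta} d(\Delta) \;\equiv\; \bigl(\text{number of boundary doors}\bigr) \pmod 2.
\]
By the observation above, the left-hand side is congruent modulo $2$ to the number of rainbow triangles, so it suffices to show that the number of boundary doors is odd.

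Next I would analyze the coloring along $\partial K$, which is a simple closed curve split by $v_1,v_2,v_3$ into three arcs, each inheriting its colors from the hypothesis. On the arc joining $v_1$ and $v_2$ (avoiding $v_3$) every vertex is colored in $\{1,2\}$, and the arc starts at color $1$ and ends at color $2$; walking along it, the color switches between $1$ and $2$ an odd number of times, so this arc contributes an odd number of doors. On the arc joining $v_2$ and $v_3$ every vertex is colored in $\{2,3\}$, so no vertex is colored $1$ and the arc contains no door; symmetrically, the arc joining $v_3$ and $v_1$ uses colors $\{1,3\}$ and contains no door. The three arcs partition the boundary edges (the division points $v_1,v_2,v_3$ being shared only as endpoints), so the total number of boundary doors is odd. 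Combined with the displayed congruence, the number of rainbow triangles is odd and in particular at least one exists.

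I expect the main obstacle to be bookkeeping rather than conceptual: one must carefully invoke the disk topology to guarantee that each interior edge is shared by exactly two triangles and each boundary edge by exactly one (validating the double count), and one must confirm that the three boundary arcs partition $\partial K$ cleanly as required. An equivalent ``room-and-door'' reformulation could sidestep some of this: treat each triangle as a room and each door as a passage, note that rainbow triangles are exactly the rooms with a single door whereas non-rainbow triangles have zero or two doors, and follow a maximal path of rooms entering through a boundary door; since there is an odd number of such entrances, at least one path must terminate at a rainbow triangle.
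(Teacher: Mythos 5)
Your proof is correct, and it is the classical parity ("doors") argument for the two-dimensional Sperner lemma. Note, however, that the paper does not prove this statement at all: it is quoted as a known result with a citation to Sperner's 1928 paper, and is used only as a black box inside the proof of Lemma~\ref{lem:htriang}. So there is no in-paper argument to compare against; your write-up supplies a self-contained proof where the authors rely on the literature. The substance of your argument checks out: the case analysis showing that $d(\Delta)$ is odd exactly for rainbow triangles is exhaustive (the multisets $\{1,2,3\}$, $\{1,1,2\}$, $\{1,2,2\}$, and all multisets missing color $1$ or color $2$ cover every possibility); the double count $\sum_\Delta d(\Delta)\equiv(\text{boundary doors})\pmod 2$ is justified because the disk hypothesis guarantees each interior edge bounds exactly two $2$-simplices and each boundary edge exactly one; and the boundary count is right, since the arcs $v_2v_3$ and $v_3v_1$ carry no doors (each is missing one of the colors $1,2$ entirely) while the arc $v_1v_2$ contributes an odd number. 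The one point worth stating explicitly, which you gesture at, is that the three arcs share only the vertices $v_1,v_2,v_3$, so every boundary edge lies on exactly one arc and the three contributions add without overlap. The alternative room-and-door path-following argument you sketch at the end is also standard and would work equally well.
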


We are now ready to prove Lemma~\ref{lem:H}, restated in the following form.

\ShoLong{
}{
 \setcounter{lemma}{0}
}

\begin{lemma}\label{lem:htriang}
 The multigraph $H$ is a triangulation, that is, a plane multi-graph in which every face is bounded by a walk with three vertices and three edges.
\end{lemma}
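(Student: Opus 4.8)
The plan is to argue by contradiction. By Lemma~\ref{lem:h-con} the multigraph $H=(V,E')$ is connected and, being drawn without crossings in $\Gamma[E']$, is a plane multigraph spanning every vertex of $G$. I would first dispose of degenerate faces: a face of $H$ bounded by a walk of length $1$ or $2$ would force either a contractible loop, a pair of homotopic parallel edges, or a place where a new crossing-free edge can be inserted, each of which contradicts one of our standing assumptions (simplicity of $\Gamma$ from Lemma~\ref{lem:simpletop}, the exclusion of homotopic parallel edges, or the edge-maximality of $G$). The same kind of insertion argument rules out a boundary walk that revisits a vertex, so it suffices to assume that some face $f$ is bounded by a simple cycle $C=(v_1,\dots,v_\ell)$ with $\ell\ge 4$ and to derive a contradiction.

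The target contradiction is to produce a crossing-free arc inside $f$ joining two non-consecutive vertices of $C$. Such an arc is a chord $e_0$ that can be added to $\Gamma$ without creating any charged crossing, so $G+e_0$ is again \oneblip; since $e_0$ joins non-consecutive vertices of $C$ it is not homotopic to any existing edge, and its existence contradicts the edge-maximality of $G$ (when its endpoints already happen to be joined, the same chord instead merges two parts of $H$, contradicting the fewest-components tie-break). The only obstruction to such a chord is the arrangement $\mathcal{A}$ obtained by intersecting $f$ with the edges of $E''=E\setminus E'$: a crossing-free chord between $v_i$ and $v_j$ exists precisely when $v_i$ and $v_j$ lie on the boundary of a common cell of $\mathcal{A}$. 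Here is where the \oneblip budget enters, since each edge of $E''$ carries at most one gap and each boundary edge of $C$ (lying in $E'$) carries at most one gap, so the number of crossings that $\mathcal{A}$ can place on $\partial f$ is tightly limited, which in turn limits how the arcs can separate the boundary vertices from one another.

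To turn this counting into a genuine chord I would invoke Sperner's Lemma. Since $C$ is a simple cycle, $\overline{f}$ is a closed disk; triangulate it into a geometric simplicial complex $K$ whose $1$-skeleton contains all arcs of $\mathcal{A}$, so that every crossing of $\mathcal{A}$ and every $v_i$ is a vertex of $K$. Pick three of the boundary vertices, say $v_1,v_2,v_3$ (for $\ell\ge 4$ at least one of the three pairs is non-consecutive), color them $1,2,3$, and color the remaining vertices of $K$ so that the Sperner boundary condition holds---each boundary arc $P_{ij}$ of $C$ between two chosen corners receives only colors from $\{i,j\}$---while interior vertices are colored according to which corner their cell of $\mathcal{A}$ can be reached from by a crossing-free arc in $f$. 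Sperner then forces a tricolored triangle, which lies in a single cell $R$ of $\mathcal{A}$ that is simultaneously crossing-free-reachable from the three corners; concatenating the reaching arcs to two corners forming a non-consecutive pair yields the desired crossing-free chord.

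The main obstacle is exactly the design and verification of this coloring: one must define the interior colors so that (i) the Sperner hypotheses on $\partial K$ are literally satisfied and (ii) a rainbow triangle certifies reachability to two \emph{actual} vertices of $V$---not merely to crossing points or dangling sub-arcs---so that the chord produced has both endpoints in $V$ and can be added while preserving the \oneblip property. This last point is where the one-gap-per-edge accounting behind Theorem~\ref{thm_crossing} must be fed back in: the gap budget along $\partial f$ guarantees that the arcs of $\mathcal{A}$ cannot wall off every corner simultaneously, which is what makes the reachability coloring well defined and forces the rainbow cell to reach genuine vertices.
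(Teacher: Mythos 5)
Your overall architecture (contradiction on a face $f$ of $H$ with boundary cycle of length $\ell\ge 4$, a gap-budget count, and Sperner's Lemma) matches the paper's, but there is a genuine gap at the heart of your plan: you aim to produce a \emph{crossing-free} chord of $f$ between two non-consecutive boundary vertices, and such a chord need not exist. Already for $\ell=4$, a single edge of $E''$ entering $f$ through the side $(v_1,v_2)$ and leaving through $(v_3,v_4)$ splits $f$ into one cell containing $\{v_1,v_4\}$ and one containing $\{v_2,v_3\}$; no cell of the arrangement $\mathcal{A}$ meets two non-consecutive corners, so your claimed equivalence ("a crossing-free chord exists precisely when $v_i,v_j$ share a cell") correctly shows that no crossing-free chord exists, yet the configuration is perfectly consistent with everything assumed so far. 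The paper's argument escapes this by letting the new edge $(v_i,v_j)$ \emph{spend its one gap}: it may cross one edge of $E''$ on the common boundary of two cells, or be rerouted through an existing crossing point in place of one of the two edges crossing there, or pass through a cell of $F_0$ crossing two edges one of which it replaces. These three mechanisms are exactly the paper's properties (P2)--(P4), and they are what the rainbow triangle of Sperner's Lemma is made to violate. Relatedly, your coloring is applied to a triangulation of the disk $\overline{f}$ refined by $\mathcal{A}$ and colors interior vertices by ``which corner their cell can be reached from by a crossing-free arc''; this is not well defined (a cell may be reachable from no corner, or from several), and a tricolored triangle in such a complex does not certify that a single cell reaches two corners. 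The paper instead runs Sperner on a triangulated \emph{dual} graph of the faces of the planarization inside $f$, colored by which boundary vertex $v_i$ each face is incident to; a rainbow triangle then directly exhibits two faces $f_i\in F_i$, $f_j\in F_j$ with $j\notin\{i-1,i,i+1\}$ that are adjacent, share a crossing, or share an $F_0$-neighbor, contradicting (P2)--(P4).

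A second, smaller omission: the Sperner step only applies when the dual complex is homeomorphic to a disk, which requires that each boundary edge $(v_i,v_{i+1})$ of $f$ be incident only to faces in $F_0\cup F_i\cup F_{i+1}$. The complementary case (some side of $f$ touches a face incident to a distant $v_j$) cannot be handled by the Sperner machinery at all; the paper disposes of it by an explicit redrawing argument involving the face of $H$ on the other side of that boundary edge, which increases $|E'|$ or decreases the number of components of $H$. Your proposal does not account for this case. To repair your proof you would need to (i) weaken the target from a crossing-free chord to a chord using at most one gap (or replacing an edge at an existing crossing), (ii) recast the Sperner coloring on the dual of the planarization's faces by incidence to the $v_i$ rather than by reachability, and (iii) add the separate case analysis for when the disk condition fails.
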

\begin{proof}
 Suppose, to the contrary, that $H$ is not a triangulation. Then $H$ has a face $f$ whose boundary walk $w=(v_1,v_2,\ldots, v_m)$ has more than three vertices (i.e., $m\geq 4$). To simplify notation, we assume that $w$ is a simple cycle; this assumption is not essential for the proof.

 Let $P_f$ be the subgraph of $\Gamma^*$ formed by all edges and vertices lying in the interior or on the boundary of $f$; let $V_f$ denote the set of vertices of $P_f$ (it consists of $v_1,\ldots, v_m$ and all crossings in the interior or on the boundary of $f$); and let $F$ denote the set of faces of $\Gamma^*$ that lie in $f$.
 Let $F_0\subseteq F$ be the set of faces that are not incident to any vertex in $\{v_1, \ldots, v_m\}$; and for $i=1,\ldots , m$, let $F_i\subseteq F$ be the set of faces incident to $v_i$.

 We note the following properties of the arrangement of faces in $F$.
  \begin{itemize}
  \item[{\rm (P1)}] A face $f_i\in F_i$ cannot be incident to a vertex $v_j$, $j\notin \{i-1,i,i+1\}$. Indeed, otherwise we could add a new edge $e=(v_i,v_j)$ to $G$ that lies in $f_i$. Note that $\Gamma$ does not contain a homotopic parallel edge, otherwise it would lie in the face $f$, and could be added to $H$, contradicting the maximality of $H$.
  \item[{\rm (P2)}] A face $f_i\in F_i$ cannot be adjacent to a face $f_j\in F_j$, $j\notin \{i-1,i,i+1\}$.
        Indeed, otherwise we can add a new edge $(v_i,v_j)$ to $G$ such that $(v_i,v_j)$ lies in $f_i\cup f_j$ and uses a gap to cross the boundary between these faces (Fig.~\ref{fig:sperner}(a)).
        Again $\Gamma$ cannot contain a homotopic parallel edge, otherwise it would lie in the face $f$, and could be added to $H$, contradicting the maximality of $H$.
  \item[{\rm (P3)}] A vertex $c\in V_f\setminus V$ cannot is incident to two faces $f_i\in F_i$ and $f_j\in F_j$, $j\notin \{i-1,i,i+1\}$ Suppose, to the contrary, that there is such a vertex $c$ (Fig.~\ref{fig:sperner}(b)). Then two edges $e_1,e_2\in E\setminus E'$ cross at $c$. We can replace edge $e_1$ with a new edge $(v_i,v_j)$ that lies in $f_i\cup f_j$ and that crosses edge $e_2$ at $c$. The new edge can be inserted into both $G$ and $H$, contradicting the maximality of $H$. In this case, $\Gamma$ cannot already contain a homotopic parallel edge,  otherwise it could be added to $H$, contradicting the maximality of $H$.
  \item[{\rm (P4)}] A face $f_0\in F_0$ cannot be adjacent to two faces $f_i\in F_i$ and $f_j\in F_j$  $j\notin \{i-1,i,i+1\}$. Suppose to the contrary that there is such a face $f_0$ (Fig.~\ref{fig:sperner}(c)).
        Then two edges $e_1,e_2\in E\setminus E'$ are on the common boundary of the adjacent pairs $f_i,f_0$ and $f_0,f_j$. We can replace edge $e_1$ with a new edge $(v_i,v_j)$ that lies in $f_i\cup f_0\cup f_j$ that crosses edge $e_2$. The new edge can be inserted into both $G$ and $H$, contradicting the maximality of $H$. Again, $\Gamma$ cannot already contain a homotopic parallel edge, otherwise it could be added to $H$, contradicting the maximality of $H$.
 \end{itemize}

 We next distinguish two cases.

 \medskip\noindent {\bf Case 1. For every $i\in \{1,\ldots , m\}$, the edge $(v_i,v_{i+1})$ is incident to faces in $F_0\cup F_i\cup F_{i+1}$ only.}
 We use Sperner's Lemma~\cite{Sperner28} for a triangulation $K$ of the \emph{dual graph} on the faces
 $F_1\cup \cdots\cup F_m$, that we define here. We first create the \emph{standard dual graph} of $F$: The nodes correspond to the faces in $F$; and two nodes are adjacent if and only if the corresponding faces are adjacent in $\Gamma^*$. We then triangulate the standard dual graph as follows. For every crossing $c\in V_f$ in the interior of $f$ is incident to four faces in $F$, and their adjacency graph forms a 4-cycle in the standard dual. By Lemma~\ref{lem:cycle}(2), at least three of those faces are in $F\setminus F_0$. We triangulate the 4-cycle by an arbitrary diagonal between two faces in $F\setminus F_0$. Note that the faces in $F_0$ still form an independent set by Lemma~\ref{lem:cycle}(2). We call the resulting graph the \emph{modified dual graph} of $F$. By Property (P4), every face in $F_0$ is adjacent to at most one side of $f$. Consequently, the modified dual graph is 2-connected, and the neighbors each face $f_0\in F_0$ form a cycle or a path. Finally, remove all nodes corresponding to $F_0$ from the modified dual graph, and triangulate the cycle or path of neighboring nodes arbitrarily to obtain a triangulation $K$. The condition in Case~1 implies that $K$ is a geometric simplicial complex, where the union of faces is homeomorphic to a disk.

 We now define a 3-coloring of $K$ (the coloring need not be proper). Assign color 1 to all faces in $F_1$.  For $i=2,\ldots , m$, assign color 2 to all faces in $F_i\setminus \bigcup_{j<i}F_j$ if $i$ is even, and color 3 if $i$ is odd. Since $m\geq 4$, each of the three colors are used at least once.

\begin{figure}[htbp]
 \centering
 \includegraphics[width=0.9\columnwidth]{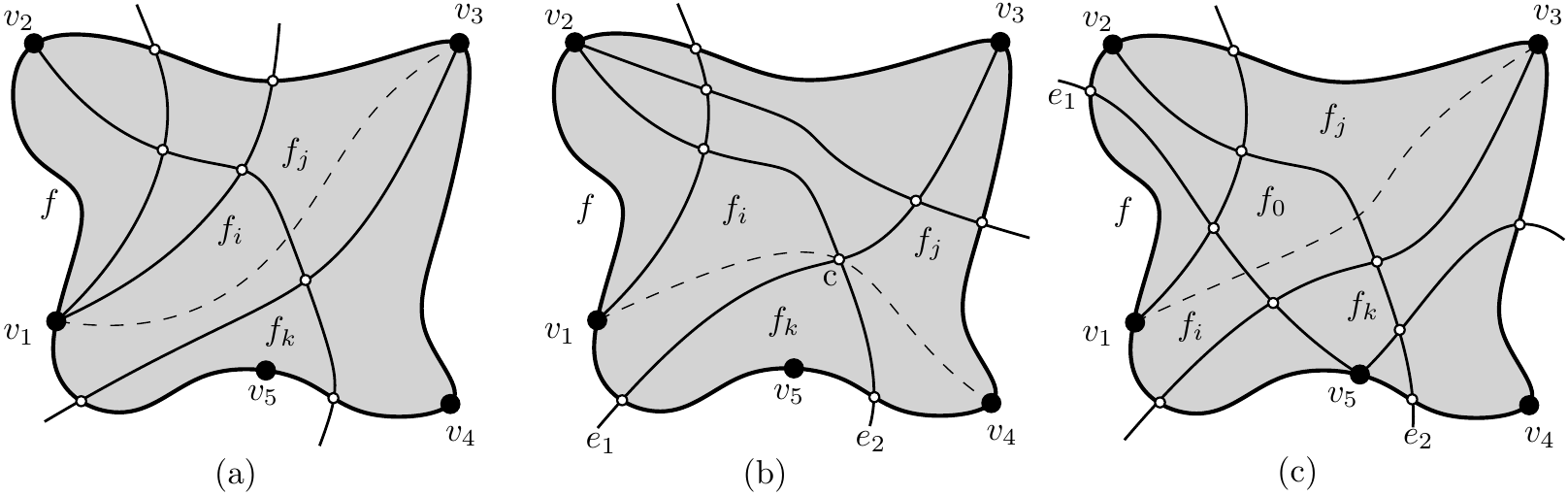}
 \caption{Illustration for the proof of Lemma~\ref{lem:htriang} with $m=5$:
 The dual graph $K$ (which is not shown in the figure) has a triangle whose nodes have
 three different colors, say $f_i\in F_i$, $f_j\in F_j$, and $f_k\in F_k$,
 where $j\notin \{i-1,i+1\}$.
 (a) Faces $f_i$ and $f_j$ are adjacent.
 (b) Vertex $c\in V_f$ is incident to both $f_i$ and $f_j$.
 (c) A face $f_0\in F_0$ is adjacent to both $f_i$ and $f_j$.}\label{fig:sperner}
\end{figure}

 We have seen that $K$ satisfies the conditions of Sperner's Lemma. The Lemma implies that $K$ contains a triangle whose nodes have all three different colors, say $f_i\in F_i$, $f_j\in F_j$, and $f_k\in F_k$. Without loss of generality, assume that $j\notin \{i-1,i+1\}$. (Possibly we have $k\in\{i-1,i+1\}\cap \{j-1,j+1\}$, e.g., $i=1$, $k=2$, and $j=3$.) Consider three cases depending on how the edge $(f_i,f_j)$ in $K$ was created:
 \begin{itemize}
 \item If $f_i$ and $f_j$ are adjacent in $\Gamma^*$, then (P2) is violated.
 \item If a vertex $c\in V_f$ is incident to both $f_i$ and $f_j$, then (P3) is violated.
 \item If a face $f_0\in F_0$ is adjacent to both $f_i$ and $f_j$, then (P4) is violated.
 \end{itemize}
 All three subcases lead to a contradiction.

\medskip\noindent {\bf Case 2. There is an index $i\in \{1,\ldots , m\}$ such that $(v_i,v_{i+1})$ is incident to a face in $F_j$ for some $j\neq 0,i,i+1$.}
 Without loss of generality, we may assume that edge $(v_1,v_m)$ is incident to a face in $F_j$ for some $1<j<m$.
  (Refer to Fig.~\ref{fig:flat} where $m=5$.)
 Note that edge $(v_1,v_m)$ must be incident to some face in $F_j$ for \emph{all} $1\leq j \leq m$;
 otherwise $(v_1,v_m)$ would be incident to two faces, $f_i\in F_i$ and $f_j\in F_j$, $j\notin \{i-1,i,i+1\}$, that are either adjacent to each other or both adjacent to some face $f_0\in F_0$; and then we could add a new edge $(v_i,v_j)$ lying in $f_i\cup f_j$ or $f_i\cup f_0\cup f_j$.

 It follows that there are faces $f_2\in F_2$ and $f_3\in F_3$ that are incident to some point $c\in (v_1,v_m)$ (see Fig.~\ref{fig:flat}(a)); or both are adjacent to some common face $f_0\in F_0$ that is incident to $(v_1,v_m)$ (see Fig.~\ref{fig:flat}(b)).

\begin{figure}[htbp]
 \centering
 \includegraphics[width=0.95\columnwidth]{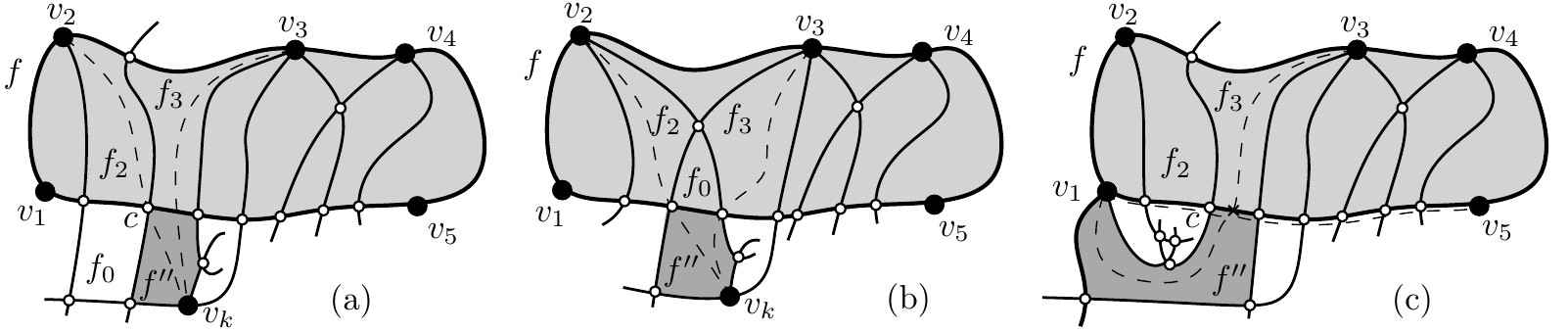}
 \caption{Illustration for the proof of Lemma~\ref{lem:htriang} with $m=5$:
 Edge $(v_1,v_5)$ is incident to some face in $F_j$ for \emph{all} $1\leq j \leq 5$.
 (a) Faces $f_2\in F_2$ and $f_3\in F_3$ that are incident to some point $c\in (v_1,v_m)$.
 (b) Faces $f_2\in F_2$ and $f_3\in F_3$ adjacent to face $f_0\in F_0$ that is incident to $(v_1,v_m)$.
 (c) Face $f''$ is incident to $v_1$.}\label{fig:flat}
\end{figure}

 Consider the face $f'$ of $H$ on the opposite side of $(v_1,v_m)$, and let $F'$ be the set of faces in the planarization $\Gamma^*$ contained in $f'$. Let $f''\in F'$ be a face incident to $c\in (v_1,v_m)$ or adjacent to face $f_0$. By Lemma~\ref{lem:cycle}(2), we may assume that $f''$ is incident to a vertex $v_k$ on the boundary of the face $f'$. It is possible that $v_k=v_1$ or $v_k=v_m$.

\begin{itemize}
 \item If $v_k\notin\{v_1,v_m\}$, we modify $G$, $\Gamma$, and $H$ as follows (Fig.~\ref{fig:flat}(a)--(b)):
 Consider the possible edges $(v_2,v_k)$ and $(v_3,v_k)$ that lie in $f_2\cup f''$ and $f_3\cup f''$, respectively, they each cross $(v_1,v_m)$ and at most one additional edge at $c$ or at a vertex of $f_0$. If $(v_2,v_k)$ or $(v_3,v_k)$ is present in $G$ and $\Gamma$ (as a homotopic copy), it can be redrawn to lie in $f_2\cup f''$ and $f_3\cup f''$, respectively. If $(v_2,v_k)$ or $(v_3,v_k)$ is not present in $G$ and $\Gamma$, we then insert it and remove the edge $(v_1,v_m)$.
 Finally, we can modify $E'$ by replacing $(v_1,v_m)$ with $(v_2,v_k)$ and $(v_3,v_k)$, contradicting the maximality of $E'$.
  \item If $v_k=v_1$, then we modify $G$, $\Gamma$, and $H$ as follows (Fig.~\ref{fig:flat}(c)):
  Add a new edge $(v_1,v_3)$ that lies in $f_3\cup f''$ or $f_3\cup f_0\cup f''$, and crosses $(v_1,v_m)$
  at a point $x$ on the boundary between $f''$ and $f_3$. Then redraw the edges $(v_1,v_m)$ and $(v_1,v_3)$ by exchanging their initial arcs between $v_1$ and $x$, and eliminating the crossing at $x$. The edge $(v_1,v_3)$ was not previously present in $G$, otherwise it would be homotopic to a diagonal $(v_1,v_3)$ of the face $f$ of $H$, contradicting the maximality of $E'$.
  (However, a homotopic copy of the new drawing of edge $(v_1,v_m)$ may be already present in $\Gamma$, in which case, the total number of edges in $G$ remains the same). Modify $E'$ by replacing the edge $(v_1,v_m)$ of face $f$ with the new edges $(v_1,v_3)$ and $(v_1,v_m)$ described here. This contradicts the maximality of $E'$.
  \item If $v_k=v_m$ and $v_{m-1}=v_3$, we make similar changes: We increase $|E'|$ by replacing the edge $(v_1,v_m)$ of $f$ with a new edge $(v_2,v_m)$ and a new drawing of the edge $(v_1,v_m)$.
 \end{itemize}
 All cases lead to a contradiction. Therefore, our initial assumption must be dropped, consequently the multigraph $H$ is a triangulation, as claimed.
\end{proof}
}

\ShoLong{
 \proofLemmaOneSection
}{
}

\ShoLong{
 \subsection{Lower bound constructions}\label{sse:lower-density}
}{
}

We now show that the bound of Theorem~\ref{thm:density} is worst-case optimal.
A 2-planar graph with $n$ vertices and $5n-10$ edges is also \oneblip (see Lemma~\ref{le:2kplanarekblip}). Pach and T\'oth~\cite{DBLP:journals/combinatorica/PachT97} construct such a graph by starting with a plane graph with pentagonal faces (e.g., using nested copies of an icosahedron), and then add all five diagonals in each pentagonal face; see Fig.~\ref{fig:5n-a}.
This construction yields a \oneblip graph with $n$ vertices and $m=5n-10$ edges for all $n\geq 20$, $n\equiv 5 \pmod{15}$.

\begin{figure}[tb]
 \centering
 \subfigure[]{\includegraphics[page=1]{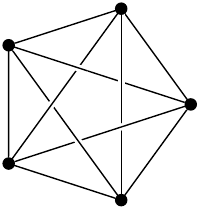}\label{fig:5n-a}}\hfil
 \subfigure[]{\includegraphics[page=2]{figs/5n-new}\label{fig:5n-b}}\hfil
 \subfigure[]{\includegraphics[page=3]{figs/5n-new}\label{fig:5n-c}}
 \ShoLong{
  \hfil\subfigure[]{\includegraphics[page=3]{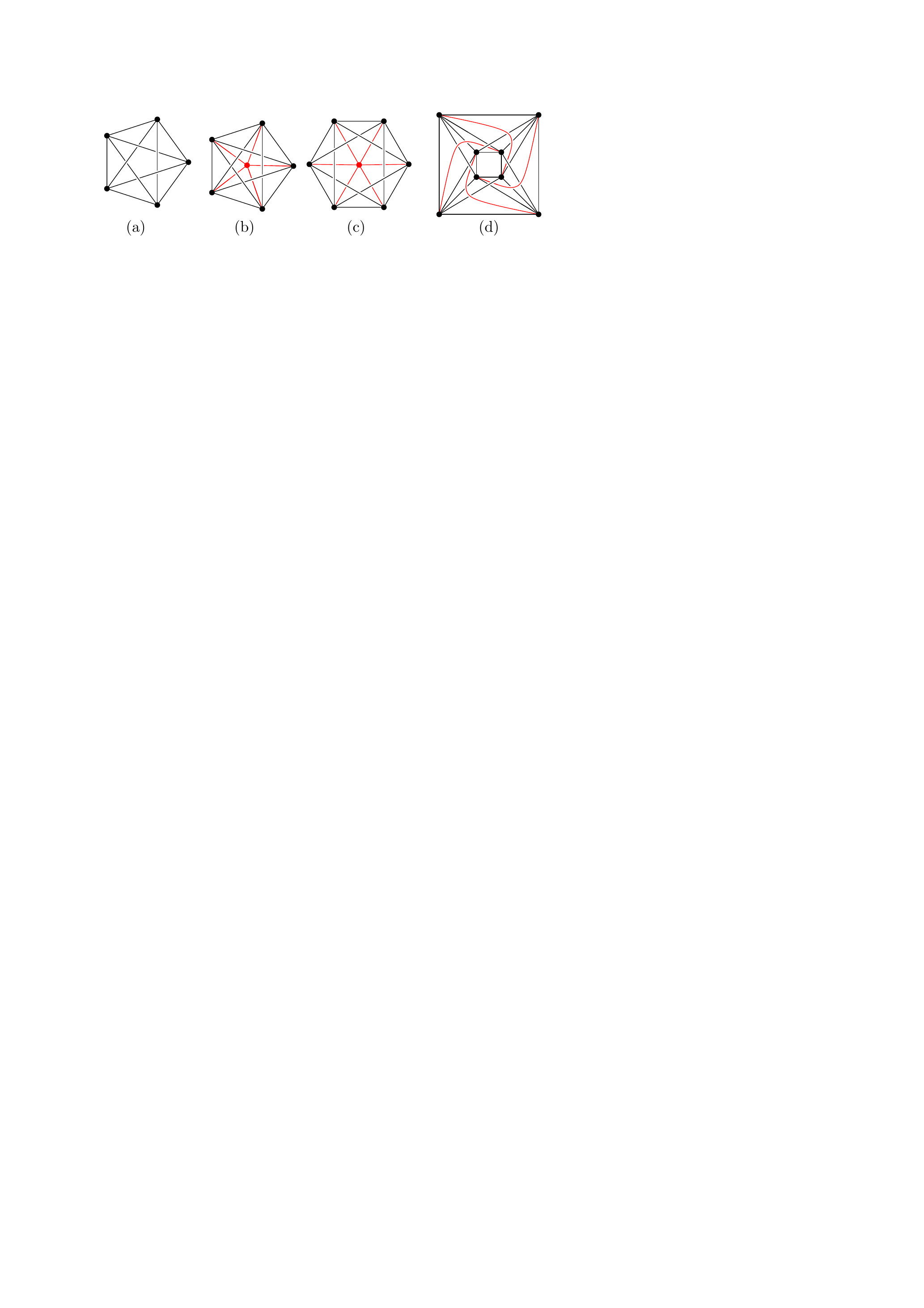}\label{fig:5n-d}}
 }{
 }
 \caption{Patterns that produce \oneblip graphs with $n$ vertices and $5n-\Theta(1)$ edges.}
\end{figure}
We can modify this construction by inserting a new vertex in one or more pentagons, and connecting it to the 5 vertices of the pentagon; see Fig.~\ref{fig:5n-b}. Every new edge crosses exactly one diagonal of the pentagon, so the new crossings can be charged to the new edges. Since every new vertex has degree 5, the equation $m=5n-10$ prevails. By inserting a suitable number of vertices into pentagons, we obtain constructions for $n\in\mathbb{N}$ such that $20\leq n\leq 32$ or $n\geq 38$.
A similar construction is based on hexagonal faces; see Fig.~\ref{fig:5n-c}. Start with a \emph{fullerene}, that is, a 3-regular, plane graph $G_0$ with $n_0$ vertices, 12 pentagon faces, and $n_0/2-10$ hexagon faces (including the external face).
Add diagonals in each face to connect a vertex to their second neighbors (the graph is 2-planar so far); finally insert a new vertex in each face of $G_0$, and connect them to all vertices of that face. We obtain a \oneblip graph $G$. The number of vertices is $n=n_0+12+(n_0/2-10)=\frac{3}{2}n_0+2$, and the number of edges is $m=\frac{3}{2}n_0+10\cdot 12+ 12\cdot (n_0/2-10)= \frac{15}{2}n_0= 5n-10$.
Fullerenes exist for $n_0=20$ and for all even integers $n_0\geq 24$~\cite{DBLP:journals/jcisd/BrinkmannGM12}. This yields a lower bound of $5n-10$ for $n=32$ and for all $n\geq 35$ where $n\equiv 2\mod 3$.
However, similarly to the previous construction, the equation $m=5n-10$ prevails if we \emph{delete}
up to 12 vertices inserted into pentagons. Consequently, the upper bound $5n-10$ in Theorem~\ref{thm:density} is tight for all $n\geq 20$.

\begin{theorem}
 For every integer $n \ge 20$ there exists a \oneblip (simple) graph with $n$ vertices and $5n-10$ edges.
\end{theorem}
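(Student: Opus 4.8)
The plan is to treat the constructions recalled above in a uniform way and reduce the theorem to an interval-covering argument; the geometric work has already been done, so what remains is bookkeeping on the vertex counts. The basic building block is the fullerene-based construction: from a fullerene $G_0$ on $n_0$ vertices (which has exactly $12$ pentagonal faces and $n_0/2-10$ hexagonal faces) one obtains, by adding the prescribed diagonals and a single central vertex in each face, a simple graph $G$ on $n=\tfrac32 n_0+2$ vertices with $m=\tfrac{15}{2}n_0=5n-10$ edges. First I would record that $G$ is simple---the diagonals and the star edges to the new centres never duplicate an existing edge---and that $G$ is \oneblip: the diagonal pattern is $2$-planar and hence \oneblip by Lemma~\ref{le:2kplanarekblip}, while each edge from a centre to a face vertex can be routed with a single gap.

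Next I would exploit the $12$ central vertices inserted into the pentagonal faces. Each of them has degree $5$, so deleting any $j$ of them ($0\le j\le 12$) removes one vertex together with exactly $5$ edges, producing a simple graph on $n-j$ vertices with $5(n-j)-10$ edges; this graph is still \oneblip because \oneblip-ity is closed under taking subgraphs. Hence a single fullerene with parameter $n_0$ realises the density bound $5n-10$ for every integer in the window
$$W(n_0)=\Big[\tfrac32 n_0-10,\ \tfrac32 n_0+2\Big],$$
which consists of $13$ consecutive integers.

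Finally I would assemble the windows. Fullerenes exist for $n_0=20$ and for every even $n_0\ge 24$. The case $n_0=20$ is the dodecahedron (no hexagonal faces) and gives $W(20)=[20,32]$, recovering the dodecahedron-based construction. For even $n_0\ge 24$ the left endpoint $\tfrac32 n_0-10$ grows in steps of $3$ as $n_0$ increases by $2$, while each window spans $13$ integers; consecutive windows therefore overlap, and together with $W(20)$ they cover every integer $n\ge 20$. For a given $n\ge 20$ I would pick a fullerene whose window contains $n$ and delete the appropriate number of pentagon centres, obtaining the desired \oneblip simple graph with $5n-10$ edges.

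The step that needs the most care is this last covering argument---in particular the junction near $n=32$, where the dodecahedron window $[20,32]$ must meet the first hexagon-bearing window $W(24)=[26,38]$ even though no fullerene exists on $n_0=22$ vertices---together with the routine but necessary checks that each elementary operation (diagonal insertion, central-vertex insertion, and deletion of a pentagon centre) preserves both simplicity and \oneblip-ity. Beyond these, the argument is essentially the single identity $m=5n-10$ propagated through the constructions.
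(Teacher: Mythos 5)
Your proposal is correct and follows essentially the same route as the paper: the paper's own argument for all $n\ge 20$ is precisely the fullerene construction with $n=\tfrac32 n_0+2$ and $m=5n-10$, combined with deleting up to $12$ of the degree-$5$ pentagon centres, using fullerenes for $n_0=20$ and all even $n_0\ge 24$. You merely make explicit the interval-covering bookkeeping (the windows $\bigl[\tfrac32 n_0-10,\ \tfrac32 n_0+2\bigr]$ and the overlap of $[20,32]$ with $[26,38]$) that the paper leaves implicit, which is a welcome clarification but not a different proof.
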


\newcommand{\mentionLowerBounds}{
 We mention a third, slightly weaker construction, which is based on a sequence of nested squares. Fig.~\ref{fig:5n-d} shows how to add 16 edges between two consecutive squares such that the 16 crossings are assigned to distinct edges. We can add two diagonals in the external face and the innermost square. Using $s$ squares, we have $n=4s$, and $m=4s+16(s-1)+2\cdot 2=20s-12=5n-12$.  In particular, for $s=2$ this yields a drawing of $K_{8}$; see Fig.~\ref{fig:k8}.
}
\ShoLong{
 \mentionLowerBounds
}{
}

If we allow \oneblip muligraphs (with nonhomotopic parallel edges in a \oneblip drawing), then we can construct smaller configurations for which the upper bound $5n-10$ of Theorem~\ref{thm:density} is tight. Start with a regular polygon $P_0$ with $n_0\geq 5$ vertices. Subdivide the interior and the exterior of $P_0$ independently into one pentagon and $n_0-5$ triangle faces using $n_0-5$ diagonals. In each of the two pentagons, add five edges as shown in Fig.~\ref{fig:charge}(left). In each triangle, add a new vertex and six new edges as shown in Fig.~\ref{fig:charge}(right).
We obtain a \oneblip drawing of a multigraph with $n=n_0+2(n_0-5)=3n_0-10$ vertices and $n_0+ 2(n_0-5)+2\cdot 5+2(n_0-5)\cdot 6=15n_0-60=5n-10$ edges for all $n\geq 5$, $n\equiv 2\mod 3$. By inserting a new vertex in one or two pentagons, and connecting it to the 5 vertices of the pentagon as in Fig.~\ref{fig:5n-b}, the lower bound extends for all integers $n\geq 5$. We summarize our lower bound for multigraphs in the following theorem.

\begin{theorem}
 For every integer $n \ge 5$, there exists a \oneblip multigraph with $n$ vertices and $5n-10$ edges.
\end{theorem}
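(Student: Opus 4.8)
The plan is to realize the extremal density $5n-10$ by a purely local, gadget-based drawing, exactly as sketched in the paragraph preceding the statement, and then to verify the three requirements of a \oneblip multigraph: the edge count, the existence of a valid \oneblipa, and the absence of homotopic parallel edges. First I would fix an integer $n_0 \ge 5$ and draw a convex polygon $P_0$ on $n_0$ vertices. Triangulating the bounded and the unbounded sides of $P_0$ independently, each by $n_0-5$ diagonals so that exactly one pentagonal face survives on each side, produces a plane structure whose edges are pairwise noncrossing and whose faces are $2(n_0-5)$ triangles plus two pentagons. Into each pentagon I insert its five diagonals (Fig.~\ref{fig:charge}(left)), and into each triangular face I insert one new vertex joined to the three corners by six edges forming three nonhomotopic parallel pairs (Fig.~\ref{fig:charge}(right)). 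Since every added edge stays strictly inside the single face into which it is drawn, all crossings are confined to the interior of individual faces.

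The counting is then immediate: the vertices are the $n_0$ corners of $P_0$ together with one new vertex per triangle, so $n = n_0 + 2(n_0-5) = 3n_0-10$; the edges are the $n_0$ boundary edges, the $2(n_0-5)$ triangulating diagonals, the $2\cdot 5$ pentagon diagonals, and the $2(n_0-5)\cdot 6$ triangle edges, totalling $15n_0-60 = 5n-10$. For the \oneblip property I would invoke Property~\ref{pr:pseudoarboricity}: because crossings never leave a face, the crossing graph $C(\Gamma)$ is the disjoint union of the crossing graphs of the individual gadgets, and the pseudoarboricity of a disjoint union is the maximum of those of its components, so it suffices to check each gadget separately. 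The five diagonals of a pentagon cross in a pentagram whose crossing graph is a single $5$-cycle, hence a pseudoforest; the triangle gadget has only a bounded, fixed crossing pattern among its six edges, which one checks once and for all to be a pseudoforest (equivalently, one exhibits a gap assignment giving each edge at most one gap). Nonhomotopy holds by design, since the two parallel edges of each triangle pair bound a digon containing a corner of the triangle, and no other parallel classes are created.

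This base construction realizes $5n-10$ whenever $n = 3n_0-10$, that is for every $n \ge 5$ with $n \equiv 2 \pmod 3$. To reach the remaining residues I would reuse the pentagon-insertion move of Fig.~\ref{fig:5n-b}: placing a new degree-$5$ vertex inside a pentagon and joining it to the five corners adds one vertex and five edges, so the equation $m=5n-10$ is preserved, and each new spoke crosses exactly one pentagon diagonal, a crossing that can be charged to the (otherwise gap-free) spoke without disturbing the existing gaps of the diagonals; the enlarged crossing graph is still the $5$-cycle with pendant edges attached, hence still a pseudoforest, so the drawing remains \oneblip. Performing this move in $0$, $1$, or $2$ of the two available pentagons shifts $n$ by $0$, $1$, or $2$ while keeping $m=5n-10$, which bridges all three residue classes and yields every integer $n \ge 5$.

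The step I expect to be the real work is the verification of the triangle gadget: one must confirm that six edges can be routed from the inserted vertex to the three corners as three nonhomotopic parallel pairs so that simultaneously the resulting crossing pattern is a pseudoforest \emph{and} the parallel pairs are genuinely nonhomotopic. This is precisely where the multigraph construction beats the simple-graph bound, and where a careless routing would either create a homotopic pair or a crossing graph of pseudoarboricity $2$. Everything else—the Euler-type counting and the pentagon analysis—is routine once Property~\ref{pr:pseudoarboricity} is in hand.
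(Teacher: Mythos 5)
Your construction is exactly the one the paper uses: triangulate both sides of an $n_0$-gon leaving one pentagon on each side, fill pentagons with their five diagonals and triangles with a new vertex plus three nonhomotopic parallel pairs, obtain $n=3n_0-10$ and $m=5n-10$, and cover the remaining residues mod $3$ by inserting a degree-$5$ vertex into one or two pentagons. The extra verification you supply (pseudoarboricity of the per-face crossing graphs via Property~\ref{pr:pseudoarboricity}, and the nonhomotopy of the parallel pairs) is sound and only makes explicit what the paper delegates to Fig.~\ref{fig:charge}.
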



\section{Relationship between \kblip graphs and other families of beyond-planar graphs}
\label{sec:relationship}
In this section we prove the following theorem.

\begin{theorem}\label{thm:relationships}
 For every integer $k \ge 1$, the following relationships hold.
 $$ (2k)\mbox{\sc -planar}  \subsetneq k\mbox{\sc -gap-planar} \subsetneq (2k+2)\mbox{\sc -quasiplanar}$$
\end{theorem}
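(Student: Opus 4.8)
The plan is to treat the two inclusions separately, and for each to prove the containment by a short crossing-counting argument built on Theorem~\ref{thm_crossing} and Property~\ref{pr:edges-crossings}, and then to prove strictness by exhibiting a separating graph. First I would establish $(2k)\text{-planar}\subseteq k\text{-gap-planar}$. Take any $(2k)$-planar drawing $\Gamma$, so that every edge is crossed at most $2k$ times. For an arbitrary $E'\subseteq E$, summing over the edges of $E'$ the number of their crossings inside $\Gamma[E']$ counts each crossing of $\Gamma[E']$ exactly twice, so $\Gamma[E']$ has at most $\tfrac12\cdot 2k|E'|=k|E'|$ crossings; by Theorem~\ref{thm_crossing} the drawing is $k$-gap-planar. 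For the other containment I would prove $k\text{-gap-planar}\subseteq(2k+2)\text{-quasiplanar}$ by contradiction: in a $k$-gap-planar drawing, suppose $2k+2$ edges pairwise cross and let $E'$ be this set. Then $\Gamma[E']$ has $\binom{2k+2}{2}=(k+1)(2k+1)$ crossings, whereas Property~\ref{pr:edges-crossings} bounds this by $k|E'|=2k(k+1)$. Since $(k+1)(2k+1)>2k(k+1)$, this is impossible, so no $2k+2$ edges pairwise cross and the drawing is $(2k+2)$-quasiplanar.

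The two inclusions above are immediate; the real work lies in strictness. For $(2k)\text{-planar}\subsetneq k\text{-gap-planar}$, I would construct a graph admitting a drawing in which a single edge $e$ is crossed more than $2k$ times while every other edge is crossed at most once, so that assigning each crossing of $e$ to the \emph{other} participating edge produces a valid $k$-gap assignment (legal since $k\ge1$); hence the graph is $k$-gap-planar. To force non-$(2k)$-planarity I would make the crossings of $e$ \emph{unavoidable}: embed $e=(s,t)$ so that $s$ and $t$ are separated by $2k+1$ nested cycles belonging to a rigid, $3$-connected planar skeleton, forcing some edge to be crossed more than $2k$ times in every drawing. For $k\text{-gap-planar}\subsetneq(2k+2)\text{-quasiplanar}$, I would exhibit a sufficiently dense $(2k+2)$-quasiplanar graph: by Theorem~\ref{thm:kdensity} any graph with more than $O(\sqrt{k}\,n)$ edges fails to be $k$-gap-planar, so it suffices to draw a graph exceeding this edge bound while keeping every family of pairwise crossing edges of size at most $2k+1$ (for $k=1$ a dense $3$-quasiplanar graph with more than $5n-10$ edges already works).

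The main obstacle is the strictness of the first inclusion. Because the maximum edge densities of $(2k)$-planar and $k$-gap-planar graphs coincide asymptotically (both $\Theta(\sqrt{k}\,n)$), no density argument can separate them, so the separating example must instead \emph{concentrate} crossings: it needs a small total number of crossings, so as to stay $k$-gap-planar, yet force some single edge to carry more than $2k$ of them in every drawing. Turning the intuitive picture ``nested cycles separate $s$ from $t$, so $e$ must cross them all'' into a proof valid for \emph{all} drawings --- including those in which the skeleton itself is drawn with crossings, so that the naive Jordan-curve separation argument no longer applies --- is the delicate step, and it will require exploiting the rigidity of the $3$-connected planar embedding carefully. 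By contrast, the strictness of the second inclusion is comparatively routine once a dense enough $(2k+2)$-quasiplanar construction is identified.
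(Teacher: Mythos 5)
Your two containments are correct and match the paper's in substance: the paper proves that every $(2k)$-planar drawing is \kblip by a direct Hall's-theorem argument (Lemma~\ref{le:2kplanarekblip}), which is your double count packaged through Theorem~\ref{thm_crossing}, and your computation $\binom{2k+2}{2}=(k+1)(2k+1)>2k(k+1)=k|E'|$ is exactly the intended proof of Lemma~\ref{le:quasiplanar}. The problems are in the two strictness claims, where you correctly locate the work but do not supply it. For $(2k)\mbox{\sc -planar}\subsetneq k\mbox{\sc -gap-planar}$ you propose nested cycles in a $3$-connected planar skeleton separating the endpoints of a distinguished edge, and you yourself flag that the Jordan-curve argument breaks when the skeleton is drawn with crossings. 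That flagged step is the gap, not a detail: $3$-connectivity pins down the embedding only of \emph{planar} drawings of the skeleton, and a $(2k)$-planar drawing may draw the skeleton with up to $k|E|$ crossings, destroying the nesting. The paper's Lemma~\ref{le:oneblipnotkplanar} closes exactly this hole with a device you would need to reinvent: each skeleton (wheel) edge is replaced by $t=5(k+1)^4$ edge-disjoint length-$2$ paths, and a counting argument over the $\binom{t}{k+1}^m$ ways of selecting $k+1$ paths per bundle shows that some selection is crossing-free between bundles (otherwise the crossing count would exceed the $k$-planar budget); only then does the unique embedding of the $3$-connected wheel apply and force some edge to cross $k+1$ others. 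Without this (or an equivalent) mechanism your separating example is not proved to be non-$(2k)$-planar.

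For the second strictness your route is genuinely different from the paper's and, as stated, does not close. You want a $(2k+2)$-quasiplanar graph whose density exceeds the \kblip bound of Theorem~\ref{thm:kdensity}, i.e.\ exceeds $\max(5.58\sqrt{k},\,17.17)\cdot n$. For $k=1$ this can be rescued via the sharper $5n-10$ bound together with a known dense $3$-quasiplanar construction, but for, say, $2\le k\le 8$ you would need a $(2k+2)$-quasiplanar graph with more than $17.17n$ edges, and the natural constructions (e.g.\ all chords of cyclic length at most $2k+1$ in convex position, giving $(2k+1)n\le 17n$ edges) fall short; you neither exhibit a construction nor argue one exists, so calling this direction ``routine'' is not justified. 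The paper avoids density entirely: Lemma~\ref{le:threequasiplanar} takes $K_{3,3}$ with each edge replaced by $t=19k$ edge-disjoint length-$2$ paths, observes that the result is $3$-quasiplanar (hence $(2k+2)$-quasiplanar), and shows by counting over the $t^9$ induced subdivisions of $K_{3,3}$ that the crossing number is at least $t^2=19kt>18kt=k|E|$, violating Property~\ref{pr:edges-crossings}. This works uniformly for all $k$ and yields the stronger statement that even a $3$-quasiplanar graph can fail to be \kblip---something a density argument cannot give, since $3$-quasiplanar graphs have $O(n)$ edges and therefore sit below the \kblip density threshold once $k$ is large.
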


\noindent We begin by showing the following.

\begin{lemma}\label{le:quasiplanar}
 For all $k \ge 1$, every \kblip drawing is $(2k+2)$-quasiplanar.
\end{lemma}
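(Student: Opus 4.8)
The plan is to argue by contradiction, extracting a large set of pairwise crossing edges and then charging their mutual crossings against the per-edge budget captured by Property~\ref{pr:edges-crossings}. Suppose $\Gamma$ is a \kblip drawing that is \emph{not} $(2k+2)$-quasiplanar. By definition of quasiplanarity this means $\Gamma$ contains a set $E'=\{e_1,\dots,e_{2k+2}\}$ of $2k+2$ edges that pairwise cross; equivalently, $C(\Gamma)$ contains a clique of size $2k+2$. Since each of the $\binom{2k+2}{2}$ pairs crosses at least once, the subdrawing $\Gamma[E']$ contains at least $\binom{2k+2}{2}$ crossings.

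Next I would invoke Property~\ref{pr:edges-crossings} (the only-if direction of Theorem~\ref{thm_crossing}), applied to this particular $E'$, which bounds the number of crossings in $\Gamma[E']$ by $k\cdot|E'| = k(2k+2)$. Chaining the two bounds gives $\binom{2k+2}{2}\le k(2k+2)$.

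The remaining step is a one-line arithmetic check: $\binom{2k+2}{2}=(k+1)(2k+1)$ while $k(2k+2)=2k(k+1)$, so the displayed inequality forces $(k+1)(2k+1)\le 2k(k+1)$, i.e.\ $2k+1\le 2k$ after dividing by $k+1>0$. This is false for every $k\ge 1$, yielding the contradiction. Hence no set of $2k+2$ pairwise crossing edges exists, and $\Gamma$ is $(2k+2)$-quasiplanar.

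I do not expect any genuine obstacle here; the argument is essentially a counting inequality. The only two points meriting a moment's care are: (i) making sure the intended notion of $(2k+2)$-quasiplanarity is ``no $2k+2$ pairwise crossing edges,'' so that its failure really does hand us a $(2k+2)$-clique in $C(\Gamma)$; and (ii) observing that if some pairs in $E'$ happen to cross more than once the crossing count only grows, so the lower bound $\binom{2k+2}{2}$, and therefore the contradiction, is unaffected. I would phrase everything directly through Property~\ref{pr:edges-crossings} rather than re-deriving a gap assignment, since that property already packages exactly the counting the argument needs.
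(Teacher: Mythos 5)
Your argument is correct, and it is exactly the counting the paper intends: the lemma is stated without an explicit proof because it follows immediately from Property~\ref{pr:edges-crossings}, since $2k+2$ pairwise crossing edges would force at least $\binom{2k+2}{2}=(k+1)(2k+1)>2k(k+1)=k\cdot(2k+2)$ crossings in the subdrawing they span. Your two side remarks (the meaning of $(2k+2)$-quasiplanarity and the harmlessness of multiple crossings per pair) are both apt and consistent with the paper's definitions.
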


\noindent We also need to show that for every $k\in \mathbb{N}$ there is
a $(2k+2)$-quasiplanar graph that is not \kblip. We prove a stronger statement:

\begin{lemma}\label{le:threequasiplanar}
 For all $k \ge 1$, there is a 3-quasiplanar graph $G_k$ that is not \kblip.
\end{lemma}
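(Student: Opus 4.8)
The plan is to exhibit, for each $k\ge 1$, a complete bipartite graph $G_k=K_{3,N}$ with $N$ chosen large enough in terms of $k$, and to verify the two required properties separately: that $G_k$ is 3-quasiplanar, and that $G_k$ is not \kblip. The point of using $K_{3,N}$ is that it combines a quadratic crossing number with only linearly many edges, while still admitting a drawing whose crossing graph is triangle-free.

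For the non-\kblip direction I would invoke Theorem~\ref{thm_crossing} with the full edge set $E'=E$. Since a graph is \kblip only if it admits \emph{some} \kblip drawing, it suffices to show that \emph{every} drawing $\Gamma$ of $G_k$ already violates the condition of Theorem~\ref{thm_crossing} for $E'=E$, i.e.\ contains more than $k\cdot|E|$ crossings. This is immediate from the crossing number: any drawing of $K_{3,N}$ contains at least $\mathrm{cr}(K_{3,N})=\lfloor N/2\rfloor\lfloor(N-1)/2\rfloor=\Theta(N^2)$ crossings (the crossing number of $K_{3,N}$ is classically known; a quadratic lower bound is all that is needed), whereas $|E(K_{3,N})|=3N$. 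Hence as soon as $\lfloor N/2\rfloor\lfloor(N-1)/2\rfloor>3kN$, which holds once $N$ is large enough in terms of $k$ (for instance $N\ge 12k+3$), every drawing has more than $k|E|$ crossings, so no drawing is \kblip and therefore $G_k$ is not \kblip.

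The substantive part is to show that $K_{3,N}$ is 3-quasiplanar, i.e.\ to produce a drawing whose crossing graph is triangle-free. I would construct one explicitly. Write the part of size three as $\{a,b,c\}$ and the part of size $N$ as $v_1,\dots,v_N$. Fix a disk $D$, place $v_1,\dots,v_N$ on $\partial D$ and $b$ at the centre of $D$, and draw each edge $(v_i,b)$ as a radius; these radii are pairwise noncrossing. Place $a$ and $c$ in the exterior of $D$ and route the stars $\{(v_i,a)\}_i$ and $\{(v_i,c)\}_i$ entirely in $\mathbb{R}^2\setminus\mathrm{int}(D)$, each star drawn crossing-free (viewing the exterior of $D$ on the sphere as another disk, each star is simply a crossing-free bundle of arcs from the boundary points $v_i$ to an interior apex). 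I would then observe that the crossing graph of this drawing is bipartite: no two edges of the same star cross since they share an apex; an edge of the $b$-star lies in the interior of $D$ while an edge of the $a$- or $c$-star lies in the exterior, so such a pair can meet only on $\partial D$, hence only at a common endpoint $v_i$, and thus never crosses. Consequently every crossing is between an $a$-star edge and a $c$-star edge, the crossing graph is a subgraph of the complete bipartite graph between these two stars, and in particular it has no triangle. Therefore the drawing has no three pairwise crossing edges and $G_k$ is 3-quasiplanar.

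The step I expect to require the most care is this explicit drawing — specifically the claim that the two outer stars can be realized crossing-free in the exterior of $D$ while the interior star stays cleanly separated, so that the crossing graph really does become bipartite. Once bipartiteness is established, triangle-freeness (and hence 3-quasiplanarity) is automatic, and the non-\kblip property then follows cheaply from the quadratic crossing number combined with Theorem~\ref{thm_crossing}. Note also that the argument is robust: only a lower bound $\mathrm{cr}(K_{3,N})=\Omega(N^2)$ is used, so the exact value of the crossing number is not essential.
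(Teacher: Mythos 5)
Your proof is correct, but it takes a genuinely different route from the paper. The paper builds $G_k$ by taking $K_{3,3}$ and replacing each edge with $t=19k$ edge-disjoint paths of length~$2$; quasiplanarity is then inherited from a one-crossing drawing of $K_{3,3}$ (crossings occur only between paths shadowing the unique crossing pair), and the lower bound on the crossing number of the blown-up graph is obtained by a counting argument over the $t^9$ embedded subdivisions of $K_{3,3}$, each forcing a crossing. You instead take $G_k=K_{3,N}$ directly, which lets you quote the classical crossing number $\mathrm{cr}(K_{3,N})=\lfloor N/2\rfloor\lfloor (N-1)/2\rfloor$ (indeed the paper itself relies on $\mathrm{cr}(K_{3,12})=30$ elsewhere, and only the easy $\Omega(N^2)$ lower bound is needed, e.g.\ by averaging over $K_{3,3}$-subgraphs), so the non-\kblip direction reduces to the single inequality $\lfloor N/2\rfloor\lfloor(N-1)/2\rfloor>3kN$ via Property~\ref{pr:edges-crossings}; your choice $N\ge 12k+3$ checks out. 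The burden then shifts to exhibiting a $3$-quasiplanar drawing of $K_{3,N}$, and your construction is sound: the $b$-star lives in the closed disk, the $a$- and $c$-stars in the closed exterior (each internally noncrossing, as a star to boundary points of a topological disk always can be), so every crossing is between an $a$-edge and a $c$-edge, the crossing graph is bipartite, hence triangle-free. The trade-off is that the paper's construction needs no explicit quasiplanar drawing but pays with the $t^9/t^7$ counting argument, whereas yours needs the explicit drawing but gets the crossing-number lower bound for free; your version also yields a somewhat more natural witness graph ($K_{3,N}$) and even a \emph{bipartite} crossing graph rather than merely a triangle-free one.
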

\begin{proof}
 Let $k\in \mathbb{N}$. We construct a graph $G_k=(V,E)$ as follows.
 Start with $K_{3,3}$ and replace each edge by $t=19k$ edge-disjoint paths of length $2$.
 Note that the total number of edges is $|E|=9\cdot 2t=18t$.
 Graph $G_k$ is $3$-quasiplanar. Since ${\rm cr}(K_{3,3})=1$, it admits
 a drawing with precisely one crossing. The paths of length 2 can be drawn close
 to the edges of $K_{3,3}$ such that two paths cross if and only if the two
 corresponding edges of $K_{3,3}$ cross. Consequently, $G_k$ admits a drawing $\Gamma_0$
 in which any two crossing edges are part of two paths that correspond to two
 crossing edges of $K_{3,3}$, which in turn implies that no three edges in $\Gamma_0$
 pairwise cross.

 Suppose that $G_k$ admits a \kblip drawing $\Gamma$.
 Then the total number of crossings is at most $k|E|=18kt$.
 We derive a contradiction by showing that ${\rm cr}(G_k)\geq 19kt$.
 If we choose one of the $t$ paths for each of the $9$ edges of $K_{3,3}$ independently,
 then we obtain a subdivision of $K_{3,3}$, therefore there is a crossing between at least
 one pair of paths. There are $t^9$ ways to choose a path for each of the 9 edges
 of $K_{3,3}$. Each crossing between two paths in $\Gamma$ is counted $t^{9-2}=t^7$ times.
 Consequently, the total number of crossings in $\Gamma$ is at least $t^2=19kt$.
\end{proof}

We now show that every $(2k)$-planar drawing is \kblip. We note that a similar result can be  derived from~\cite{DBLP:conf/fun/BruckdorferK12} (Lemma 10), but only for the case $k=1$. A bipartite graph with vertex sets $A$ and $B$ is denoted as $H=(A,B,E)$.  A \emph{matching} from $A$ into $B$ is a set $M \subseteq E$ such that every vertex in $A$ is incident to exactly one edge in $M$ and every vertex in $B$ is incident to at most one edge in $M$. The \emph{neighborhood} of a subset $A' \subseteq A$ is the set of all vertices in $B$ that are adjacent to a vertex in $A'$, and is denoted as $N(A')$.

\begin{lemma}\label{le:2kplanarekblip}
 For all $k \ge 1$, every $(2k)$-planar drawing is \kblip.
\end{lemma}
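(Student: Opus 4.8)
The plan is to reduce the statement to the combinatorial characterization already established in Theorem~\ref{thm_crossing}, so that the geometric hypothesis of $(2k)$-planarity enters only through an elementary double-counting bound. Recall that a drawing is \kblip precisely when every edge subset $E'\subseteq E$ induces a subdrawing $\Gamma[E']$ with at most $k\cdot|E'|$ crossings. Thus it suffices to verify this crossing bound for an arbitrary $(2k)$-planar drawing $\Gamma$ and every $E'\subseteq E$.

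First I would fix $E'\subseteq E$ and count the crossings in $\Gamma[E']$ by charging each crossing to the two edges responsible for it. Because $\Gamma$ is $(2k)$-planar, every edge $e$ is crossed at most $2k$ times in $\Gamma$, and deleting edges can only remove crossings, so $e$ is crossed at most $2k$ times in $\Gamma[E']$ as well. Summing over the edges of $E'$, the total number of edge–crossing incidences is at most $2k\,|E'|$; since each crossing contributes exactly two such incidences, the number of crossings in $\Gamma[E']$ is at most $k\,|E'|$. Applying Theorem~\ref{thm_crossing} then yields that $\Gamma$ is \kblip.

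An alternative, self-contained route uses the matching language set up just above the lemma: build the bipartite graph $H$ with the crossings of $\Gamma$ on one side and $k$ copies of each edge on the other, joining each crossing to all copies of its two responsible edges, so that \kblip assignments correspond to matchings saturating the crossing side. For any set $A'$ of crossings with responsible-edge set $E(A')$, one has $|N(A')| = k\,|E(A')|$, while the same incidence count gives $|A'|\le k\,|E(A')|$; hence Hall's condition $|A'|\le |N(A')|$ holds and the desired assignment exists. This is essentially a specialization of the proof of Theorem~\ref{thm_crossing}, which is why I would prefer to simply invoke that theorem.

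I do not expect a genuine obstacle here; the lemma is easy once Theorem~\ref{thm_crossing} is available. The one point requiring care is that the crossing bound must be checked for every subset $E'$, not merely for $E$ itself, since Theorem~\ref{thm_crossing} demands the local (subset) condition; fortunately the per-edge bound of $2k$ crossings is inherited by every subdrawing, so the double-counting argument applies verbatim to each $E'$. I would also note that one should count crossings as points (with multiplicity) rather than as edges of the crossing graph, so that the argument remains valid even if two edges happen to cross more than once, as the incidence count is unaffected by such multiplicities.
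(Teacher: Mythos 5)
Your proposal is correct, and your ``alternative, self-contained route'' is essentially the paper's own proof: the paper builds exactly the bipartite graph $H$ with one vertex per crossing and $k$ copies of each edge, and verifies Hall's condition by the degree count $2k|A'| \le |E'| \le 2k|B'|$. Your preferred first route---verifying the subset condition of Theorem~\ref{thm_crossing} by double-counting edge--crossing incidences---is an equally valid and slightly slicker packaging of the same argument, and your closing remarks (checking the bound for every subset $E'$, and counting crossings as points rather than as crossing-graph edges) address the right points of care.
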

\begin{proof}
 let $k\in \mathbb{N}$, $k\geq 1$, let $G$ be a $(2k)$-planar graph, and let $\Gamma$ be a $(2k)$-planar drawing of $G$. Let $H = (A \cup B,E_H)$ be a bipartite graph obtained as follows. The set $A$ has a vertex $a_{e,f}$ for each crossing in $\Gamma$ between two edges $e$ and $f$ of $G$. For each edge $e$ of $G$ there are $k$ vertices $b^1_e,\dots,b^{k}_e$ in $B$.  For every pair of edges $e, f$ of $G$ that cross in $\Gamma$, graph $H$ contains edges  $(a_{e,f},b^1_e),\dots,(a_{e,f},b^{k}_e)$ and $(a_{e,f},b^1_{f}),\dots,(a_{e,f},b^{k}_{f})$ in $H$. Notice that if $H$ admits a matching of $A$ in $B$, then each crossing of $\Gamma$ between an edge $e$ and an edge $f$ of $G$ can be assigned to either $e$ or $f$, and no edge of $G$ is assigned with more than $k$ crossings.

 Consider any subset $A'$ of $A$, and let $B' = N(A')$ be the neighborhood of $A'$ in $B$. We claim that $|A'| \le |B'|$.  Let $E' \subseteq E_H$ denote the edges between $A'$ and $B'$.  By construction every vertex in $A$ has degree $2k$, and hence $|E'| \ge 2k |A'|$.  On the other hand, every vertex in $B$ has degree at most $2k$ as every edge of $G$ has at most $2k$ crossings, and hence $|E'| \le 2k |B'|$.  Hence $|A'| \le |B'|$ as claimed.

 By Hall's theorem, it now follows that $H$ admits a matching from $A$ into $B$, which corresponds to an assignment of gaps in $\Gamma$ such that no edge has more than $k$ gaps, i.e., $\Gamma$ is a \kblip drawing.
\end{proof}

To conclude the proof of Theorem~\ref{thm:relationships}, we should prove that for every $k \ge 1$, there is a \kblip graph that is not $(2k)$-planar. A stronger result holds:

\begin{lemma}\label{le:oneblipnotkplanar}
 For every $k \ge 1$, there exists a \oneblip graph $G_k$ that is not $k$-planar.
\end{lemma}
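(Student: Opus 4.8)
The plan is to construct, for each $k \ge 1$, a specific \oneblip graph $G_k$ whose crossing structure forces some edge to carry more than $k$ crossings in \emph{every} drawing, thereby certifying that $G_k$ is not $k$-planar. The natural strategy is to exploit the gap between the two notions: in a \oneblip drawing an edge may participate in arbitrarily many crossings provided those crossings are charged to the \emph{other} edges, whereas $k$-planarity bounds the total number of crossings on each edge. So I want a graph with a \oneblip drawing containing one ``heavily crossed'' edge $e^\star$ that is crossed by more than $k$ edges, where each of those crossings can be assigned to the crossing partner rather than to $e^\star$, and simultaneously argue that no $k$-planar drawing of the abstract graph $G_k$ can exist.

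First I would build the \oneblip drawing directly. Take a single long edge $e^\star = (u,v)$ and route $k+1$ (or more) pairwise non-adjacent edges $f_1,\dots,f_{k+1}$ so that each $f_i$ crosses $e^\star$ exactly once and the $f_i$ are otherwise drawn without crossings. Assigning every crossing $e^\star \cap f_i$ to $f_i$ gives each $f_i$ a single gap and leaves $e^\star$ gap-free, so the drawing is \oneblip. The delicate point is that merely having a \oneblip drawing with many crossings on one edge does not by itself prevent the \emph{abstract} graph from admitting a different, $k$-planar drawing; the graph of a single edge crossed by $k+1$ disjoint edges is in fact planar. Hence the construction must be rigid enough that $e^\star$ is \emph{forced} to be crossed many times in every drawing.

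The main obstacle, therefore, is establishing a nonplanarity-type lower bound that is localized to a single edge. I expect the cleanest route is to make $G_k$ contain many nested or interleaved cycles separating $u$ from $v$, so that any drawing of $G_k$ must route $e^\star$ across each separating cycle, incurring at least $k+1$ crossings on $e^\star$. Concretely I would take $v$ to lie ``deep inside'' a family of $k+1$ vertex-disjoint cycles $C_1,\dots,C_{k+1}$ that each enclose $v$ but exclude $u$, connected together so that in every embedding the cyclic nesting is preserved (for instance by making the union $3$-connected or by adding enough chords/edges to fix the rotation system up to reflection). Then a Jordan-curve argument shows that the arc representing $e^\star$ crosses each $C_i$ an odd (hence nonzero) number of times, so $e^\star$ accumulates at least $k+1$ crossings, violating $k$-planarity. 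The upper-bound side is immediate: in the drawing I constructed, each separating cycle is crossed by $e^\star$ exactly once and that crossing is charged away from $e^\star$ to the cycle edge, keeping every edge at a single gap.

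I would then assemble these two halves: exhibit the explicit \oneblip drawing and its \oneblipa to prove $G_k$ is \oneblip, and prove the forced-crossing lemma to conclude $G_k$ is not $k$-planar. The remaining routine work is verifying that the nesting structure is embedding-invariant (so the Jordan-curve crossing-parity argument applies in \emph{every} drawing, not just the constructed one) and checking the simple bookkeeping that no edge other than those crossing $e^\star$ receives more than one gap. With Lemma~\ref{le:2kplanarekblip}, Lemma~\ref{le:threequasiplanar}, and this lemma in hand, the chain of proper inclusions in Theorem~\ref{thm:relationships} follows.
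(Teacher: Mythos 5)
Your overall strategy matches the paper's: exhibit a \oneblip drawing in which one edge $e^\star$ is crossed by $k+1$ edges whose crossings are all charged away from $e^\star$, and then argue that the abstract graph forces at least $k+1$ crossings onto some single edge in \emph{every} drawing. You also correctly identify the crux, namely the rigidity of the construction. But the step where you discharge that crux has a genuine gap. You propose to surround $v$ with $k+1$ disjoint cycles and to ``make the union $3$-connected or add enough chords to fix the rotation system,'' and then apply a Jordan-curve parity argument in every drawing. Three-connectivity (Whitney's theorem) only pins down the \emph{planar embedding} of a planar graph; it says nothing about an arbitrary drawing with crossings, which is exactly the setting a hypothetical $k$-planar drawing of $G_k$ lives in. In such a drawing the edges of a cycle $C_i$ may cross one another and other edges, the resulting closed curve need not be simple, and nothing in the abstract graph structure forces its mod-$2$ winding numbers around $u$ and around $v$ to differ. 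So the claim that $e^\star$ crosses each $C_i$ an odd number of times does not follow, and without it the lower bound collapses.

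The paper closes this gap with a redundancy-plus-counting device that your sketch is missing. It builds a wheel $W$ (a $3$-connected planar graph) whose unique embedding places the configuration ``edge $(a,b)$ crossed by $k+1$ disjoint edges $(c_i,d_i)$'' inside a single face, and then replaces every edge of $W$ by $t=5(k+1)^4$ edge-disjoint paths of length $2$. A counting argument (a $k$-planar drawing has at most $k|E|/2$ crossings in total, while failure of the claim would force at least $t^2/(k+1)^2$ crossings) extracts, for each wheel edge, $k+1$ of its paths such that paths belonging to different wheel edges are pairwise noncrossing; these yield a \emph{planarly drawn} copy of $W$, and only then do $3$-connectivity and uniqueness of the embedding apply, forcing $(a,b)$ and the $(c_i,d_i)$ either into that single face (so $(a,b)$ crosses all $k+1$ of them) or across a bundle of $k+1$ homotopic paths --- either way putting $k+1$ crossings on one edge. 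If you want to salvage your nested-cycles variant, you would need an analogous multiplication of each cycle into a bundle of many parallel paths together with a counting argument showing that some sub-collection is drawn as genuinely nested simple closed curves; as written, the embedding-invariance you assume is unjustified.
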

\newcommand{\proofoneblipnotkplanar}{
 \begin{proof}
  Let $k\in \mathbb{N}$. We construct a graph $G_k=(V,E)$ together with its \oneblip drawing as follows.
  Start with an edge $(a,b)$ crossed by $k+1$ disjoint edges $(c_i,d_i)$, for $i=1,\ldots , k+1$.
  The $2k+2$ vertices lie in a common face, and we can connect them by a Jordan curve,
  which forms a cycle $C=(a,c_1,\ldots , c_{k+1},b,d_{k+1},\ldots , d_1)$. Add a new vertex
  $v_0$ in the exterior of the cycle, and connect it to all vertices of $C$. The cycle $C$ and
  $v_0$ form the wheel $W$, which has $m=4k+4$ edges.
  Finally, replace each edge of the wheel by $t$ edge-disjoint paths of length 2,
  where $t\geq k$ is a suitable parameter \ShoLong{that we shall specify shortly}{(details on the exact choice can be found in the extended version of the paper)}.
  This completes the construction of $G_k=(V,E)$.
  Note that the total number of edges is bounded above by
  $$|E|=1+(k+1)+ (4k+4)2t=1+(k+1)(8t+1)< 10(k+1)t.$$

  It is clear that $G_k$ is \oneblip, since the crossing between $(a,b)$
  and $(c_i,d_i)$ can be charged to $(c_i,d_i)$ for all $i=1,\ldots, k+1$.

  Suppose that $G_k$ admits a $k$-planar drawing $\Gamma$.
  Since each edge crosses at most $k$ other edges,
  the total number of crossings is at most $k|E|/2<5k(k+1)t$.

  We claim that for each edge of the wheel $W$, we can choose $k+1$ of the $t$ paths such
  that no two chosen paths that correspond to different edges of the wheel cross in the drawing $\Gamma$.
  We prove the claim by contradiction. Since we choose $k+1$ out of $t$ paths for each of the $m$ edges
  of the wheel independently, there are ${t\choose k+1}^m$ possible choices.
  Suppose, for the sake of contradiction, that every choice produces a graph that
  has at least one crossing in $\Gamma$ between paths corresponding to different edges of $W$.
  Each crossing between two such paths is counted ${t-1 \choose k}^2 {t\choose k+1}^{m-2}$ times.
  Consequently, the total number of crossings in $\Gamma$ is at least $t^2/(k+1)^2$.
  If we put $t=5(k+1)^4$, then we would have at least $t^2/(k+1)^2=t\cdot 5(k+1)^4/(k+1)2=5(k+1)^2t>5k(k+1)t$ crossings,
  a contradiction. This completes the proof of the claim.

  Let $G_k'$ be a subgraph of $G_k$ that consists of $k+1$ paths corresponding to each edge of $W$
  such that the paths corresponding to different edges of $W$ do not cross in $\Gamma$; and let $\Gamma'$ be
  the restriction of $\Gamma$ to $G_k'$. Note that any $k+1$ paths that correspond to the same
  edge of $W$ are homotopic to each other in $\Gamma'$. If we pick one of the $k+1$ paths, for each edge
  of $W$, the Jordan arc along these paths provide a planar drawing of $W$.
  Since $W$ is 3-connected, it has a combinatorially unique embedding, which we denote by $\Gamma(W)$.
  As noted above, every edge of $W$ in the drawing $\Gamma(W)$ is homotopic to $k+1$ paths in the drawing $\Gamma'$.

  As the combinatorial embedding of $W$ is unique, the vertices $a,b$, and $c_i,d_i$, for $i=1,\ldots ,k+1$, lie on the boundary of a single face, which we denote by $F$. If edges $(a,b)$ and $(c_i,d_i)$, for $i=1,\ldots , k+1$, are homotopic to Jordan arcs that lie in $F$, then $(a,b)$ crosses $(c_i,d_i)$, for all $i=1,\ldots , k+1$.
  If any of these edges is not homotopic to a Jordan arc in $F$, then it crosses a bundle of $k+1$ paths
  corresponding to some edge of $W$. In both cases, one of the edges crosses $k+1$ other edges in $\Gamma$,
  contradicting our assumption that $\Gamma$ is a $k$-planar drawing.
 \end{proof}
}
\ShoLong{
 \proofoneblipnotkplanar
}{
}

\paragraph{Relation to $d$-degenerate crossing graphs.}
Eppstein and Gupta~\cite{eg-cpnrn-17}  defined $d$-degenerate crossing graphs, for $d\in \mathbb{N}$. A graph is a \emph{$d$-degenerate crossing graph} if it admits a drawing $\Gamma$ such that the crossing graph $C(\Gamma)$ is $d$-degenerate. Recall that a graph is \emph{$d$-degenerate} if the vertices admit a total order in which each vertex is adjacent to at most $d$ previous vertices. It is clear from the definition that for every $k\in \mathbb{N}$, every $k$-degenerate crossing graph is a \kblip graph. However, the converse is false for $k=1$: We show below (Lemma~\ref{le:1degenerate}) that for every \oneblip drawing of a \oneblip graph with $n\geq 20$ vertices and the maximum number of edges, the crossing graph contains a cycle, hence it is not 1-degenerate.

\begin{lemma}\label{le:1degenerate}
For every \oneblip graph $G$ with $n\geq 20$ vertices and $5n-10$ edges
and for every \oneblip drawing $\Gamma$ of $G$,
the crossing graph $C(\Gamma)$ contains a cycle.
Consequently, $C(\Gamma)$ is not 1-degenerate, and $G$ is not a 1-degenerate crossing graph.
\end{lemma}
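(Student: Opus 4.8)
The plan is to argue by contradiction: assume that $C(\Gamma)$ is acyclic and deduce that $G$ has strictly fewer than $5n-10$ edges, contradicting the hypothesis. The first step is to record that $C(\Gamma)$ is always a \emph{pseudoforest}. Indeed, by Theorem~\ref{thm_crossing}, for every $E'\subseteq E$ the subdrawing $\Gamma[E']$ has at most $|E'|$ crossings, so the subgraph of $C(\Gamma)$ induced on $\{v_e:e\in E'\}$ has at most $|E'|$ edges; since every subgraph satisfies $\#\text{edges}\le\#\text{vertices}$, the graph $C(\Gamma)$ is a pseudoforest (this is just the pseudoarboricity-$\le 1$ condition of Property~\ref{pr:pseudoarboricity}, phrased for the fixed drawing). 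A pseudoforest is acyclic if and only if it is a forest, which for a finite simple graph is the same as being $1$-degenerate. Hence ``$C(\Gamma)$ has no cycle'', ``$C(\Gamma)$ is a forest'', and ``$C(\Gamma)$ is $1$-degenerate'' coincide, and it suffices to rule out the forest case.

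Next I would extract the structural meaning of acyclicity. A forest is bipartite, so the vertex set of $C(\Gamma)$, namely $E$, admits a proper $2$-colouring (colouring the crossing-free edges arbitrarily); this partitions $E=E_R\cup E_B$ so that no two edges of the same colour cross. Thus $E_R$ and $E_B$ are each drawn without crossings, hence planar, giving $|E_R|,|E_B|\le 3n-6$ and the weak bound $|E|\le 6n-12$. Crucially, acyclicity says more than bipartiteness: the bichromatic crossing graph is itself a forest, so the number $Y$ of crossings satisfies $Y\le |E_R|+|E_B|-1$, and no two opposite-coloured edges cross a common pair of edges of the other colour (no $4$-cycle), etc.

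The heart of the argument is to push the bound $6n-12$ strictly below $5n-10$, using acyclicity and not merely bipartiteness. Here I would invoke the extremal machinery behind Theorem~\ref{thm:density}: since $|E|=5n-10$ is maximal, $G$ is edge-maximal and (after reducing to a simple topological drawing) a maximum crossing-free sub-multigraph $H=(V,E')$ is a triangulation with $3n-6$ edges and $2n-4$ triangular faces by Lemma~\ref{lem:H}, so the $2n-4$ edges of $E''=E\setminus E'$ would have to charge the faces of $H$ in a perfect bijection. I would then re-run the charging scheme of Theorem~\ref{thm:density} under the extra hypothesis that $C(\Gamma)$ is acyclic, and show that acyclicity forces at least one face to remain uncharged: whenever two edges of $E''$ contend for the same face across a common side $(b,c)$, the redistribution step builds a chain of edges each of which takes its gap on the next, and a \emph{closed} such chain is exactly a cycle of $C(\Gamma)$; forbidding cycles therefore prevents the bijection from being completed, so $|E''|\le 2n-5$ and $|E|\le 5n-11$, a contradiction.

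I expect this last step to be the main obstacle. Since the $2$-colouring alone only yields $6n-12$, the entire gain must come from acyclicity, and converting ``no crossing cycle'' into the loss of a full unit in the face count is delicate: one must control how the charging chains thread through the triangular faces of $H$ and verify that, once all $2n-4$ faces are to be charged, an unbroken (hence cyclic) chain is unavoidable. A secondary technical point is handling drawings that are not simple topological, since pairs of edges crossing twice are compatible with a forest $C(\Gamma)$ and must be addressed when importing the triangulation structure of Lemma~\ref{lem:H}. Once the contradiction is reached, the conclusion is immediate: $C(\Gamma)$ contains a cycle for \emph{every} \oneblip drawing $\Gamma$, so it is never a forest, i.e.\ never $1$-degenerate, and therefore $G$ is not a $1$-degenerate crossing graph.
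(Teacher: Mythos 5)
Your reduction to the contrapositive is sound, and you have correctly located the two ingredients the paper's proof uses: Lemma~\ref{lem:H} (so that a maximum crossing-free subgraph $H=(V,E')$ is a triangulation with $3n-6$ edges and $2n-4$ faces, which together with $|E|=5n-10$ forces the $|E''|=2n-4$ edges of $E''=E\setminus E'$ to charge the faces of $H$ bijectively) and the charging scheme of Theorem~\ref{thm:density}. But the decisive step --- converting completeness of the charging into a cycle of $C(\Gamma)$, or contrapositively showing that acyclicity leaves a face uncharged --- is exactly the part you defer as the ``main obstacle,'' and the mechanism you sketch for it is not the one that works. You propose to follow chains created by the tie-breaking/redistribution rule (``two edges contending for a face across a common side, each taking its gap on the next''); that rule only decides which of two candidate faces a single edge charges and does not by itself thread a closed sequence of pairwise-crossing edges. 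The construction that closes the gap is different and more direct: each $e\in E''$ has two end triangles (distinct, because $G$ is simple) and is charged to exactly one of them; when the charging is onto, sending $e$ to the unique edge of $E''$ charged to the \emph{other} end triangle of $e$ is a self-map of the finite set $E''$, hence has a periodic orbit $e_1,e_2,\ldots$ without repetition. Consecutive edges $e_i,e_{i+1}$ of this orbit both have end portions in the triangle $\Delta_i$; either these end portions are incident to distinct vertices of $\Delta_i$, in which case $e_i$ and $e_{i+1}$ cross inside $\Delta_i$, or they are incident to the same vertex, in which case both cross the opposite side $f_i\in E'$ and one interleaves $f_i$ between them. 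The resulting cyclic sequence is a cycle of $C(\Gamma)$. Without this (or an equivalent) construction your argument does not go through; as you yourself note, the pseudoforest observation and the bipartiteness bound $6n-12$ contribute nothing toward the contradiction.

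The secondary point you raise is legitimate: Lemma~\ref{lem:H} is established for a crossing-minimal drawing of an edge-maximal multigraph, while the statement quantifies over \emph{every} \oneblip drawing $\Gamma$, and a pair of edges crossing twice is perfectly consistent with $C(\Gamma)$ being a forest. So one must either justify importing the triangulation structure for arbitrary $\Gamma$ or normalize $\Gamma$ in a way that is compatible with the argument; the paper's own proof applies Lemma~\ref{lem:H} to the given drawing without further comment. This is worth flagging, but it is not the main gap --- the missing cycle construction is.
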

\begin{proof}
Let $G=(V,E)$ be a \oneblip graph with $n\geq 20$ vertices and $5n-10$ edges
(infinite families of such graphs have been constructed in Section~\ref{sse:lower-density}).
Let $\Gamma$ be a \oneblip drawing of $G$, and let $C(\Gamma)=(E,X)$ be its crossing graph.

Let $H=(V,E')$ be a subgraph of $G$, where $E'\subseteq E$ is a maximum set of edges that are pairwise noncrossing in $\Gamma$.
By Lemma~\ref{lem:H}, $H$ is a triangulation, consequently $|E'|=3n-6$. Let $E''=E\setminus E'$. The charging scheme in the proof of
Theorem~\ref{thm:density} gives a one-to-one correspondence between $E''$ and the $2n-4$ faces of $H$ such that each edge $e\in E''$
corresponds to an end triangle of $e$. Since $G$ is a simple graph, the two end portions of every edge $e\in E''$ lie in
two different (triangular) faces of $H$.

Starting with an arbitrary edge $e_1\in E''$, we construct a sequence $P$ of edges in $E''$ as follows.
Assume that the edge $e_i$ is already defined for $i\in \mathbb{N}$. Edge $e_i$ has two distinct end triangles,
and the charging scheme matches $e_i$ to only one of them. Let $\Delta_i$ be the end triangle of $e_i$ that
is not charged by $e_i$, and let $e_{i+1}\in E''$ be the edge charged to $\Delta_i$.
Since $E''$ is finite, the sequence $P$ contains a cyclic sequence without repetition that we denote by $C_0$.

We construct a cyclic sequence $C_1$ from $C_0$ that forms a cycle in the crossing graph $C(\Gamma)$ as follows.
Consider two consecutive elements of $C_0$, say $e_i$ and $e_{i+1}$, both have an end portion in triangle $\Delta_i$.
For every two consecutive edges in $C_0$, say $e_i$ and $e_{i+1}$, do the following: If the end portions of $e_i$ and $e_{i+1}$ that lie in $\Delta_i$ are incident to the same vertex of $\Delta_i$, then both edges cross the opposite side of $\Delta_i$ that we denote by $f_i$, and we insert edge $f_i\in E'$ into $C_0$ between $e_i$ and $e_{i+1}$.
Otherwise end portions of $e_i$ and $e_{i+1}$ in $\Delta_i$ are incident to two distinct vertices of $\Delta_i$, consequently $e_i$ and $e_{i+1}$ cross in the interior of $\Delta_i$, and we do not insert anything between $e_i$ and $e_{i+1}$. We obtain a cyclic sequence $C_1$ of edges in $E$ such that every two consecutive edges cross in the drawing $\Gamma$. We have shown that $C(\Gamma)$ contains a cycle, as claimed.
\end{proof}

\textbf{Note:} 
David Wood (private communication) has pointed out that a weak version of the converse is also true. Specifically, every \kblip graph $G$ is $2k$-degenerate. This follows from the fact that the crossing graph $C$ of a \kblip drawing of $G$ has average degree at most $2k$. 

\section{\oneblip drawings of complete graphs}
\label{sec:complete}
In this section, we characterize which complete graphs are \oneblip.

\begin{theorem}\label{th:complete}
The complete graph $K_n$ is \oneblip if and only if $n \le 8$.
\end{theorem}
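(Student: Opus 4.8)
The plan is to prove the two implications separately, with the reverse implication (sufficiency) carrying essentially all of the work.

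First, for the ``only if'' direction, I would simply invoke the density bound. Since $K_n$ has $\binom{n}{2}=\tfrac{n(n-1)}{2}$ edges, Corollary~\ref{cor:density} forces any \oneblip $K_n$ with $n\ge 3$ to satisfy $\tfrac{n(n-1)}{2}\le 5n-10$, i.e.\ $n^2-11n+20\le 0$. This quadratic is nonpositive precisely on the integer range $3\le n\le 8$ (at $n=9$ it already evaluates to $81-99+20=2>0$), so no complete graph on nine or more vertices can be \oneblip. The cases $n\le 2$ are trivially planar. Nothing beyond the already-established density bound is needed here.

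Second, for the ``if'' direction, my plan is to exhibit a single \oneblip drawing of $K_8$ and then appeal to monotonicity: since \oneblip{}arity is closed under taking subgraphs (Section~\ref{sec:preliminaries}) and $K_n\subseteq K_8$ for every $n\le 8$, a \oneblip drawing of $K_8$ immediately witnesses that all $K_n$ with $n\le 8$ are \oneblip. For $K_8$ itself I would use the nested-squares picture from the lower-bound section (the $s=2$ case): place four vertices on an outer square and four on a concentric inner square; draw both squares and their two pairs of diagonals---one pair in the outer face, one pair in the inner face---crossing-free; and route the sixteen edges joining the outer quadruple to the inner quadruple so that the sixteen crossings they create can each be charged to a distinct edge. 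The edge count $4+4+2+2+16=28=\binom{8}{2}$ confirms that this drawing realizes all of $K_8$.

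The main obstacle is exactly this construction: laying out all $28$ edges so that a valid \oneblipa exists, i.e.\ so that no edge is charged two crossings. I would establish validity not by verifying Hall's condition over all edge subsets by hand---although by Theorem~\ref{thm_crossing} that is precisely what is at stake, namely that every subset $E'$ spans at most $|E'|$ crossings---but by reading an explicit gap assignment off the figure. Concretely, I would route each inter-square edge so that it participates in a single crossing and charge that crossing to a fixed one of the two crossing edges, then assign the one crossing between each diagonal pair to one of the two diagonals, and finally check that the square edges remain gap-free. Once the drawing is fixed, the remaining bookkeeping (each of the $28$ edges receives at most one gap) is a routine case count; the only creative step is choosing a drawing symmetric enough to make this count transparent.
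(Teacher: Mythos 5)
Your proposal is correct, and the ``if'' direction coincides with the paper's: both exhibit the nested-squares drawing of $K_8$ (the paper's Fig.~\ref{fig:k8}, which is exactly the $s=2$ case of the construction you cite) and invoke monotonicity for $n\le 7$. Where you genuinely diverge is the ``only if'' direction. You derive the impossibility of $K_9$ purely from the density bound: $\binom{9}{2}=36>35=5\cdot 9-10$, so Corollary~\ref{cor:density} immediately excludes $K_9$, and the quadratic $n^2-11n+20>0$ (or monotonicity) handles all $n\ge 9$. The paper instead gives a self-contained argument tailored to $K_9$: since ${\rm cr}(K_9)=36$ equals the number of edges, Property~\ref{pr:edges-crossings} forces every edge of a hypothetical \oneblip drawing to carry exactly one gap in a crossing-minimal drawing; a face count in the planarization ($9\cdot 8=72>65$ faces) then produces two real vertices sharing a face, whence some edge is crossing-free by crossing-minimality --- a contradiction. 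Your route is a one-line corollary within the paper's framework, but it silently imports the entire machinery behind Theorem~\ref{thm:density} (Lemma~\ref{lem:H}, Sperner's lemma, etc.); the paper's route avoids that machinery at the cost of using the exact value of ${\rm cr}(K_9)$ due to Guy and a separate structural argument. Both are legitimate, and yours is arguably the more natural deduction once Section~\ref{sec:density} is in place.

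One small inaccuracy in your $K_8$ description: you cannot simultaneously have sixteen crossings among the sixteen inter-square edges, each charged to a distinct such edge, \emph{and} have each inter-square edge participate in only a single crossing --- sixteen crossings contribute thirty-two edge--crossing incidences, so on average each of those edges is involved in two crossings (unless some of the crossings are with the square or diagonal edges, in which case those edges absorb uncharged crossings). This is a bookkeeping slip rather than a gap, since the drawing in Fig.~\ref{fig:k8} does admit a valid \oneblipa, but the explicit assignment you promise to ``read off the figure'' will not look quite like the one you sketch.
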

\begin{proof}
  Figure~\ref{fig:k8} shows a \oneblip drawing of $K_8$, and by
  monotonicity the graphs $K_1,\ldots,K_7$ are \oneblip as well.
  We now prove that $K_9$ is not \oneblip, which again by monotonicity
  settles all cases $K_n$ for $n \geq 9$.

  Since $K_9$ has $36$ edges and ${\rm cr}(K_9) = 36$~\cite{Guy1972},
  a \oneblip drawing of $K_9$ can only arise from
  assigning exactly one gap to each edge in a crossing-minimal drawing of
  $K_9$ (cf.~Property~\ref{pr:edges-crossings}). We obtain a
  contradiction by showing that in every crossing-minimal drawing of
  $K_9$ some edge has no crossing at all.

  Let $\Gamma^*$ be the planarization of such a crossing-minimal
  drawing $\Gamma$. Note that $\Gamma^*$ has $n^*=45$ vertices and
  $m^*=108$ edges (since it has $9$ real vertices of degree $8$ and
  $36$ dummy vertices of degree $4$), so by Euler's formula, the
  number of faces of $\Gamma^*$ is $f^* = m^* - n^* + 2 = 108 - 45 + 2
  = 65$. For a real vertex $u$ of $\Gamma^*$, we denote by $F(u)$ the
  set of faces of $\Gamma^*$ that are incident to $u$. We claim that
  $\Gamma^*$ is biconnected and $|F(u)|=8$ for every real vertex $u$
  of $\Gamma^*$.

  Suppose, for a contradiction, that $\Gamma^*$ is not
  biconnected. Then it contains a cut-vertex $c$, which is either a
  dummy or a real vertex.  If $c$ is a dummy vertex, note that it is
  adjacent to exactly two connected components of $\Gamma^* \setminus \{c\}$. Then
  we can reflect the drawing of one of the two components, thereby
  eliminating the crossing at $c$, which contradicts the
  crossing-minimality of $\Gamma$. We now show that no real vertex is
  a cut-vertex in $\Gamma^*$. Every $3$-cycle in $K_9$ forms a simple
  cycle in $\Gamma^*$ (since $\Gamma$ is a simple drawing and thus
  adjacent edges do not cross). On the other hand, any three real
  vertices in $\Gamma^*$ are part of a $3$-cycle in $K_9$, and thus part of a
  simple cycle in $\Gamma^*$. Hence, no real vertex is a cut-vertex in
  $\Gamma^*$. Finally, $|F(u)|=8$ because every real vertex $u$ has
  degree $8$ and $\Gamma^*$ is biconnected.

  It follows that there are real vertices $u,v$ which share a face
  (i.e. $F(u) \cap F(v) \not= \emptyset$), as otherwise there would
  have to be $\sum_u |F(u)| = 9\cdot 8 = 72 > 65 = f^*$ faces. But now
  the edge $(u,v)$ can be redrawn inside this face so that
  this edge cannot have had any crossing to begin with
  since $\Gamma$ was assumed to be crossing-minimal.
\end{proof}

\begin{figure}[t]
    \centering
    \subfigure[]{\includegraphics[width=0.35\columnwidth]{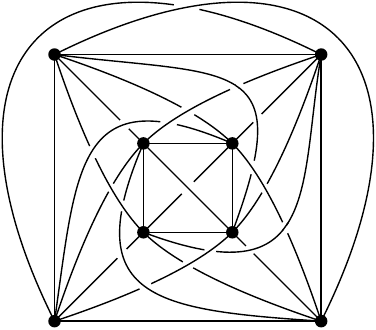}\label{fig:k8}}\hfil
\subfigure[]{\includegraphics[width=0.39\columnwidth]{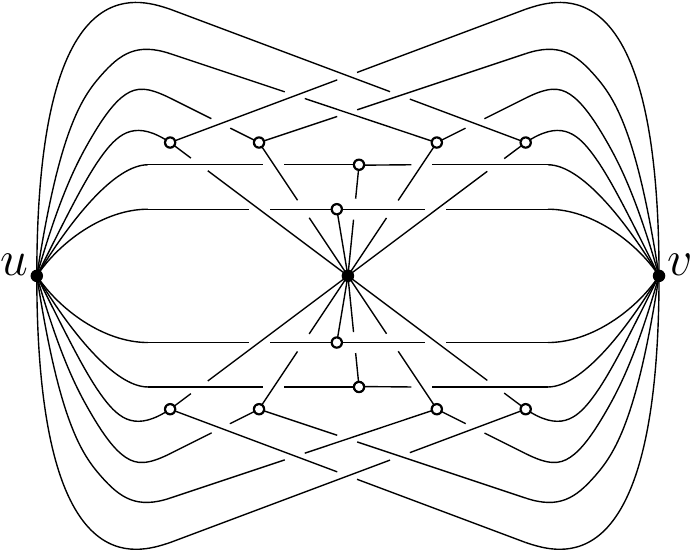}\label{fig:k312}}
  \caption{A \oneblip drawing of (a) $K_8$ and (b) $K_{3,12}$.}
 \end{figure}

\section{Recognizing \oneblip graphs}
\label{sec:recognition}
We denote by \textsc{1GapPlanarity} the problem of deciding whether a given graph $G$ is \oneblip. We show that \textsc{1GapPlanarity} is \textsc{NP}-complete, by a reduction from \textsc{3Partition}.
Recall that an instance  of \textsc{3Partition} consists of a multiset $A=\{a_1,a_2,\ldots ,a_{3m}\}$  of $3m$ positive integers in the range $(I / 4, I / 2)$, where $I$ is an integer such that $I=1/m \cdot \sum^{3m}_{i=1}a_i$, and asks whether $A$ can be partitioned into $m$ subsets  $A_1,A_2,\ldots ,A_m$, each of cardinality 3, such that the sum of integers in each subset is $I$.
This problem is  strongly \textsc{NP}-hard~\cite{DBLP:books/fm/GareyJ79}, and thus we may assume that $I$ is bounded by a polynomial in $m$.

\newcommand{\lemmaNP}{
 \begin{lemma}\label{le:np}
  The problem \textsc{1GapPlanarity} is in \textsc{NP}.
 \end{lemma}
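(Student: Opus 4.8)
The plan is to verify the standard NP-membership certificate for a drawing-based graph class. The natural witness that $G$ is \oneblip is a combinatorial description of a \oneblip drawing together with a valid \oneblipa. Specifically, I would take as a certificate the rotation system of the planarization $\Gamma^*$ (i.e., the cyclic order of edges around each real and dummy vertex), which encodes the drawing up to homeomorphism, together with an assignment of each crossing to one of its two responsible edges. The first thing to establish is that such a certificate has size polynomial in $n=|V|$, so that it can be guessed nondeterministically and checked in polynomial time.

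The key obstacle, and the step I expect to require the most care, is bounding the size of the certificate: a priori a \oneblip drawing could have arbitrarily many crossings, since a single gap-free edge may be crossed many times. The resolution is that we only ever need to certify the existence of \emph{some} \oneblip drawing, and Theorem~\ref{thm:kdensity} (with $k=1$) bounds the number of edges by $m=O(n)$. To bound the number of crossings, I would argue that one may restrict attention to drawings that are simple topological drawings in which no edge crosses a given edge more than a bounded number of times; combined with the density bound, the total number of crossings in a minimal \oneblip drawing is $O(n^2)$, hence the planarization $\Gamma^*$ has $O(n^2)$ vertices and edges, and its rotation system has polynomial size. One must argue that if $G$ is \oneblip at all, then it admits a \oneblip drawing whose planarization has polynomially many vertices; this follows from taking a crossing-minimal \oneblip drawing and invoking Property~\ref{pr:edges-crossings}, which caps the number of crossings at $k\cdot m = m = O(n)$ once gaps are assigned, since each crossing is charged to an edge and each edge carries at most one gap.

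Once the polynomial size of the certificate is secured, verification is routine and runs in polynomial time. Given the guessed rotation system of $\Gamma^*$, I would first check that it is realizable as a planar embedding (equivalently, that $\Gamma^*$ is a planar graph consistent with the guessed rotations, verifiable by Euler's formula together with a consistency check of the face structure derived from the rotation system) and that contracting each degree-$4$ dummy vertex back into a transversal crossing recovers a drawing of $G$ with the correct real vertices and edges. Then, given the guessed assignment of each crossing to a responsible edge, one checks in linear time that each edge of $G$ receives at most one gap; this is exactly the condition of a \oneblipa. Alternatively, one may guess only the drawing (the rotation system of $\Gamma^*$) and verify the existence of a valid assignment in polynomial time via Property~\ref{pr:pseudoarboricity} or by the matching argument of Theorem~\ref{thm_crossing}, computing a minimum-maximum-indegree orientation of the crossing graph $C(\Gamma)$ in time $O(|E|^4)$ as noted after Property~\ref{pr:pseudoarboricity}. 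Either way all checks are polynomial, establishing that \textsc{1GapPlanarity} is in \textsc{NP}.
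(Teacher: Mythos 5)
Your proposal is correct and follows essentially the same route as the paper: guess a polynomial-size combinatorial encoding of a drawing, check that its planarization is planar, and verify the gap condition either by guessing the assignment or via the polynomial-time orientation algorithm behind Property~\ref{pr:pseudoarboricity}. Your explicit bound on the certificate size via Property~\ref{pr:edges-crossings} (at most $k\cdot m=m$ crossings in any \oneblip drawing) is in fact a cleaner justification than the paper's implicit cap of $\binom{m}{2}$ crossings, and renders your earlier detour through crossing-minimal simple drawings unnecessary.
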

 \begin{proof}
  Given a planarization $\Gamma^{*}$ of a drawing $\Gamma$, we can check whether it is \oneblip in polynomial time by using Property~\ref{pr:pseudoarboricity}. A nondeterministic algorithm to generate all planarizations of a graph with $k$ crossings, where $0 \le k < {{m}\choose{2}}$, evaluates all possible $k$ pairs of edges that cross (and the order of the crossings along the edges) with a technique similar to the one in~\cite{gj-1983}. Then it replaces crossings with dummy vertices and tests whether the resulting graph is planar, i.e., whether it is a planarization of a drawing of $G$, and whether it is \oneblip.  Hence, the problem belongs to \textsc{NP}.
 \end{proof}
}

\noindent \ShoLong{
 We begin by showing that \textsc{1GapPlanarity} is in \textsc{NP}.
 \lemmaNP
}{
 The fact that \textsc{1GapPlanarity} is in \textsc{NP} can easily be shown by exploiting Property~\ref{pr:pseudoarboricity}.
}

\noindent Our reduction is reminiscent to the reduction used in~\cite{Bekos2016}. However, the proof in~\cite{Bekos2016} holds only for the case in which a clockwise order of the edges around each vertex is part of the input, i.e., only if the \emph{rotation system} of the input graph is fixed.  A similar reduction is also used in~\cite{BEKOS201748}, in which the rotation system assumption is not used. However, the gadgets in~\cite{BEKOS201748} have a unique embedding. We do not use the fixed rotation system assumption, nor we can easily derive a unique embedding for our gadgets, and thus have to deal with additional challenges in our proof. In what follows we define a ``blob'' graph that will be used to enforce an ordering among the edges adjacent to certain vertices.
Consider the complete bipartite graph $K_{3,12}$, whose crossing number is  $30$~\cite{KLEITMAN1970315,Zarankiewicz1955}. Fig.~\ref{fig:k312} shows a \oneblip drawing of $K_{3,12}$ with exactly $30$ gaps. Note that two degree-$12$ vertices, $u$ and $v$, are drawn on the outer face. Since $K_{3,12}$ has $36$ edges, the next lemma easily follows.

\begin{lemma}\label{le:k312blipnumber}
 Every \oneblip drawing of $K_{3,12}$ has at most $6$ gap-free edges.
\end{lemma}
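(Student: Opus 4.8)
The plan is to combine the known crossing number of $K_{3,12}$ with a simple double count of gaps. Recall that $K_{3,12}$ has $36$ edges and crossing number $30$. First I would fix an arbitrary \oneblip drawing $\Gamma$ of $K_{3,12}$ together with a \oneblipa. The central observation is that in a \oneblip drawing the number of gaps equals the number of crossings: every crossing point is assigned to exactly one of its two responsible edges, so it contributes exactly one gap, and conversely each gap arises from a distinct crossing. Since $k=1$, every edge carries \emph{at most one} gap; consequently an edge is gapped if and only if it carries exactly one gap, and so the number of gapped edges equals the total number of gaps.

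Next I would invoke the crossing number. Any drawing of $K_{3,12}$ has at least ${\rm cr}(K_{3,12})=30$ crossings, and in particular $\Gamma$ does. Chaining the identities from the previous step, the number of gapped edges in $\Gamma$ equals the number of gaps, which equals the number of crossings, which is at least $30$. Because $K_{3,12}$ has only $36$ edges in total, the number of gap-free edges is at most $36-30=6$, which is exactly the claimed bound.

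I do not expect any genuine obstacle here; the statement is a direct consequence of the exact value ${\rm cr}(K_{3,12})=30$ and the defining constraint of \oneblip drawings, in the same spirit as the lower-bound counting used in the proof that $K_9$ is not \oneblip. The only point that requires a word of care is the equality ``number of gapped edges $=$ number of crossings,'' which relies on each edge having at most one gap; this is immediate from the definition of a \oneblipa and needs no further work.
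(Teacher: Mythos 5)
Your proof is correct and follows exactly the argument the paper intends: since each crossing contributes one gap and each edge carries at most one gap in a \oneblip drawing, the number of gapped edges equals the number of crossings, which is at least ${\rm cr}(K_{3,12})=30$, leaving at most $36-30=6$ gap-free edges. The paper states the lemma ``easily follows'' from these same two facts, so your write-up is just a fleshed-out version of the same reasoning.
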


A \emph{blob} $B$ is a copy of $K_{3,12}$. A \emph{gapped chain} $\mathcal C$ of a \oneblip drawing is a closed, possibly nonsimple, curve such that any point of $\mathcal C$ either belongs to a gapped edge or corresponds to a vertex.

\begin{lemma}\label{lem:blippedcycle}
 Let $u$ and $v$ be two degree-$12$ vertices~of~$B$. Every \oneblip drawing $\Gamma$ of $B$ contains a gapped chain $\mathcal C$ containing $u$ and $v$.
\end{lemma}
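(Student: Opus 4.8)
The plan is to exploit the gap-free edge bound from Lemma~\ref{le:k312blipnumber} together with the bipartite structure of $B=K_{3,12}$. Since $u$ and $v$ are degree-$12$ vertices, they both belong to the part of size $3$ in the bipartition; in particular they are nonadjacent, and the set of edges incident to $u$ is disjoint from the set of edges incident to $v$. Denote by $x_1,\dots,x_{12}$ the vertices of the part of size $12$, so that for every $i$ both $ux_i$ and $vx_i$ are edges of $B$, and the $24$ edges $ux_1,\dots,ux_{12},vx_1,\dots,vx_{12}$ are pairwise distinct. The idea is that almost all of the $12$ two-edge paths $u\,x_i\,v$ consist entirely of gapped edges, and any two of them combine into a cycle through $u$ and $v$ whose drawing is the desired gapped chain.

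First I would count the \emph{good} indices, where I call $i$ good if both $ux_i$ and $vx_i$ are gapped. If $i$ is not good, then at least one of $ux_i,vx_i$ is gap-free; since these $24$ edges are distinct, distinct non-good indices are witnessed by distinct gap-free edges. By Lemma~\ref{le:k312blipnumber}, $\Gamma$ has at most $6$ gap-free edges in total, so there are at most $6$ non-good indices, and hence at least $12-6=6$ good indices. (It suffices to produce two good indices; the slack of $6$ leaves ample room and even shows $u$ and $v$ are joined by many internally disjoint gapped paths.)

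Next I would pick two good indices $i\neq j$ and consider the four edges $ux_i, x_iv, vx_j, x_ju$, all of which are gapped. Their drawings, concatenated at the vertices $u,x_i,v,x_j$, form a closed (possibly self-intersecting) curve $\mathcal C$ passing through $u$ and $v$. Every point of $\mathcal C$ either lies on one of these four gapped edges or is one of the four vertices, so $\mathcal C$ is a gapped chain by definition, completing the proof. The only points that require care --- and they are the main, though modest, obstacles --- are verifying that $u$ and $v$ really lie on the same side of the bipartition, so that the edges incident to $u$ and the edges incident to $v$ are disjoint (which is exactly what lets the at-most-$6$ gap-free edges bound the number of non-good indices), and checking that the closed curve obtained by gluing the four edge-arcs at their shared endpoints is an admissible gapped chain. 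The heart of the argument is the short counting step, which rests entirely on Lemma~\ref{le:k312blipnumber}.
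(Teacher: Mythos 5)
Your proof is correct, but it takes a genuinely different route from the paper's. The paper passes to the planarization $\Gamma^*$, restricts to the edges arising from gapped edges, and invokes Menger's theorem: any $(u,v)$-cut corresponds to a closed curve separating $u$ from $v$, such a curve must meet all $12$ edge-disjoint $u$--$v$ paths of $K_{3,12}$ and hence at least $12-6=6$ gapped edges, so two edge-disjoint gapped $u$--$v$ paths exist in the planarization and their union is the chain. You instead argue directly: among the twelve two-edge paths $u\,x_i\,v$, each ``bad'' index is witnessed by a distinct gap-free edge (the edge sets at $u$ and at $v$ being disjoint since both lie in the $3$-side of the bipartition), so Lemma~\ref{le:k312blipnumber} leaves at least six indices $i$ with both $ux_i$ and $vx_i$ gapped, and any two of these give a $4$-cycle of gapped edges whose drawing is a closed, possibly self-intersecting, curve through $u$ and $v$ --- exactly a gapped chain as defined. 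Your counting is airtight, and your version is more elementary (no planarization, duality, or Menger) while also yielding a slightly stronger structural fact, namely many internally disjoint fully gapped $u$--$v$ paths, so the chain can even be taken to be a genuine cycle of $B$. What the paper's Menger-based argument buys in exchange is robustness: it would still produce a gapped chain in situations where no single two-edge path happens to be entirely gapped but the gapped edges are nevertheless $2$-edge-connected between $u$ and $v$ through their crossings; for $K_{3,12}$ with the bound of Lemma~\ref{le:k312blipnumber} that generality is not needed.
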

\newcommand{\proofblippedcycle}{
 \begin{proof}
  Let $\Gamma^*$ be the planarization of $\Gamma$. Let $\Gamma'$ be the subgraph of $\Gamma^*$ consisting only of those edges that correspond to or belong to gapped edges of $\Gamma$.  We prove that  $\Gamma'$ contains two edge-disjoint paths from $u$ to $v$.  Note that these two edge-disjoint paths may meet at real vertices and at dummy vertices (i.e., a crossing between two gapped edges).  A curve that goes through these two paths is the desired gapped chain.

  According to Menger's theorem, two such paths exist if and only if every $(u,v)$-cut of $\Gamma'$ has size at least $2$, where a $(u,v)$-cut of $\Gamma'$ is a set of edges of $\Gamma'$ whose removal disconnects $u$ and $v$. It is well known that such an $(u,v)$-cut corresponds to cycle in the dual, which in turn corresponds to a curve that separates $u$ and $v$ by crossing a set of edges. We now consider one such curve, and claim this curve crosses at least two gapped edges in the original drawing $\Gamma$ (after a slight perturbation we can assume that it does not pass through a vertex). More precisely, let $\ell$ be a simple closed curve such that:   it does not pass through any vertex of $\Gamma$;
  it divides the plane into two nonempty topologically connected regions, one containing $u$ and the other containing $v$.  Let $L$ denote the set of edges of $G$ that are crossed by $\ell$.
  Note that $G$ contains $12$ edge-disjoint paths from $u$ to $v$, which induce $12$ edge-disjoint paths from $u$ to $v$ in $\Gamma^*$.  It follows that $|L| \ge 12$. Also, $\Gamma$ has at most $6$ gap-free edges by Lemma~\ref{le:k312blipnumber}. Hence, $L$ contains at least $12-6>2$ gapped edges.\end{proof}
}
\ShoLong{
 \proofblippedcycle
}{
 \begin{sketch}
  Let $\Gamma^*$ be the planarization of $\Gamma$. Let $\Gamma'$ be the subgraph of $\Gamma^*$ consisting only of those edges that correspond to or belong to gapped edges of $\Gamma$.  We prove that  $\Gamma'$ contains two edge-disjoint paths from $u$ to $v$.  Note that these two edge-disjoint paths may meet at real vertices and at dummy vertices (i.e., a crossing between two gapped edges).  A curve that goes through these two paths is the desired gapped chain.
  According to Menger's theorem, two such paths exist if and only if every $(u,v)$-cut of $\Gamma'$ has size at least $2$, where a $(u,v)$-cut of $\Gamma'$ is a set of edges of $\Gamma'$ whose removal disconnects $u$ and $v$. Such edge cuts correspond to cycles in the dual, which in turn correspond to curves that separate $u$ and $v$ by crossing a set of edges. By Lemma~\ref{le:k312blipnumber}, one can show that any such curve crosses at least two gapped edges in the original drawing $\Gamma$.
 \end{sketch}
}

We are now ready to show how to transform an instance $A$ of \textsc{3Partition} into an instance $G_A$ of \textsc{1GapPlanarity}. We start by defining some gadgets for our construction.
A \emph{path gadget} $P_k$ is a graph obtained by merging a sequence of $k>0$ blobs as follows. Let $B_1$, $B_2$, $\dots$, $B_k$, be $k$ blobs such that $u_i$ and $v_i$ are two vertices of degree $12$ in $B_i$. Identify the vertices  $v_i = u_{i+1}$ for $i=1,\dots,k-1$, each of these vertices is called an \emph{attaching vertex}. Thus, $P_k$ has $k+1$ attaching vertices.
In a \oneblip drawing of $P_k$, any two gapped chains of two blobs, $B_i$ and $B_j$ ($i<j$), are disjoint, except for a possible common attaching vertex. A schematization of $P_k$ (for $k=3$) is shown in Fig.~\ref{fig:path-gadget}. A \emph{top beam}, denoted $G_t$, is a path gadget $P_k$ with $k= 3m(\lceil I/2 \rceil+2)+1$. Recall that $G_t$ has $3m(\lceil I/2 \rceil+2)+2$ attaching vertices. A \emph{right wall} $G_r$ is a path gadget $P_k$ with $k= 2$. Symmetrically, a \emph{bottom beam} $G_r$ is a path gadget with $k= 3m(\lceil I/2 \rceil +2)+1$, and a \emph{left wall} $G_l$ is a path gadget with $k=2$. A \emph{global ring} $R$ is obtained by merging $G_t$, $G_r$, $G_b$, and $G_l$ in a cycle as in Fig.~\ref{fig:ring}. Again,  in any \oneblip drawing $\Gamma_R$ of $R$, the gapped chains of two distinct blobs, $B_i$ and $B_j$ ($i\neq j$), are disjoint, except for a possible common attaching vertex. Thus, $\Gamma_R$ contains a gapped chain $C_R$ that is the union of all the  gapped chains of the blobs of $R$.

\begin{figure}[t]
 \centering
 \subfigure[$P_3$]{\includegraphics[width=0.45\columnwidth, page=1]{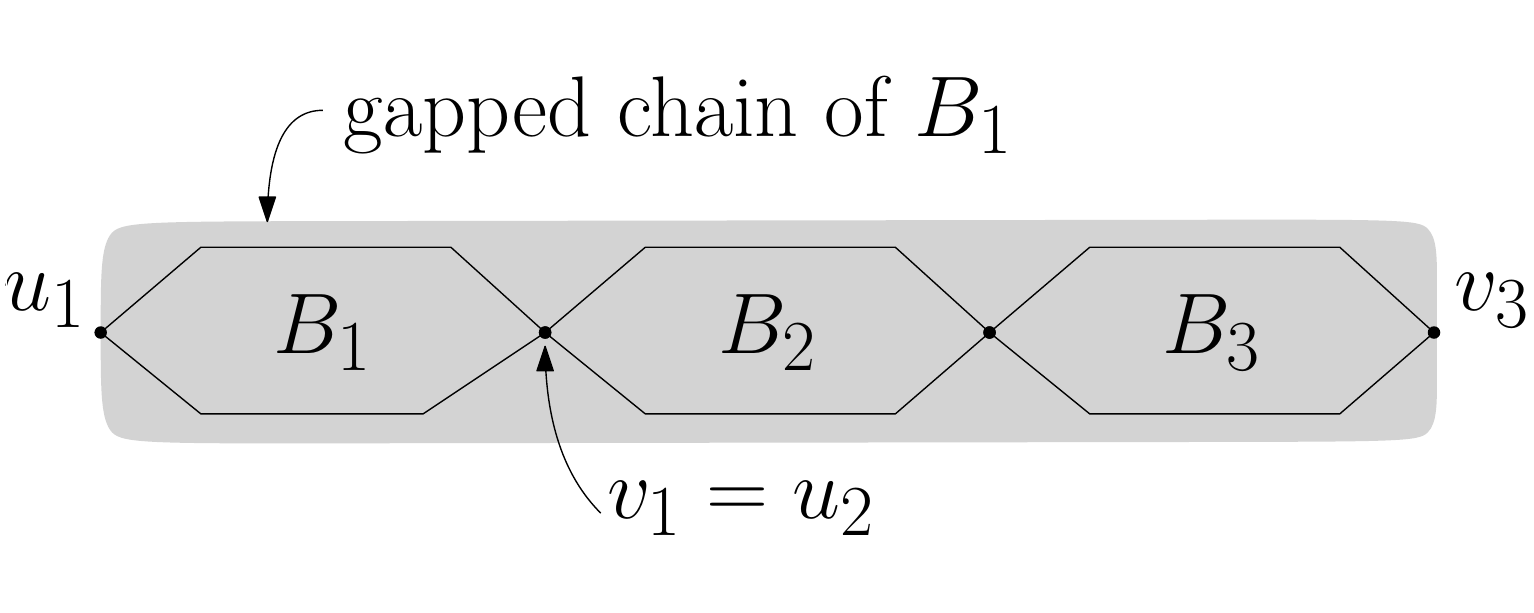}\label{fig:path-gadget}}\hfil
 \subfigure[$R$]{\includegraphics[width=0.45\columnwidth, page=2]{figs/gadget}\label{fig:ring}}
 \subfigure[$G_A$]{\includegraphics[width=\columnwidth, page=3]{figs/gadget}\label{fig:reduction}}
 \caption{(a) Schematization of a path gadget $P_3$. (b) A global ring $R$. (c) Schematization of the instance $G_A$ with  $m = 3$, $A = \{7, 7, 7, 8, 8, 8, 8, 9, 10\}$
  and $I = 24$. Transversal paths are routed according to the following solution of \textsc{3Partition} $A_1 = \{7, 7, 10\}$, $A_2 = \{7, 8, 9\}$ and $A_3 = \{8, 8, 8\}$. For simplicity, the gapped chains of the various blobs are not shown, as well as vertex $w$ and all the degree-$2$ vertices of the transversal paths. }
\end{figure}

We start the construction of $G_A$ with a global ring $R$. Let $\alpha$, $\beta$, $\gamma$, $\delta$ be the attaching vertices shared by $G_l$ and $G_t$, $G_t$ and $G_r$, $G_r$ and $G_b$, $G_b$ and $G_l$, respectively (see also Fig.~\ref{fig:ring}). First we add the edges $(\alpha,\beta)$ and $(\gamma,\delta)$. Denote as $R^+$ the resulting graph, and consider a \oneblip drawing of this graph. The gapped chain of $R$ subdivides the plane into a set of connected regions, such that only two of them contain all of $\alpha$, $\beta$, $\gamma$, and $\delta$ on their boundaries. We denote these two regions as $r_1$ and $r_2$. For ease of illustration, we assume that one of them is infinite (as in Fig.~\ref{fig:ring}), say $r_2$. Since the drawing is \oneblip, each of $(\alpha,\beta)$ and $(\gamma,\delta)$ is drawn entirely  in one of these two regions. We assume that both these two edges are drawn in the same region, say $r_2$, and we will later show that this is the only possibility in any \oneblip drawing of the final graph $G_A$.

We continue by connecting  the top and bottom beams by a set of $3m$ \emph{columns}; refer to Fig.~\ref{fig:reduction}. We describe each column in terms of its drawing, and we will later see that this is the only possible drawing that can be part of a \oneblip drawing of $G_A$. A column consists of $2m - 1$ \emph{cells}; a cell consists of a set of pairs of  crossing edges, called its \emph{vertical pairs}. Cells of the same column are separated by $2m-2$ path gadgets, called \emph{floors}. In particular, the cell between the $(m-1)$-st floor and the $m$-th floor is called the \emph{central cell}. Again, we are assuming a particular left-to-right order for the attaching vertex of a floor, and we will see that this is the only possible order in a \oneblip drawing.  The central cells (we have $3m$ of them in total) have a number of vertical pairs depending on the elements of $A$. Specifically, the central cell $C_i$ of the $i$-th column contains $a_i$ vertical pairs connecting its delimiting floors ($i \in \{1, 2, \dots, 3m\}$). Each of the remaining cells each has $\lceil I/2 \rceil +1$ vertical pairs. Hence, a noncentral cell contains more edges than any central cell. Further, the number of attaching vertices of a floor can be computed based on how many vertical pairs must be connected to the gadget.

It is now straightforward to see that it is not possible to draw both a column and one of $(\alpha,\beta)$ and $(\gamma,\delta)$ in $r_1$ or $r_2$ without violating {\oneblip}ity. Hence, we shall assume that both $(\alpha,\beta)$ and $(\gamma,\delta)$ are in $r_2$ and that all the columns are in $r_1$. Consider now a \oneblip drawing of a column. If we invert the left-to-right order of the attaching vertices of a floor (i.e., we mirror its drawing), then the resulting drawing is not \oneblip, since each floor delimits at least one noncentral cell with $\lceil I/2 \rceil +1$ vertical pairs. Moreover, since each vertical pair has a gapped edge, two vertical pairs cannot cross each other in a \oneblip drawing, and thus the drawings of the columns are disjoint from one another.

Finally, let $a$ and $b$ be the attaching vertices of the left and right walls distinct from $\alpha$, $\beta$, $\gamma$, and $\delta$. We connect $a$ and $b$ with $m$ pairwise internally disjoint paths, called \emph{transversal paths}; each transversal path has exactly $(3m - 3)(\lceil I/2 \rceil +1) + I$ edges. The routing of these paths will be used to determine a solution of $A$, if it exists. Thus, we aim at forcing the transversal paths to be inside $r_1$ in a \oneblip drawing of the graph. For this purpose, adding a vertex $w$ connected to all the attaching vertices of $G_t$ and $G_b$ will suffice.
Due to the presence of the columns in $r_1$, vertex $w$ must be in $r_2$ and, due to the edges $(\alpha,\beta)$ and $(\delta,\gamma)$ in $r_2$, all its incident edges (except at most two) are gapped. Thus, the transversal paths must be drawn in $r_1$. This concludes the construction of $G_A$.

\ShoLong{
}{
 We can prove the following.
}

\begin{theorem}\label{th:hardness}
 The \textsc{1GapPlanarity} problem is \textsc{NP}-complete.
\end{theorem}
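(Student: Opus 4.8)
The plan is to combine membership in \textsc{NP}, already established (Lemma~\ref{le:np}; alternatively via Property~\ref{pr:pseudoarboricity}), with \textsc{NP}-hardness through the reduction from \textsc{3Partition} that produces the graph $G_A$ described above. I would first record that the reduction runs in polynomial time: because \textsc{3Partition} is strongly \textsc{NP}-hard we may assume $I$ is polynomial in $m$, and then the number of blobs in each beam and wall, the number of columns ($3m$), the number of cells per column ($2m-1$), the number of vertical pairs per cell (either $\lceil I/2\rceil+1$ or $a_i\le I$), and the length of each transversal path ($(3m-3)(\lceil I/2\rceil+1)+I$) are all polynomial in $m$ and $I$, while each blob is a fixed copy of $K_{3,12}$. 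Hence $G_A$ has polynomial size and is computable in polynomial time, and it remains to prove that $A$ admits a valid $3$-partition if and only if $G_A$ is \oneblip.

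For the forward direction I would start from a partition $A_1,\dots,A_m$ into triples each summing to $I$ and exhibit an explicit \oneblip drawing of $G_A$. I draw the global ring $R$, the beams, and the walls in their canonical positions, with every blob drawn as in Fig.~\ref{fig:k312}, place the edges $(\alpha,\beta),(\gamma,\delta)$ together with the vertex $w$ in $r_2$, and route all $3m$ columns in $r_1$, resolving each vertical pair with a single gap. The $j$-th transversal path is then routed in $r_1$ so that it passes through the central cells of exactly the three columns indexed by the elements of $A_j$ and through noncentral cells of the remaining $3m-3$ columns. Since each noncentral cell contributes $\lceil I/2\rceil+1$ vertical pairs and $\sum_{a_i\in A_j}a_i=I$, the number of vertical pairs met by this path equals $(3m-3)(\lceil I/2\rceil+1)+I$, i.e.\ its number of edges; thus every path edge crosses exactly one vertical pair, that crossing is charged to the path edge, and one checks that distinct transversal paths need not cross, giving a valid \oneblipa.

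The converse is the substantial direction. Given any \oneblip drawing $\Gamma$ of $G_A$, I would apply Lemma~\ref{lem:blippedcycle} to every blob of $R$ and glue the resulting gapped chains along shared attaching vertices into a single gapped chain $C_R$, which separates the plane so that only two regions, $r_1$ and $r_2$, carry all of $\alpha,\beta,\gamma,\delta$ on their boundary. As argued in the construction, each edge of a vertical pair is gapped, so no two vertical pairs can cross and no column can share a region with one of $(\alpha,\beta),(\gamma,\delta)$ without exceeding the gap budget; together with the star at $w$ (all but at most two of whose incident edges become gapped once $w$ is confined to $r_2$) this forces $(\alpha,\beta),(\gamma,\delta)$ and $w$ into $r_2$ and the columns and transversal paths into $r_1$. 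I would then argue that the left-to-right order of the attaching vertices along each floor is forced, since mirroring a floor would place a noncentral cell of $\lceil I/2\rceil+1$ vertical pairs where it cannot be drawn \oneblip. Finally a gap-counting argument does the work: every edge of a transversal path absorbs at most one crossing, so matching the path length $(3m-3)(\lceil I/2\rceil+1)+I$ against the cell capacities (central cell $C_i$ has $a_i$, each noncentral cell has $\lceil I/2\rceil+1$) forces each path to traverse exactly $3m-3$ noncentral and three central cells whose capacities sum to $I$, while the $3m$ central cells are partitioned among the $m$ paths; reading off the three central cells visited by each path yields the partition of $A$.

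The main obstacle is this converse direction, and specifically the fact that we work without a fixed rotation system and without unique embeddings of the gadgets, which is exactly the feature distinguishing our reduction from those of~\cite{Bekos2016,BEKOS201748}. The delicate points are showing that $C_R$ is a single closed chain cleanly separating $r_1$ from $r_2$, ruling out every alternative placement of the columns, of $(\alpha,\beta),(\gamma,\delta)$, and of $w$ relative to $C_R$, and proving that the floor orientations are forced; each of these must be derived from the gap budget alone rather than from a prescribed combinatorial embedding. Once the geometry is pinned down, the remaining correspondence between gap counts and subset sums is a direct calculation.
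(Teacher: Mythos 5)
Your proposal follows essentially the same route as the paper: NP membership via Lemma~\ref{le:np}, NP-hardness by the reduction from \textsc{3Partition} through the graph $G_A$, the forward direction by routing each transversal path through the three central cells of its triple, and the converse by the forcing arguments (gapped chains of the ring, placement of $(\alpha,\beta)$, $(\gamma,\delta)$, $w$, the columns, and the floor orientations) followed by the gap-counting comparison of path length against cell capacities. The only difference is presentational --- you fold the forcing discussion and the polynomial-size check into the proof itself, whereas the paper places them in the construction preceding the theorem --- so the content matches.
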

\newcommand{\proofOneGapPlanarity}{
 \begin{proof}
  The \textsc{1GapPlanarity} problem is in \textsc{NP} by Lemma~\ref{le:np}.

  We now prove that an instance $A$ of \textsc{3Partition} is a positive instance if and only if the graph $G_A$ is a positive instance of \textsc{1GapPlanarity}.

  Suppose first that $G_A$ is a positive instance of \textsc{1GapPlanarity}. From the above discussion, it is clear that each traversing path must be routed through exactly three central cells and $3m - 3$ noncentral cells. In particular, each path has $(3m - 3)(\lceil I/2 \rceil +1) + I$ edges, and hence can traverse at most these many vertical pairs. Since each noncentral cell consists of $\lceil I/2 \rceil +1$ vertical pairs, it must be that the 3 central cells contain $I$ vertical pairs in total. Thus, we can construct a solution for $A$ by looking at the central cells traversed by the $m$ paths.

  Suppose now that $A$ is a positive instance of \textsc{3Partition}. Note that a \oneblip drawing of $G_A$ can be always computed if one omits all the transversal paths (see also Fig.~\ref{fig:reduction}).  To draw the paths, let $\{A_1, A_2,..., A_m\}$ be a solution for $A$. Then we route the paths similarly as in~\cite{Bekos2016}, that is, in such a way that: (1) they do not cross each other; (2) they do not cross any blob; (3) each path passes through exactly 3 central cells with $I$ vertical pairs in total, and $3m - 3$ noncentral cells; and (4) each cell is traversed by at most one path.
  Consider a subset $A_j$ of the solution of instance $A$ of \textsc{3Partition} and assume without loss of generality that $A_j=\{a_\kappa, a_\lambda,a_\mu\}$, where $1 \leq \kappa, \lambda, \mu \leq 3m$. Then, in the computed drawing,  path $\pi_j$ will cross the $\kappa$-th, $\lambda$-th and $\mu$-th columns of $G_A$ through central cells, while it will cross the remaining columns of $G$ through noncentral cells. Hence, requirement (3) is satisfied.
  Consider now the routing of the remaining transversal paths through the $\kappa$-th column; the corresponding routings though the $\lambda$-th and $\mu$-th columns of $G_A$ are symmetric. By construction, there must exist exactly $m-1$ available cells above and exactly $m-1$ available cells below the central cell of the $\kappa$-th column. This implies that there exist at least as many available noncentral cells as transversal paths to route at each side of the central cell of the $\kappa$-th column. Hence, we can route the remaining transversal paths through the $\kappa$-th column so that all other requirements are  satisfied.
 \end{proof}
}
\ShoLong{
 \proofOneGapPlanarity
}{
}

We conclude by observing that our proof can be easily adjusted for the setting in which the rotation system of the input graph is fixed. We call this problem \textsc{1GapPlanarityWithRotSys}. It suffices to choose a rotation system for $G_A$ that guarantees the existence of a \oneblip drawing ignoring the transversal paths (we already discussed the details of this drawing), and such that the transversal paths are attached to $a$ and $b$ with the ordering of their edges around $a$  reversed with respect to the ordering around $b$. The membership of the problem to \textsc{NP} can be easily verified (similarly to Lemma~\ref{le:np}). Thus, the next theorem follows.

\begin{theorem}\label{th:hardness-fr}
 The \textsc{1GapPlanarityWithRotSys} problem is \textsc{NP}-complete.
\end{theorem}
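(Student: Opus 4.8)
The plan is to reuse the reduction graph $G_A$ constructed for Theorem~\ref{th:hardness} verbatim and merely equip it with a fixed rotation system $\rho_A$; both membership and hardness then follow with little extra work. For membership in \textsc{NP}, I would mimic Lemma~\ref{le:np}: nondeterministically guess a planarization of $(G_A,\rho_A)$—that is, guess the set of crossing pairs together with the order of crossings along each edge—insisting that the guessed crossing pattern be consistent with the prescribed cyclic order of edges at every real vertex. Since a \oneblip drawing has at most $|E|$ crossings by Property~\ref{pr:edges-crossings} (with $k=1$), the guessed planarization has polynomial size. One then verifies in polynomial time that the guess is a genuine planarization realizing $\rho_A$ and that its crossing graph has pseudoarboricity at most $1$ (Property~\ref{pr:pseudoarboricity}). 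This places the problem in \textsc{NP}.

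For hardness, I would define $\rho_A$ so that it is realized by the base \oneblip drawing of $G_A$ obtained by deleting all transversal paths (this drawing was already described in the construction). The only additional requirement is at the two anchor vertices $a$ and $b$: I would make the $m$ edges that start the transversal paths form a contiguous block in the rotation at each of $a$ and $b$, ordered along this block at $a$ in exactly the reverse of their order at $b$, with the wall edges occupying the remaining positions as in the base drawing. A rotation system constrains only the cyclic order of edges at vertices and says nothing about how edges cross, so this leaves full freedom in routing the transversal paths through the cells; in particular the degree-$2$ interior vertices of the transversal paths have a unique (hence forced) rotation and impose no further conditions.

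The completeness direction is then carried out as in Theorem~\ref{th:hardness}: given a solution $\{A_1,\dots,A_m\}$ of $A$, I route the $m$ transversal paths through the central cells dictated by the partition. The reversed block orders at $a$ and $b$ force the paths to be pairwise nested and hence noncrossing, which is exactly the configuration used in the completeness argument of Theorem~\ref{th:hardness}; the crossings created by the paths are charged exactly as there, so the resulting drawing is \oneblip and, by construction, realizes $\rho_A$. The soundness direction is essentially free: any \oneblip drawing respecting $\rho_A$ is in particular a \oneblip drawing of $G_A$ in the unrestricted sense, so every structural consequence established in the proof of Theorem~\ref{th:hardness}—that the columns lie in $r_1$, that $(\alpha,\beta)$ and $(\gamma,\delta)$ lie in $r_2$, and that each transversal path crosses exactly three central cells carrying $I$ vertical pairs in total—continues to hold, and the \textsc{3Partition} solution is extracted exactly as before.

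I expect the main obstacle to be the completeness direction, i.e.\ checking that a single fixed rotation system $\rho_A$ simultaneously accommodates every solution of $A$. The point to verify carefully is that the reversed contiguous block orders at $a$ and $b$ are compatible with the nested routing of the transversal paths for an arbitrary partition, and that interleaving the transversal-path edges with the wall edges at $a$ and $b$ does not conflict with the rotation inherited from the base drawing. Since fixing the rotation system only shrinks the space of admissible drawings, soundness requires no new argument; all the genuine care is in confirming that the intended good drawing can be made to realize $\rho_A$.
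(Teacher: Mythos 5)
Your proposal matches the paper's own argument: the paper likewise fixes a rotation system realized by the base drawing of $G_A$ without the transversal paths, with the transversal-path edges ordered around $a$ in reverse of their order around $b$, and notes that \textsc{NP}-membership follows as in Lemma~\ref{le:np}. Your additional remarks on checking consistency of the guessed planarization with the rotation system and on the nested routing are exactly the details the paper leaves implicit.
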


Note that our proof does not distinguish between simple and nonsimple drawings: Theorems~\ref{th:hardness} and~\ref{th:hardness-fr} work for both cases. In fact, if the rotation system is not fixed, a graph is 1-gap-planar if and only if it admits a simple 1-gap-planar drawing (self-crossings and multiple crossings can be redrawn as explained in Lemma~\ref{lem:simpletop}). When the rotation system is fixed the statement is not always true. This is due to the fact that the redrawing may alter the rotation system. Thus, it is possible that a graph has a nonsimple 1-gap-planar drawing for some rotation system, while it does not admit a simple 1-gap-planar drawing with the same rotation system (in such a case, it must admit a simple drawing with a different rotation system).

\section{Conclusions and open problems}\label{sec:conclusions}

\begin{figure}
 \centering
    \subfigure[]{\includegraphics[width=0.35\columnwidth]{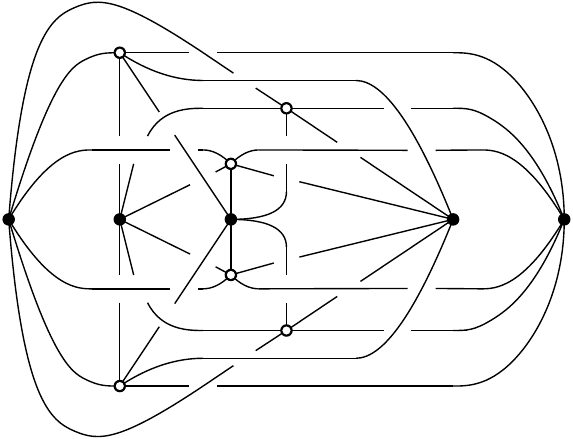}\label{fig:k56}}\hfil
\subfigure[]{\includegraphics[width=0.39\columnwidth]{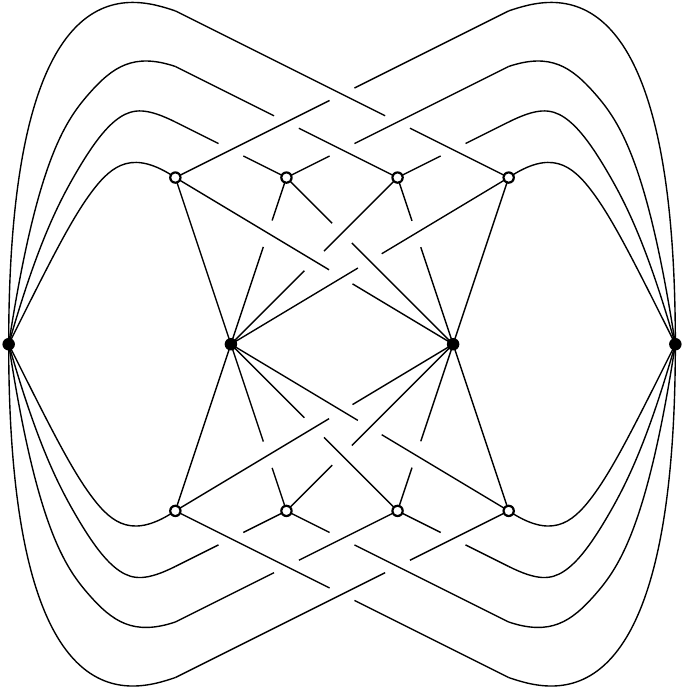}\label{fig:k48}}
 \caption{\oneblip drawings of $K_{5,6}$ (left) and $K_{4,8}$ (right).}
\end{figure}

We introduced \kblip graphs, and our results give rise to several questions for future research. Among them are:

\begin{enumerate}[(i)]
\item In Theorem~\ref{thm_crossing} we characterized \kblip graphs by an adaptation of Hall's condition. We wonder whether a similar characterization is possible based on the crossing number. Specifically, does there exist some function $f:\mathbb{N}\rightarrow \mathbb{N}$ such that if ${\rm cr}(G')\leq f(k) |E(G')|$ for every subgraph $G'$ of a graph $G$, then $G$ is \kblip?

 \item We proved that \kblip graphs with $n$ vertices have $O(\sqrt{k} \cdot n)$ edges, which is tight apart from constant factors; and that \oneblip graphs have at most $5n-10$ edges, which is a tight bound for $n\geq 20$.
     Can one establish a tight bound also for \twoblip graphs?

\item We proved that a drawing with at most $2k$ crossings per edge is \kblip, and that a \kblip drawing does not contain $2k+2$ pairwise crossing edges.  Do \oneblip graphs have RAC drawings with at most $1$ or $2$ bends per edge? What is the relationship between \oneblip graphs and fan-planar graphs?

 \item We proved that $K_n$ is \oneblip if and only if $n \le 8$. A similar characterization could be studied also for complete bipartite graphs. Note that $K_{5,7}$ is not \oneblip since its crossing number is greater than its number of edges, while $K_{5,6}$ admits a \oneblip drawing (Fig.~\ref{fig:k56}).
     We do not know whether $K_{6,6}$ is \oneblip. Similarly, $K_{3,12}$ (Fig.~\ref{fig:k312}) and $K_{4,8}$ (Fig.~\ref{fig:k56}) 
    are \oneblip, but we do not know whether this is true also for $K_{3,13}$ and $K_{4,9}$.

 \item We proved that deciding whether a graph is \oneblip is \textsc{NP}-complete, even if the rotation system is fixed. Can the problem be solved in polynomial time for drawings in which all vertices are on the outer boundary?

\end{enumerate}

\section*{Acknowledgments}
This research started at the NII Shonan Meeting ``Algorithmics for Beyond Planar Graphs.'' The authors thank the organizers and all participants for creating a pleasant and stimulating atmosphere. In particular, we thank Yota Otachi for useful discussions. We also thank David Eppstein and David Wood for pointing out several useful references as well as the anonymous referees of a preliminary version of this paper for their insightful suggestions.

\bibliography{blipgraphs}

\begin{thebibliography}{10}

\bibitem{DBLP:conf/gd/AckermanKV16}
Eyal Ackerman, Bal{\'{a}}zs Keszegh, and M{\'{a}}t{\'{e}} Vizer.
\newblock On the size of planarly connected crossing graphs.
\newblock In {\em Proc. 24th International Symposium on Graph Drawing and
  Network Visualization (GD)}, volume 9801 of {\em LNCS}, pages 311--320.
  Springer, 2016.

\bibitem{DBLP:journals/jct/AckermanT07}
Eyal Ackerman and G{\'{a}}bor Tardos.
\newblock On the maximum number of edges in quasi-planar graphs.
\newblock {\em J. Combin. Theory Ser. {A}}, 114(3):563--571, 2007.

\bibitem{DBLP:journals/combinatorica/AgarwalAPPS97}
Pankaj~K. Agarwal, Boris Aronov, J{\'{a}}nos Pach, Richard Pollack, and Micha
  Sharir.
\newblock Quasi-planar graphs have a linear number of edges.
\newblock {\em Combinatorica}, 17(1):1--9, 1997.

\bibitem{DBLP:journals/corr/AngeliniBBLBDLM17}
Patrizio Angelini, Michael~A. Bekos, Franz~J. Brandenburg, Giordano {Da Lozzo},
  Giuseppe {Di Battista}, Walter Didimo, Giuseppe Liotta, Fabrizio
  Montecchiani, and Ignaz Rutter.
\newblock On the relationship between $k$-planar and $k$-quasi planar graphs.
\newblock In {\em Proc. 43rd International Workshop on Graph-Theoretic Concepts
  in Computer Science (WG)}, volume 10520 of {\em LNCS}. Springer, 2017.

\bibitem{Appel:1979:HLE:965103.807437}
Arthur Appel, F.~James Rohlf, and Arthur~J. Stein.
\newblock The haloed line effect for hidden line elimination.
\newblock {\em SIGGRAPH Comput. Graph.}, 13(2):151--157, 1979.

\bibitem{DBLP:journals/algorithmica/AuerBBGHNR16}
Christopher Auer, Christian Bachmaier, Franz~J. Brandenburg, Andreas
  Glei{\ss}ner, Kathrin Hanauer, Daniel Neuwirth, and Josef Reislhuber.
\newblock Outer 1-planar graphs.
\newblock {\em Algorithmica}, 74(4):1293--1320, 2016.

\bibitem{DBLP:journals/jgaa/AuerBGR15}
Christopher Auer, Franz~J. Brandenburg, Andreas Glei{\ss}ner, and Josef
  Reislhuber.
\newblock 1-planarity of graphs with a rotation system.
\newblock {\em J. Graph Algorithms Appl.}, 19(1):67--86, 2015.

\bibitem{bbceeghkmrt-gpg-17}
Sang~Won Bae, Jean-Francois Baffier, Jinhee Chun, Peter Eades, Kord Eickmeyer,
  Luca Grilli, Seok-Hee Hong, Matias Korman, Fabrizio Montecchiani, Ignaz
  Rutter, and Csaba~D. T{\'o}th.
\newblock Gap-planar graphs.
\newblock In {\em Proc. 25th International Symposium on Graph Drawing and
  Network Visualization (GD~2017)}, volume 10692 of {\em LNCS}, pages 531--545.
  Springer, 2018.

\bibitem{blipJournal}
Sang~Won Bae, Jean{-}Fran{\c{c}}ois Baffier, Jinhee Chun, Peter Eades, Kord
  Eickmeyer, Luca Grilli, Seok{-}Hee Hong, Matias Korman, Fabrizio
  Montecchiani, Ignaz Rutter, and Csaba~D. T{\'{o}}th.
\newblock Gap-planar graphs.
\newblock {\em Theor. Comput. Sci.}, 745:36--52, 2018.

\bibitem{Bekos2016}
Michael~A. Bekos, Sabine Cornelsen, Luca Grilli, Seok-Hee Hong, and Michael
  Kaufmann.
\newblock On the recognition of fan-planar and maximal outer-fan-planar graphs.
\newblock {\em Algorithmica}, 79(2):401--427, 2017.

\bibitem{BEKOS201748}
Michael~A. Bekos, Walter Didimo, Giuseppe Liotta, Saeed Mehrabi, and Fabrizio
  Montecchiani.
\newblock On {RAC} drawings of 1-planar graphs.
\newblock {\em Theor. Comput. Sci.}, 689:48--57, 2017.

\bibitem{JGAA-459}
{Michael}~A. {Bekos}, {Michael} {Kaufmann}, and {Fabrizio} {Montecchiani}.
\newblock Guest editors' foreword and overview.
\newblock {\em Journal of Graph Algorithms and Applications}, 22(1):1--10,
  2018.

\bibitem{DBLP:conf/gd/Bekos0R16}
Michael~A. Bekos, Michael Kaufmann, and Chrysanthi~N. Raftopoulou.
\newblock On the density of non-simple 3-planar graphs.
\newblock In {\em Proc. 24th International Symposium on Graph Drawing and
  Network Visualization (GD)}, volume 9801 of {\em LNCS}, pages 344--356.
  Springer, 2016.

\bibitem{DBLP:journals/tcs/BinucciGDMPST15}
Carla Binucci, Emilio {Di Giacomo}, Walter Didimo, Fabrizio Montecchiani,
  Maurizio Patrignani, Antonios Symvonis, and Ioannis~G. Tollis.
\newblock Fan-planarity: Properties and complexity.
\newblock {\em Theor. Comput. Sci.}, 589:76--86, 2015.

\bibitem{DBLP:journals/tcs/BrandenburgDEKL16}
Franz~J. Brandenburg, Walter Didimo, William~S. Evans, Philipp Kindermann,
  Giuseppe Liotta, and Fabrizio Montecchiani.
\newblock Recognizing and drawing {IC}-planar graphs.
\newblock {\em Theor. Comput. Sci.}, 636:1--16, 2016.

\bibitem{DBLP:journals/jcisd/BrinkmannGM12}
Gunnar Brinkmann, Jan Goedgebeur, and Brendan~D. McKay.
\newblock The generation of fullerenes.
\newblock {\em J. Chem. Inf. Model.}, 52(11):2910--2918, 2012.

\bibitem{JGAA-438}
{Till} {Bruckdorfer}, {Sabine} {Cornelsen}, {Carsten} {Gutwenger}, {Michael}
  {Kaufmann}, {Fabrizio} {Montecchiani}, {Martin} {N{\"{o}}llenburg}, and
  {Alexander} {Wolff}.
\newblock Progress on partial edge drawings.
\newblock {\em J. Graph Algorithms Appl.}, 21(4):757--786, 2017.

\bibitem{DBLP:conf/fun/BruckdorferK12}
Till Bruckdorfer and Michael Kaufmann.
\newblock Mad at edge crossings? {B}reak the edges!
\newblock In {\em Proc. 6th International Conference on Fun with Algorithms
  (FUN)}, volume 7288 of {\em LNCS}, pages 40--50. Springer, 2012.

\bibitem{Cheong2015}
Otfried Cheong, Sariel Har-Peled, Heuna Kim, and Hyo-Sil Kim.
\newblock On the number of edges of fan-crossing free graphs.
\newblock {\em Algorithmica}, 73(4):673--695, Dec 2015.

\bibitem{DBLP:journals/tcs/DehkordiEHN16}
Hooman~Reisi Dehkordi, Peter Eades, Seok{-}Hee Hong, and Quan~Hoang Nguyen.
\newblock Circular right-angle crossing drawings in linear time.
\newblock {\em Theor. Comput. Sci.}, 639:26--41, 2016.

\bibitem{dl-cargd-12}
Walter Didimo and Giuseppe Liotta.
\newblock The crossing angle resolution in graph drawing.
\newblock In J\'anos Pach, editor, {\em Thirty Essays on Geometric Graph
  Theory}, pages 167--184. Springer, 2012.

\bibitem{corrigendum}
Vida Dujmovic, Gwena{\"{e}}l Joret, Pat Morin, Sergey Norin, and David~R. Wood.
\newblock Corrigendum: Orthogonal tree decompositions of graphs.
\newblock {\em SIAM Journal on Discrete Mathematics}, 32(4):3003--3004, 2018.

\bibitem{SIAM-OTDG}
Vida Dujmovic, Gwena{\"{e}}l Joret, Pat Morin, Sergey Norin, and David~R. Wood.
\newblock Orthogonal tree decompositions of graphs.
\newblock {\em {SIAM} J. Discrete Math.}, 32(2):839--863, 2018.

\bibitem{DBLP:journals/tcs/EadesHKLSS13}
Peter Eades, Seok{-}Hee Hong, Naoki Katoh, Giuseppe Liotta, Pascal Schweitzer,
  and Yusuke Suzuki.
\newblock A linear time algorithm for testing maximal 1-planarity of graphs
  with a rotation system.
\newblock {\em Theor. Comput. Sci.}, 513:65--76, 2013.

\bibitem{eg-cpnrn-17}
David Eppstein and Siddharth Gupta.
\newblock Crossing patterns in nonplanar road networks.
\newblock In {\em Proc. 25th ACM SIGSPATIAL Int. Conf. Advances in Geographic
  Information Systems}, 2017.
\newblock To appear.

\bibitem{DBLP:journals/comgeo/EppsteinKMS09}
David Eppstein, Marc~J. van Kreveld, Elena Mumford, and Bettina Speckmann.
\newblock Edges and switches, tunnels and bridges.
\newblock {\em Comput. Geom.}, 42(8):790--802, 2009.

\bibitem{DBLP:journals/siamdm/FoxPS13}
Jacob Fox, J{\'{a}}nos Pach, and Andrew Suk.
\newblock The number of edges in $k$-quasi-planar graphs.
\newblock {\em {SIAM} J. Discr. Math.}, 27(1):550--561, 2013.

\bibitem{Frank1976}
Andr\'as Frank and Andr\'as Gy{\'{a}}rf{\'{a}}s.
\newblock How to orient the edges of a graph?
\newblock {\em Colloq. Math. Soc. J{\'{a}}nos Bolyai}, 18:353--364, 1978.

\bibitem{DBLP:books/fm/GareyJ79}
M.~R. Garey and David~S. Johnson.
\newblock {\em Computers and Intractability: {A} Guide to the Theory of
  {NP}-Completeness}.
\newblock W. H. Freeman, 1979.

\bibitem{gj-1983}
Michael~R. Garey and David~S. Johnson.
\newblock Crossing number is {NP}-complete.
\newblock {\em SIAM J. Alg. Discr. Meth.}, 4(3):312–--316, 1983.

\bibitem{DBLP:journals/algorithmica/GrigorievB07}
Alexander Grigoriev and Hans~L. Bodlaender.
\newblock Algorithms for graphs embeddable with few crossings per edge.
\newblock {\em Algorithmica}, 49(1):1--11, 2007.

\bibitem{Guy1972}
Richard~K. Guy.
\newblock Crossing numbers of graphs.
\newblock In Y.~Alavi, D.~R. Lick, and A.~T. White, editors, {\em Graph Theory
  and Applications: Proceedings of the Conference at Western Michigan
  University}, pages 111--124. Springer, 1972.

\bibitem{hakimi}
S.L. Hakimi.
\newblock On the degrees of the vertices of a directed graph.
\newblock {\em Journal of the Franklin Institute}, 279(4):290 -- 308, 1965.

\bibitem{DBLP:journals/corr/HoffmannT17}
Michael Hoffmann and Csaba~D. T{\'o}th.
\newblock Two-planar graphs are quasiplanar.
\newblock In {\em Proc. 42nd Symposium on Mathematical Foundations of Computer
  Science (MFCS)}, volume~83 of {\em LIPIcs}, pages 47:1--47:14. Schloss
  Dagstuhl, 2017.

\bibitem{DBLP:journals/algorithmica/HongEKLSS15}
Seok{-}Hee Hong, Peter Eades, Naoki Katoh, Giuseppe Liotta, Pascal Schweitzer,
  and Yusuke Suzuki.
\newblock A linear-time algorithm for testing outer-1-planarity.
\newblock {\em Algorithmica}, 72(4):1033--1054, 2015.

\bibitem{hong_et_al}
Seok-Hee Hong, Michael Kaufmann, Stephen~G. Kobourov, and J{\'a}nos Pach.
\newblock {Beyond-Planar Graphs: Algorithmics and Combinatorics (Dagstuhl
  Seminar 16452)}.
\newblock {\em Dagstuhl Reports}, 6(11):35--62, 2017.

\bibitem{DBLP:conf/wg/HongN15}
Seok{-}Hee Hong and Hiroshi Nagamochi.
\newblock Testing full outer-2-planarity in linear time.
\newblock In Ernst~W. Mayr, editor, {\em Proc. 41st International Workshop on
  Graph-Theoretic Concepts in Computer Science (WG)}, volume 9224 of {\em
  LNCS}, pages 406--421. Springer, 2015.

\bibitem{shonan}
Seok-Hee Hong and Takeshi Tokuyama.
\newblock {Algorithmics for Beyond Planar Graphs}.
\newblock \url{http://shonan.nii.ac.jp/shonan/blog/2015/11/15/3972/}.
\newblock {NII} Shonan Meeting, Shonan Village Center, 2016.

\bibitem{DBLP:journals/vlc/HuangEH14}
Weidong Huang, Peter Eades, and Seok{-}Hee Hong.
\newblock Larger crossing angles make graphs easier to read.
\newblock {\em J. Vis. Lang. Comput.}, 25(4):452--465, 2014.

\bibitem{DBLP:journals/corr/KaufmannU14}
Michael Kaufmann and Torsten Ueckerdt.
\newblock The density of fan-planar graphs.
\newblock Preprint \texttt{arXiv:1403.6184}, 2014.

\bibitem{KLEITMAN1970315}
Daniel~J. Kleitman.
\newblock The crossing number of ${K}_{5,n}$.
\newblock {\em J. Combin. Theory}, 9(4):315--323, 1970.

\bibitem{DBLP:journals/csr/KobourovLM17}
Stephen~G. Kobourov, Giuseppe Liotta, and Fabrizio Montecchiani.
\newblock An annotated bibliography on 1-planarity.
\newblock {\em Computer Science Review}, 25:49--67, 2017.

\bibitem{DBLP:journals/jgt/KorzhikM13}
Vladimir~P. Korzhik and Bojan Mohar.
\newblock Minimal obstructions for 1-immersions and hardness of 1-planarity
  testing.
\newblock {\em J.~Graph Theory}, 72(1):30--71, 2013.

\bibitem{DBLP:conf/ictcs/Liotta14}
Giuseppe Liotta.
\newblock Graph drawing beyond planarity: Some results and open problems.
\newblock In {\em Proc. Italian Conference on Theoretical Computer Science
  (ICTCS)}, volume 1231 of {\em {CEUR} Workshop Proc.}, pages 3--8.
  CEUR-WS.org, 2014.

\bibitem{DBLP:journals/dcg/PachRTT06}
J{\'{a}}nos Pach, Rado{\v{s}} Radoi{\v{c}}i{\'c}, G{\'{a}}bor Tardos, and
  G{\'{e}}za T{\'{o}}th.
\newblock Improving the crossing lemma by finding more crossings in sparse
  graphs.
\newblock {\em Discrete Comput. Geom.}, 36(4):527--552, 2006.

\bibitem{DBLP:journals/combinatorica/PachT97}
J{\'{a}}nos Pach and G{\'{e}}za T{\'{o}}th.
\newblock Graphs drawn with few crossings per edge.
\newblock {\em Combinatorica}, 17(3):427--439, 1997.

\bibitem{Picard1982}
Jean-Claude Picard and Maurice Queyranne.
\newblock A network flow solution to some nonlinear {$0-1$}\ programming
  problems, with applications to graph theory.
\newblock {\em Networks}, 12(2):141--159, 1982.

\bibitem{Sperner28}
Emanuel Sperner.
\newblock Neuer {B}eweis f{\"u}r die {I}nvarianz der {D}imensionszahl und des
  {G}ebietes.
\newblock {\em Abh. Math. Sem. Hamburg}, 6(1):265--272, 1928.

\bibitem{Venkateswaran2004374}
Venkat Venkateswaran.
\newblock Minimizing maximum indegree.
\newblock {\em Discr. Appl. Math.}, 143(1–3):374--378, 2004.

\bibitem{Zarankiewicz1955}
Casimir Zarankiewicz.
\newblock On a problem of {P.} {T}uran concerning graphs.
\newblock {\em Fund. Math.}, 41(1):137--145, 1955.

\end{thebibliography}
\bibliographystyle{plain}

\ShoLong{}{
\section*{Appendix}
 \section{\newpage

\appendix
Additional Material for Section~\ref{sec:density}}
 \label{ap:density}

 \proofLemmaOneSection

 \subsection{Lower bound constructions}

 \begin{figure}[h]
  \centering
  \includegraphics[page=3]{figs/5n}
  \caption{One more pattern that produces \oneblip graphs with $n$ vertices and $5n-\Theta(1)$ edges.}
  \label{fig:5n-d}
 \end{figure}

 \mentionLowerBounds

 \section{Additional Material for Section~\ref{sec:recognition}}\label{ap:recognition}

 \setcounter{lemma}{2}
 \begin{lemma}
 Let $u$ and $v$ be any two degree-$12$ vertices~of~$B$. Every \oneblip drawing $\Gamma$ of $B$ contains a gapped chain $\mathcal C$ containing $u$ and $v$.
 \end{lemma}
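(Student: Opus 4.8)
The plan is to realize the gapped chain as the union of two edge-disjoint $u$--$v$ paths drawn entirely along gapped edges, so that the closed curve obtained by traversing one path from $u$ to $v$ and the other back from $v$ to $u$ consists solely of gapped-edge pieces and vertices. First I would pass to the planarization $\Gamma^*$ of $\Gamma$ and extract the subgraph $\Gamma'$ consisting of exactly those edges of $\Gamma^*$ that are segments of gapped edges of $\Gamma$ (so that any dummy vertex of $\Gamma^*$ surviving in $\Gamma'$ is a crossing of two gapped edges). It then suffices to find two edge-disjoint $u$--$v$ paths in $\Gamma'$: these may share vertices (real or dummy), but being edge-disjoint their union traces a closed, possibly nonsimple, curve through $u$ and $v$ whose every point lies on a gapped edge or is a vertex, which is precisely a gapped chain.

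To produce the two edge-disjoint paths I would invoke Menger's theorem: two edge-disjoint $u$--$v$ paths exist in $\Gamma'$ if and only if every $(u,v)$-edge-cut of $\Gamma'$ has size at least $2$. Since $\Gamma'$ is a plane graph, a minimal such edge-cut corresponds to a closed curve $\ell$ in the plane that separates $u$ from $v$ and meets $\Gamma'$ only in the cut edges; after a small perturbation I may assume $\ell$ passes through no vertex. The crucial counting step is to bound from below the number of gapped edges that any separating curve $\ell$ must cross. Here I would use that $K_{3,12}$ contains $12$ edge-disjoint $u$--$v$ paths, namely the length-$2$ paths $u$--$b_i$--$v$ through the $12$ vertices $b_i$ of the part of size $12$; these induce $12$ edge-disjoint $u$--$v$ paths in $\Gamma^*$. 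Every such path is separated by $\ell$, so $\ell$ crosses at least $12$ distinct edges of $G$; by Lemma~\ref{le:k312blipnumber} at most $6$ of these edges are gap-free, hence $\ell$ crosses at least $12-6 = 6 > 2$ gapped edges. Thus every $(u,v)$-cut of $\Gamma'$ has size at least $2$, and Menger's theorem delivers the two paths, whose union is the desired gapped chain $\mathcal C$ containing $u$ and $v$.

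The main obstacle I expect is the topological bookkeeping in the cut--curve correspondence: I must verify that a minimum $(u,v)$-edge-cut of the plane graph $\Gamma'$ genuinely corresponds to a simple closed separating curve $\ell$, that $\ell$ can be perturbed off the vertices without changing the set of edges it crosses, and that separating $12$ edge-disjoint $u$--$v$ paths really forces $\ell$ to cross $12$ \emph{distinct} edges of $G$ rather than reusing one edge several times. Once these facts are made precise, the counting argument combining the $12$ edge-disjoint paths with Lemma~\ref{le:k312blipnumber} is immediate, and the union of the two Menger paths yields the gapped chain through $u$ and $v$.
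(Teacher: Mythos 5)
Your proposal is correct and follows essentially the same route as the paper's own proof: pass to the planarization, restrict to the gapped edges, apply Menger's theorem via the cut--curve duality, and lower-bound the number of gapped edges any separating curve must cross by combining the $12$ edge-disjoint $u$--$v$ paths of $K_{3,12}$ with Lemma~\ref{le:k312blipnumber}. The topological caveats you flag are real but routine, and the paper treats them at the same level of rigor ("it is well known that such a cut corresponds to a cycle in the dual").
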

 \proofblippedcycle

 \lemmaNP

 \setcounter{theorem}{4}
 \begin{theorem}
  The \textsc{1GapPlanarity} problem is \textsc{NP}-complete.
 \end{theorem}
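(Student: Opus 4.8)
The plan is to establish both halves of \textsc{NP}-completeness. Membership in \textsc{NP} follows immediately from Lemma~\ref{le:np} (equivalently, from Property~\ref{pr:pseudoarboricity}, since one can guess a planarization and check pseudoarboricity at most $1$ in polynomial time), so the real work is the \textsc{NP}-hardness reduction $A \mapsto G_A$ constructed above. As \textsc{3Partition} is strongly \textsc{NP}-hard and $I$ is polynomially bounded in $m$, the graph $G_A$ has polynomial size and is computable in polynomial time; it therefore suffices to prove that $A$ is a yes-instance of \textsc{3Partition} if and only if $G_A$ is \oneblip.

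First I would pin down the layout forced on any \oneblip drawing of $G_A$. By Lemma~\ref{lem:blippedcycle} each blob carries a gapped chain through its two degree-$12$ vertices, and along a path gadget consecutive chains meet only at shared attaching vertices; hence the global ring $R$ supports a single gapped chain $\mathcal C_R$ that separates the plane into the two distinguished regions $r_1$ and $r_2$. Each column and each of the edges $(\alpha,\beta),(\gamma,\delta)$ is ``heavy'' enough that it cannot share a region with another such object without charging some edge two crossings; this forces all columns into one region (say $r_1$) and both edges into the other ($r_2$). The auxiliary vertex $w$, joined to every attaching vertex of $G_t$ and $G_b$, must then lie in $r_2$ with all but at most two incident edges gapped, which pushes every transversal path into $r_1$. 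Finally, since each noncentral cell contains $\lceil I/2\rceil+1$ vertical pairs and two gapped edges cannot cross, the left-to-right order of the attaching vertices on every floor is fixed and distinct columns are pairwise disjoint, so each cell behaves as a single traversal channel.

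With this skeleton fixed, the forward direction is a counting argument. A transversal path crosses all $3m$ columns, occupies exactly one cell per column, and traversing a cell with $p$ vertical pairs consumes at least $p$ of the path's edges: each vertical pair already carries a saturated gapped edge, so the new crossing must be charged to a path edge, and a path edge holds only one gap. Writing $c=\lceil I/2\rceil+1$, a path has length $(3m-3)c+I$, and if it visits the central cells of a column set $S$ its budget inequality reads $\sum_{i\in S}a_i\le I+(|S|-3)c$. Because each $a_i>I/4$ and $c>I/2$, this already forces $|S|\ge 3$. On the other hand the $m$ paths are pairwise cell-disjoint (requirement (4)), so their total number of central-cell visits is at most the $3m$ central cells; hence every path uses \emph{exactly} three central cells, these visits cover all central cells once, and the inequality with $|S|=3$ combined with $\sum_i a_i=mI$ forces each of the $m$ triples to sum to exactly $I$. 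Reading off the three central columns visited by each path yields a valid \textsc{3Partition} of $A$. For the converse, given a solution $\{A_1,\dots,A_m\}$ I would route $\pi_j$ through the central cells of the three columns indexed by $A_j$ and through noncentral cells elsewhere, verifying requirements (1)--(4); the $m-1$ noncentral cells available above and below each central cell suffice to stagger the remaining paths so that no cell is reused and no two paths cross, producing a \oneblip drawing.

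The main obstacle is the first step. Since we do not fix the rotation system and the gadgets lack a unique embedding, I must rule out ``cheating'' drawings in which the gapped chains, the columns, or the floor orderings deviate from the intended configuration. This is precisely where Lemma~\ref{lem:blippedcycle} and the density/crossing constraints (Theorem~\ref{thm:density} via Property~\ref{pr:edges-crossings}) do the heavy lifting, forcing each blob to act as an impassable barrier, each heavy object into a single region, and each cell to admit at most one traversing path. Making these forcing statements fully rigorous, rather than merely plausible from the figures, is the crux of the argument; once the layout is pinned down, the two counting directions are routine.
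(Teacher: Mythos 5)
Your proposal follows the paper's proof essentially verbatim: membership via Lemma~\ref{le:np}, and hardness via the same forced-layout analysis of $G_A$ (gapped chains from Lemma~\ref{lem:blippedcycle}, columns versus $(\alpha,\beta),(\gamma,\delta)$ split between $r_1$ and $r_2$, vertex $w$ forcing the transversal paths into $r_1$) followed by the same edge-budget count showing each transversal path meets exactly three central cells whose values sum to $I$; your inequality $\sum_{i\in S}a_i\le I+(|S|-3)c$ is just a spelled-out version of the paper's counting, and the converse routing is identical. One minor caveat: in the forward direction you justify cell-disjointness of the paths by citing ``requirement (4)'', which is a property of the constructed routing in the converse direction rather than something yet established for an arbitrary \oneblip drawing, but you correctly flag this forcing step as the part that must be made rigorous --- which is exactly the level of detail at which the paper itself leaves it.
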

 \proofOneGapPlanarity

 \section{Additional Material For Section~\ref{sec:relationship}}\label{ap:relationship}

 \setcounter{lemma}{6}
 \begin{lemma}
  For every $k \ge 1$, there exists a \oneblip graph $G_k$ that is not $k$-planar.
 \end{lemma}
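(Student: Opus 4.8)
The plan is to construct $G_k$ so that it admits an obvious \oneblip drawing in which a single edge carries all $k+1$ crossings (each charged to the other, crossing-free edge), while the combinatorial structure is rigid enough that \emph{every} drawing is forced to concentrate at least $k+1$ crossings on one edge. Concretely, I would start with an edge $(a,b)$ crossed by $k+1$ pairwise disjoint edges $(c_i,d_i)$, arrange the $2k+2$ endpoints on a cycle $C=(a,c_1,\dots,c_{k+1},b,d_{k+1},\dots,d_1)$, add a hub vertex $v_0$ joined to every vertex of $C$ to obtain a wheel $W$ with $m=4k+4$ rim-and-spoke edges, and finally \emph{fatten} each edge of $W$ into $t$ edge-disjoint paths of length two, for a parameter $t$ to be calibrated later. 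That $G_k$ is \oneblip is immediate: draw the fattened wheel planarly and route $(a,b)$ across the edges $(c_i,d_i)$ inside the appropriate face, charging each of the $k+1$ crossings to the corresponding $(c_i,d_i)$.

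For the negative direction I would argue by contradiction: assume $G_k$ has a $k$-planar drawing $\Gamma$, so the total number of crossings is at most $k|E|/2$. The heart of the argument is a counting (averaging) step of the kind used in Lemma~\ref{le:threequasiplanar}. Among the $\binom{t}{k+1}^m$ ways of selecting $k+1$ of the $t$ paths for each edge of $W$, if \emph{every} selection produced a crossing between paths belonging to two \emph{different} wheel edges, then summing over all selections and dividing by the multiplicity with which each fixed inter-edge crossing is counted would force strictly more crossings than the $k$-planar budget permits. Picking $t$ large enough makes this quantitative bound contradict $k$-planarity, so there must exist a choice of $k+1$ paths per wheel edge whose inter-edge crossings all vanish in $\Gamma$.

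Fixing such a choice yields a subgraph $G_k'$ and a restriction $\Gamma'$ in which the bundles of $k+1$ paths associated with distinct edges of $W$ do not cross one another; selecting one path per bundle therefore gives a \emph{planar} drawing of a subdivision of $W$. Since $W$ is $3$-connected its planar embedding is combinatorially unique, so I can read off the face $F$ whose boundary carries $a$, $b$, and all the $c_i,d_i$. A topological case analysis then closes the argument: if each of $(a,b)$ and the $(c_i,d_i)$ is homotopic (rel endpoints) to an arc inside $F$, the interleaving of their endpoints along $\partial F$ forces $(a,b)$ to cross all $k+1$ edges $(c_i,d_i)$; otherwise one of these edges leaves $F$ and must cross an entire bundle of $k+1$ homotopic paths. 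In both cases some edge of $\Gamma$ is crossed at least $k+1$ times, contradicting $k$-planarity.

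The step I expect to be the main obstacle is the averaging argument together with the correct calibration of $t$: one must determine the multiplicity $\binom{t-1}{k}^2\binom{t}{k+1}^{m-2}$ with which a single fixed inter-edge crossing is charged, deduce that $\Gamma$ then contains at least $t^2/(k+1)^2$ crossings, and verify that this exceeds the budget $k|E|/2 < 5k(k+1)t$. This inequality reduces to $t > 5k(k+1)^3$, which is exactly why a choice such as $t = 5(k+1)^4$ works. By comparison, the uniqueness-of-embedding and endpoint-interleaving arguments that follow are conceptually routine once the clean planar subdivision of $W$ has been extracted.
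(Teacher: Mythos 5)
Your proposal follows the paper's own proof essentially verbatim: the same wheel-plus-fattening construction with $m=4k+4$ and $t$ paths of length two, the same averaging argument with multiplicity $\binom{t-1}{k}^2\binom{t}{k+1}^{m-2}$ and calibration $t=5(k+1)^4$, and the same endgame via the unique embedding of the $3$-connected wheel and the interleaved endpoints on the face $F$. The argument is correct and no further comparison is needed.
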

 \proofoneblipnotkplanar

\section{1-Gap-Planar Drawings of Complete Bipartite Graphs}
\label{sec:bipartite}

} 

\end{document}
\endinput